\newlength\abovesectionskip
\newlength\belowsectionskip
\def\sectionfont{\normalfont\Large\bfseries}
\newlength\abovesubsectionskip
\newlength\belowsubsectionskip
\def\subsectionfont{\normalfont\large\bfseries}
\newlength\abovesubsubsectionskip
\newlength\belowsubsubsectionskip
\def\subsubsectionfont{\normalfont\normalsize\bfseries}
\newlength\aboveparagraphskip
\newlength\belowparagraphskip
\def\paragraphfont{\normalfont\normalsize\bfseries}
\def\section{\@startsection{section}{1}{\z@}{-\abovesectionskip}%
               {\belowsectionskip}{\sectionfont}}
\def\subsection{\@startsection{subsection}{2}{\z@}{-\abovesubsectionskip}%
                  {\belowsubsectionskip}{\subsectionfont}}
\def\subsubsection{\@startsection{subsubsection}{3}{\z@}%
                     {-\abovesubsubsectionskip}{\belowsubsubsectionskip}%
                     {\subsubsectionfont}}
\def\paragraph{\@startsection{paragraph}{4}{\z@}{-\aboveparagraphskip}%
                 {-\belowparagraphskip}{\paragraphfont}}
\renewenvironment{align*}{%
  \abovedisplayskip 5pt plus 1pt%
  \belowdisplayskip 5pt plus 1pt%
  \start@align\@ne\st@rredtrue\m@ne
}{%
  \endalign
}
\let\stdequation\equation
\renewcommand*\equation{%
  \abovedisplayskip 5pt plus 1pt%
  \belowdisplayskip 5pt plus 1pt%
  \stdequation}
\DeclareRobustCommand{\[}{
  \abovedisplayskip 5pt plus 1pt%
  \belowdisplayskip 5pt plus 1pt%
  \begin{equation*}
}
\renewenvironment{itemize}{
    \begin{list}{$\bullet$}{
        \setlength{\labelsep}{6pt}\setlength{\itemindent}{0mm}\setlength{\labelwidth}{3mm}
        \setlength{\leftmargin}{30pt}
        \setlength{\itemsep}{2pt}\setlength{\parsep}{0mm}
        \setlength{\topsep}{3pt}\setlength{\listparindent}{0pt}
    }
}
{
    \end{list}
}
    \newtheoremstyle{mythmstyle}
      {6pt}   %
      {6pt}   %
      {}            %
      {}            %
      {\bfseries}   %
      {. }          %
      {2.5pt}       %
      {\thmname{#1}\thmnumber{ #2}\thmnote{ \normalfont (#3)}}   %
    \theoremstyle{mythmstyle}
    \newtheorem{theorem}{Theorem}[section]\numberwithin{equation}{section}
    \newtheorem{corollary}[theorem]{Corollary}
    \newtheorem{claim}[theorem]{Claim}
    \newtheorem{definition}[theorem]{Definition}
    \newtheorem{lemma}[theorem]{Lemma}
    \newtheorem{fact}[theorem]{Fact}
    \newcommand{\abs}[1]{\lvert #1 \rvert}
    \newcommand{\defeq}{\,:=\,}                                     %
    \newcommand{\ceil}[1]{\left\lceil #1 \right\rceil}
    \newcommand{\setst}[2]{\left\{\; #1 \,:\, #2 \;\right\}}        %
    \newcommand{\smallsum}[2]{{\textstyle \sum_{#1}^{#2}}}
    \newcommand{\smallprod}[2]{{\textstyle \prod_{#1}^{#2}}}
    \newcommand{\comment}[1]{}
    \newcommand{\bC}{\mathbb{C}}
    \newcommand{\bR}{\mathbb{R}}
    \newcommand{\bZ}{\mathbb{Z}}
    \newcommand{\cC}{\mathcal{C}}
    \newcommand{\cE}{\mathcal{E}}
    \newcommand{\cL}{\mathcal{L}}
    \newcommand{\cS}{\mathcal{S}}
    \newcommand{\cX}{\mathcal{X}}
    \newcommand{\ClaimName}[1]{\label{clm:#1}}
    \newcommand{\CorollaryName}[1]{\label{cor:#1}}
    \newcommand{\EquationName}[1]{\label{eq:#1}\text{}}
    \newcommand{\LemmaName}[1]{\label{lem:#1}}
    \newcommand{\TheoremName}[1]{\label{thm:#1}}
    \newcommand{\SectionName}[1]{\label{sec:#1}}
    \newcommand{\Algorithm}[1]{Algorithm~\ref{alg:#1}}
    \newcommand{\Claim}[1]{Claim~\ref{clm:#1}}
    \newcommand{\Corollary}[1]{Corollary~\ref{cor:#1}}
    \newcommand{\Lemma}[1]{Lemma~\ref{lem:#1}}
    \newcommand{\Section}[1]{Section~\ref{sec:#1}}
    \newcommand{\Theorem}[1]{Theorem~\ref{thm:#1}}
\newcommand{\arxiv}[1]{\href{https://arxiv.org/abs/#1}{arXiv:#1}}
\renewcommand{\mathbf}[1]{\bm{#1}}
\newtheorem{observation}[theorem]{Observation}
\newtheorem{question}{Question}
\newtheorem*{question*}{Question}
\newtheorem{remark}{Remark}[section]
\renewcommand{\abs}[1]{\ensuremath{\left|#1\right|}}
\newcommand{\diff}[2]{\frac{\text{d}#1}{\text{d}#2}}
\newcommand{\E}[2][]{\ensuremath{\mathbb{E}_{#1}\insq{#2}}}
\newcommand{\inb}[1]{\left\{#1\right\}}
\newcommand{\inp}[1]{\left(#1\right)}
\newcommand{\insq}[1]{\left[#1\right]}
\newcommand{\nfrac}[3][]{\nicefrac[#1]{#2}{#3}}
\newcommand{\Z}[0]{\ensuremath{\mathbb{Z}}}
\renewcommand{\emptyset}[0]{\varnothing}
\newcommand{\poly}[1]{\ensuremath{\mathop{\mathrm{poly}}\inp{#1}}}
\renewcommand{\vec}[1]{\mathbf{#1}}
\newcommand{\CC}{\mathbb C}
\newlength{\trianglerightwidth}
\algnewcommand{\LineCommentCont}[1]{\Statex \hskip\ALG@thistlm%
  \parbox[t]{\dimexpr\linewidth-\ALG@thistlm}
{\leftskip=0pt
  \hangindent=0pt
  \hangafter=1%
  \strut\makebox[\algorithmicindent][l]{$\triangleright$}#1\strut}
  } %
\algnewcommand{\LineCommentContNoInd}[1]{\Statex %
  \parbox[t]{\dimexpr\linewidth}
{\leftskip=0pt
  \hangindent=20pt
  \hangafter=1%
  \strut\makebox[\algorithmicindent][l]{$\triangleright$}#1\strut}
  }
\algnewcommand{\MyState}[1]{\State
\parbox[t]{\dimexpr\linewidth-\ALG@thistlm}{\hangindent=0pt\strut\hangafter=1#1\strut}}
\newcommand{\set}[1]{\left \{ #1 \right \}}                     %
\newcommand{\intersect}{\cap}
\newcommand{\veczero}{\mathbf{0}}
\newcommand{\plusminus}{\pm}
\newcommand{\Intersect}{\bigcap}
\newcommand{\cR}{\mathcal{R}}
\newcommand{\qdown}{\breve{q}}
\newcommand{\Ind}{\operatorname{Ind}}
\newcommand{\children}{\cC}
\newcommand{\ra}{\varrho}
\newcommand{\raPrime}[1]{\varrho'_{#1}}
\newcommand{\depth}{\delta}
\newcommand{\SAWTree}{\ensuremath{T_{\mathrm{SAW}}}}
\newcommand{\SharpPHard}{\#P-hard\xspace}
\newcommand{\DTIME}{\text{DTIME}}
\begin{document}

\title{Computing the Independence Polynomial: \\ from the Tree Threshold down to the Roots}

\newcommand{\calgrant}{NSF grant CCF-1319745}
\author{Nicholas J. A. Harvey\thanks{Email:
    \texttt{nickhar@cs.ubc.ca}. University of British Columbia.} \and
     Piyush Srivastava\thanks{Email:
       \texttt{piyush.srivastava@tifr.res.in}.  Tata Institute of
       Fundamental Research.} \and
  Jan Vondr\'ak\thanks{Email: \texttt{jvondrak@stanford.edu}.  Stanford University.}}
\date{}

\maketitle
\thispagestyle{empty}

\begin{abstract}

We study an algorithm for approximating the multivariate independence polynomial $Z(\vec{z})$, 
with negative and complex arguments.
While the focus so far has been mostly on computing combinatorial polynomials restricted to
the univariate positive setting (with seminal results for the independence polynomial by Weitz (2006) and Sly (2010)),
the independence polynomial with \emph{negative} or \emph{complex} arguments
has strong connections to combinatorics and to statistical physics.
The independence polynomial with negative arguments, $Z(-\vec{p})$, determines the {\em Shearer region},
the maximal region of probabilities to which the Lov\'{a}sz Local Lemma (LLL) can be extended (Shearer 1985).
In statistical physics, complex zeros of the independence polynomial relate to existence of phase transitions.

Our main result is a deterministic algorithm to compute approximately the independence polynomial
in any root-free complex polydisc centered at the origin.
More precisely,  we can $(1+\epsilon)$-approximate the independence polynomial $Z(\vec{z})$
for an $n$-vertex graph of degree at most $d$, for any complex vector $\vec{z}$ such that
$Z(\vec{z'}) \neq 0$ for $|z'_i| \leq (1+\alpha) |z_i|$,
in running time $(\frac{n}{\epsilon \alpha})^{O(\log(d)/ \sqrt{\alpha})}$.  Our result also
extends to graphs of unbounded degree that have a bounded connective constant.
Our algorithm is essentially the same as Weitz's algorithm for positive parameters 
up to the tree uniqueness threshold.
The core of the analysis is a novel multivariate form of the correlation decay technique,
which can handle non-uniform complex parameters.
In summary,  we provide a unifying algorithm
for all known regions where $Z(\vec{z})$ is approximately computable.
In particular, in the univariate real setting our work implies that Weitz's algorithm works in an interval
between two critical points $(-\lambda'_c(d), \lambda_c(d))$, and outside of this interval
an approximation of $Z(\lambda)$ is known to be NP-hard.

As an application, we provide an algorithm to test membership in
Shearer's region within a multiplicative error of $1+\alpha$, in
running time $(n/\alpha)^{O(\sqrt{n/\alpha} \log d)}$. We also give a
deterministic algorithm for Shearer's lemma (extending the LLL) with
$n$ events on $m$ independent variables under slack $\alpha$, with
running time $(nm/\alpha)^{O(\sqrt{m/\alpha} \log d)}$.

On the hardness side, we prove that evaluating $Z(\vec{z})$ at an arbitrary point in Shearer's region,
and testing membership in Shearer's region, are $\#P$-hard problems.
For Weitz's correlation decay technique in the negative regime, we show that the $1/\sqrt{\alpha}$ dependence
in the exponent is optimal.\footnote{An earlier version of this paper gave an algorithm with $1/\alpha$ dependence in the exponent,
which would lead to trivial exponential running time in our applications.}

%

%

%

%


\end{abstract}

\newpage
\thispagestyle{empty}
\tableofcontents

\newpage\pagestyle{plain}\setcounter{page}{1}

\section{Introduction}
\label{sec:introduction}

The \emph{independence polynomial} is the generating function of
independent sets of a graph.  Formally, given a graph
$G = (V, E)$, and a vector $\vec{x} = \inp{x_v}_{v \in V}$ of
\emph{vertex activities}, it is the multi-linear polynomial
\begin{displaymath}
  Z_G(\vec{x}) = \sum_{I\text{ indep. in }G} \: \prod_{v \in I} x_v.
\end{displaymath}
Aside from its natural importance in combinatorics as a generating
function, the independence polynomial has also been studied
extensively in statistical physics where it arises as the
\emph{partition function} of the \emph{hard core lattice gas},
which has been used as a model of adsorption.  In
both settings, the partition function and its derivatives encode
important properties of the model.  For example, in the combinatorial
setting, $Z_G$ encodes a weighted count of the
independent sets, while the derivatives of $\log Z_G$ encode
relevant average quantities, such as the mean size of an independent set.
As such, much effort has gone into
understanding the complexity of computing $Z_G$.
The exact evaluation of the independence polynomial at non-trivial evaluation
points turns out to be \#P-hard~\cite{val79b}. 
As for approximate computation, the problem is well studied in the setting where
the activities are \emph{positive} and \emph{real valued}.  In this
setting, the problem has served to highlight some of the tightest
known connections between phase transitions and computational
complexity: we will discuss this line of work in more detail below.

In this paper, we are concerned instead with the problem of
approximately computing the independence polynomial at possibly
negative and even complex valued vertex activities.  The interest in
studying partition functions at complex values of the activities
originally comes from statistical mechanics, where there is a paradigm
of studying phase transitions in terms of the analyticity of
$\log Z_G$.  This paradigm has led to the question of characterizing
regions of the complex plane where the partition function is
non-zero~\cite{leeyan52,leeyan52b}. %
Inspired by previous connections between statistical physics
and computation complexity, a natural question is:
does the maximum radius around the origin within which $\log Z_G$ is analytic
(i.e., within which $Z_G$ has no roots)
correspond to a transition in the computational complexity of computing $Z_G$?
As we discuss below, the answer is yes.
  
A second motivation for studying the
independence polynomial at complex activities comes from a
delightful connection between combinatorics and statistical mechanics
that arose in the work of Shearer~\cite{Shearer} and Scott and
Sokal~\cite{ScottSokal} on the Lov\'asz Local Lemma (LLL).
In particular, the largest region of parameters in which the LLL applies is 
the maximal connected region of the negative orthant within which $Z_G$ has no roots.
As we discuss below, an algorithm for approximating $Z_G$
at negative activities has several algorithmic applications relating to the
LLL, including testing whether the hypotheses are satisfied,
as well as giving a constructive proof of the LLL itself.

\comment{
The Lov\'asz Local Lemma (LLL) is a fundamental tool used in combinatorics
to argue that the probability that none of a set of suitably
constrained bad events occurs is positive.  In abstract terms, the
lemma is formulated in terms of $n$ events $\cE_1, \cE_2, \dots, \cE_n$
and a probability distribution $\mu$ on the events.  However, only two
pieces of data about the distribution $\mu$ are used in the
formulation of lemma:
\begin{itemize}
\item The marginal probabilities $p_i \defeq \mu(\cE_i)$ of the
  events, and
\item A \emph{dependency graph} $G = (V, E)$ associated with $\mu$.
  The vertices $V$ are identified with the
  events $\cE_1, \cE_2, \dots, \cE_n$, and the graph is interpreted as
  stipulating that under the distribution $\mu$, the event $\cE_i$ is
  independent of all the other events conditioned on its immediate
  neighbors in the graph $G$.
\end{itemize}
Various versions of the LLL provide sufficient conditions on the $p_i$
and the dependency graph $G$ that ensure that the probability
$\mu\inp{\land_{i=1}^n\lnot\cE_i}$ is positive.
The seminal work of Shearer~\cite{Shearer} provided necessary and
sufficient conditions for the conclusion of the Lov\'asz Local Lemma
(i.e., $\mu(\land_{i=1}^n \lnot\cE_i)>0$)
to hold for a given dependency graph $G = (V, E)$ and probabilities
$\inp{p_v}_{v \in V}$ for the vertices of $G$.  Scott and
Sokal~\cite{ScottSokal} showed that Shearer's conditions can be
expressed very succinctly in the language of partition functions.
Specifically, $\mu(\land_{i=1}^n \lnot\cE_i)>0$ holds for a given graph $G$ and
probabilities $\vec{p} = \inp{p_v}_{v \in V} \in (0,1)^V$ if and only if $\vec{p}$ 
lies in the set
\begin{equation}
  \label{eq:8}
\cS ~\defeq~ \setst{ \vec{p} \in (0,1)^V }{
    Z_G\inp{\vec{z}} \neq 0 ~~\forall \vec{z} \in \bC^V ~\text{s.t.}~ \abs{\vec{z}} \leq \vec{p}
},
\end{equation}
where $\abs{\vec{z}}$ means coordinate-wise magnitude,
and $\leq$ also applies coordinate-wise.
We shall refer to the set $\cS$ as the \emph{Shearer region};
to emphasize that $\cS$ depends on $G$ we sometimes write $\cS_G$.

In other words, when $\vec{p} \in \cS$,
the conclusion $\mu\inp{\land_{i=1}^n\lnot\cE_i} > 0$ holds for all
events $\cE_i$ satisfying $\mu(\cE_i) \leq p_i$ and
satisfying the constraints imposed by the dependency graph.
On the other hand, when $\vec{p} \in (0,1)^V \setminus \cS$, then there
exist events $\cE_i$ with $\mu(\cE_i) \leq p_i$
satisfying the constraints imposed by the dependency graph for which
$\mu\inp{\land_{i=1}^n\lnot\cE_i} = 0$.  
The results of Shearer~\cite{Shearer} and
Scott and Sokal~\cite{ScottSokal} are in fact sharper,
and show that for any set of events $\cE_i$ as above and any $\vec{p} \in \cS$
\begin{equation}
  \label{eq:9}
  \mu\inp{\land_{i=1}^n\lnot\cE_i} \geq Z_G(-\vec{p}),
\end{equation}
which is positive by continuity since $Z_G(\vec{0})=1$.  Furthermore,
there exist events with dependency graph $G$ and probabilities
$\vec{p}$ which achieve equality in \cref{eq:9}.  One thus sees that a
very natural question about the LLL, namely,\emph{ ``Given the
  marginal probabilities $\mu(\cE_i)$ and the dependency graph $G$,
  what is the minimum value of $\mu(\land_{i=1}^n\lnot\cE_i)$ over all
  probability distributions $\mu$ that respect the dependency graph
  $G$?''} is naturally posed as the problem of computing the
independence polynomial at negative activities.\footnote{We note in
  passing that this question also arose in work of Holroyd and
  Liggett~\cite[Section 9]{HolroydLiggett}, where concrete examples
  were analyzed numerically by an inefficient algorithm.  }
}

\subsection{Our results}
\label{sec:our-results}

Before stating our results, let us define the \emph{Shearer region} for graph $G$ to be
\ifbool{twocol}{
\begin{multline}
  \label{eq:8}
\cS ~=~ \cS_G ~\defeq~ \\ \setst{ \vec{p} \in [0,1]^V }{
    Z_G\inp{\vec{z}} \neq 0 ~~\forall \vec{z} \in \bC^V ~\text{s.t.}~ \abs{\vec{z}} \leq \vec{p}
},
\end{multline}
}{
\begin{equation}
  \label{eq:8}
\cS ~=~ \cS_G ~\defeq~ \setst{ \vec{p} \in [0,1]^V }{
    Z_G\inp{\vec{z}} \neq 0 ~~\forall \vec{z} \in \bC^V ~\text{s.t.}~ \abs{\vec{z}} \leq \vec{p}
},
\end{equation}
}
which describes the radii of polydiscs within which $Z_G$ has no roots.
Here, $\abs{\vec{z}}$ means coordinate-wise magnitude,
and $\leq$ also applies coordinate-wise.  It can be shown that
$\cS_G$ is an open set.

Our main result is a fully polynomial time approximation scheme
(FPTAS) for $Z_G(\vec{z})$ when $\vec{z}$ is a vector of possibly
complex activities for which the vector $\abs{\vec{z}}$
of their magnitudes lies in the Shearer region.  
\begin{theorem}[\textbf{FPTAS for $Z_G$}]
  \TheoremName{mainShearerIntro}
  Let $G$ be an $n$-vertex graph with maximum degree $d$.
  Suppose that $\alpha, \epsilon \in (0,1]$,
  and that $\vec{z} \in \bC^V$ satisfies $(1+\alpha) \cdot \abs{\vec{z}} \in \cS_G$.
  Then a $( 1+\epsilon )$-approximation to $Z_G(\vec{z})$ can be
  computed in time $\big( \frac{n}{\epsilon \alpha} \big)^{O(\log(d)/\sqrt{\alpha})}$.
\end{theorem}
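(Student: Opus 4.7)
The plan is to begin by applying Weitz's self-avoiding walk (SAW) tree construction to reduce the computation of $Z_G(\vec{z})$ on the graph $G$ to computing ratios $R_v \defeq Z^{\mathrm{in}}/Z^{\mathrm{out}}$ at the root of a tree $T_{\mathrm{SAW}}(G,v)$ of depth at most $n$. This reduction is purely combinatorial and makes no use of positivity, so it is available in our complex, multivariate setting. The full value $Z_G(\vec{z})$ is then assembled by a telescoping product of marginals across vertices of $G$. On the tree, ratios satisfy the recursion
\[
R_v \;=\; z_v \prod_{u \in \children(v)} \frac{1}{1 + R_u},
\]
which will be computed bottom-up after truncating the tree at depth $L$ and assigning default values $R_u = 0$ at the leaves. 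Choosing $L = \Theta\!\big(\log(n/\epsilon)/\sqrt{\alpha}\big)$ yields total work $d^{L} = (n/(\epsilon\alpha))^{O(\log d/\sqrt{\alpha})}$, matching the claimed bound.

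\paragraph{Step 2: Correlation decay as the target estimate.}
Correctness reduces to a correlation-decay statement: any two choices of leaf values at depth $L$ produce root ratios that differ by at most $\epsilon/n$ (after which the telescoping for $Z_G$ inflates the error by at most a factor of $n$). I would aim for a per-level contraction of rate $1-c\sqrt{\alpha}$, so that $L$ levels give error $(1-c\sqrt{\alpha})^{L}\leq \epsilon/n$. Thus the whole theorem comes down to proving such a contraction, uniformly over the tree and over all $\vec{z}$ satisfying the hypothesis.

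\paragraph{Step 3: From global zero-freeness to a local contraction.}
The central idea is to convert the global hypothesis $Z_G(\vec{z}')\neq 0$ on the polydisc $\{|\vec{z}'|\le (1+\alpha)|\vec{z}|\}$ into a local bound on the Jacobian of the tree recursion. First, non-vanishing in the larger polydisc gives a uniform region $D\subset \CC$, bounded away from $-1$, in which every ratio $R_v$ at every subtree must lie when activities are scaled anywhere in the enlarged polydisc: if $R_v$ approached $-1$ somewhere, the corresponding $Z$ would vanish on a scaled polydisc, contradicting the hypothesis. Second, I would introduce a (complex) potential $\phi\colon D\to\CC$ and analyze the recursion in $\phi$-coordinates. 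The goal is to show that the $L^{\infty}$-norm of the Jacobian $\bigl\|\partial \phi(R_v)/\partial \phi(R_u)\bigr\|$ is at most $1-c\sqrt{\alpha}$ over $D$, uniformly in the (non-uniform, complex) activities. The multivariate aspect enters here: the Jacobian must be controlled as a matrix over the children, and the potential must be designed so that the per-child factors interact in a way that is compatible with arbitrary $z_v$ along the tree.

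\paragraph{Main obstacle.}
The hard part is Step 3, specifically extracting the exponent $1/\sqrt{\alpha}$ rather than the more routine $1/\alpha$. The intuition is that the boundary of the Shearer region corresponds to $R_v$ touching $-1$, which is a square-root-type branch locus for the recursion $R\mapsto z/(1+R)$; consequently, at distance $\alpha$ from this boundary the natural contraction scale is $\sqrt{\alpha}$, not $\alpha$. Making this rigorous will require either (i) choosing a potential $\phi$ tailored to flatten the $\sqrt{\cdot}$ singularity (analogous to the $\sinh^{-1}\!\sqrt{\cdot}$ used in the positive real regime, but adapted to a complex region $D$), or (ii) an amortized argument that bounds the product of per-step Jacobians over two tree levels rather than one, capturing the cancellation that gives $\sqrt{\alpha}$. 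Either way, the technical core is a contraction lemma whose statement and proof must work uniformly across all complex activity vectors compatible with the hypothesis, and this uniformity is precisely where the multivariate complex setting departs from the classical univariate real analysis.
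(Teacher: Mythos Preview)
Your Steps 1 and 2 match the paper's high-level strategy: self-reducibility via a telescoping product of occupation ratios, Weitz-style recursive computation on a tree, truncation at depth $L = \Theta(\log(n/\epsilon)/\sqrt{\alpha})$, and a per-level contraction target of $1 - c\sqrt{\alpha}$. The running-time calculus is also right.

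The gap is in Step 3. Your two proposed mechanisms for obtaining the $\sqrt{\alpha}$ contraction --- a potential/message $\phi$ that flattens the singularity, or a two-level Jacobian amortization --- are not what the paper does, and the first is precisely what the authors say they could \emph{not} make work: because the activities are non-uniform and complex, there is no single potential that gives uniform contraction across the tree (the local structure can cause absolute errors to \emph{increase} at some nodes). The paper's actual device is different in kind. It introduces a node-dependent \emph{error sensitivity parameter}
\[
\beta_{S,u}(\vec{p}) \;=\; \left.\frac{d}{dt}\,\ra_{S,u}\big((1+t)\vec{p}\big)\right|_{t=0},
\]
where $\ra_{S,u}(\vec{p}) = r_{S,u}(|\vec{p}|)$, and measures the error at node $(S,u)$ relative to $\beta_{S,u}$ rather than trying for a uniform bound. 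This $\beta$ satisfies a recurrence structurally parallel to the error-propagation recurrence, which is what makes the amortization go through.

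The $\sqrt{\alpha}$ improvement (over the $1-\Theta(\alpha)$ that $\beta$ alone gives) comes not from a two-\emph{level} argument but from a two-\emph{scale} argument: one simultaneously tracks $\beta_{S,u}$ at $\vec{p}$ and $\beta'_{S,u}$ at $(1+\alpha)\vec{p}$, normalizes the error by the geometric mean $\sqrt{\beta_{S,u}\beta'_{S,u}}$, and then runs a short case analysis on whether $\beta_{S,u} \lessgtr \ra_{S,u}/\sqrt{\alpha}$. In one case $\beta_{S,u} - \ra_{S,u}$ is small; in the other, convexity of $t\mapsto \ra_{S,u}((1+t)\vec{p})$ forces $\ra_{S,u}/\ra'_{S,u}$ to be small. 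Either way one gains a factor $(1+\sqrt{\alpha})^{-1/2}$ per level. Your branch-locus heuristic for why $\sqrt{\alpha}$ is the right scale is appealing, but the actual proof does not pass through any analytic-function or singularity argument; it is a direct comparison of two real recurrences (for $\beta$ and for the error) at two nearby parameter scales, together with Cauchy--Schwarz.
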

As the set $\cS_G$ is somewhat mysterious, it is instructive to consider the following
univariate corollary.
\newcommand{\GraphThreshold}{\lambda_G}
\newcommand{\ShearerThreshold}{\lambda'_c}

\ifbool{twocol}{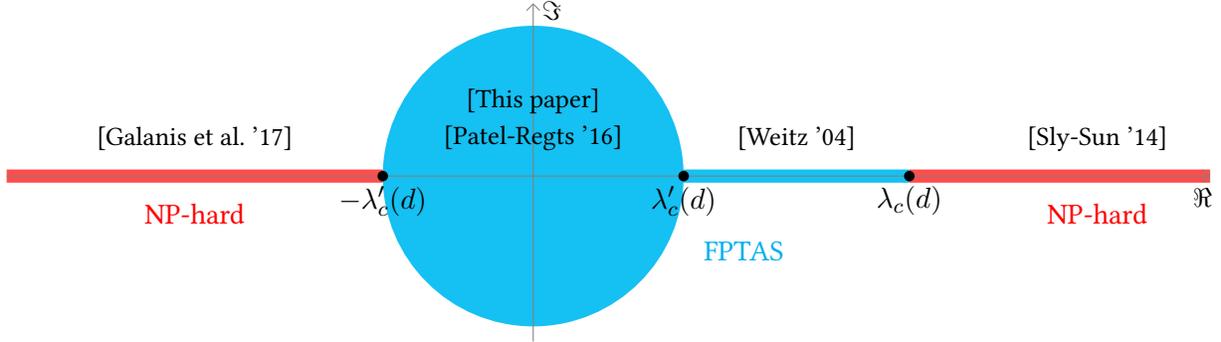
\begin{figure*}[t!]}{\begin{figure}[t!]}
\begin{tikzpicture}

\coordinate[] (O) at (0,0);
\coordinate[] (A) at (-7,0);
\coordinate[] (B) at (9,0);

\fill [color=cyan!70] (O) circle (2);

\coordinate[] (C) at (0,-2.2);
\coordinate[] (D) at (0,2.3);
\coordinate[label=below:$-\lambda'_c(d)$] (E) at (-2,0);
\coordinate[label=below:$\lambda'_c(d)$] (F) at (2,0);
\coordinate[label=below:$\lambda_c(d)$] (G) at (5,0);

\draw [color=cyan!70, line width = 5pt] (O) -- (G);
\draw [color=red!70, line width = 5pt] (G) -- (B);
\draw [color=red!70, line width = 5pt] (A) -- (E);

\draw [arrows={-angle 90},color=black!50] (A) -- (B);
\draw [arrows={-angle 90},color=black!50] (C) -- (D);
\draw (8.9,-0.28) node {\small $\Re$};
\draw (0.245,2.2) node {\small $\Im$};

\fill (E) circle (2pt);
\fill (F) circle (2pt);
\fill (G) circle (2pt);

\draw [color=cyan] (2.8,-1) node {FPTAS};
\draw [color=red] (-4.5,-0.5) node {NP-hard};
\draw [color=red] (7.5,-0.5) node {NP-hard};

\draw (0,1) node {\small [This paper]};
\draw (0,0.5) node {\small [Patel-Regts '16]};
\draw (3.5,0.5) node {\small [Weitz '04]};
\draw (-4.5,0.5) node {\small [Galanis et al.~'17]};
\draw (7.5,0.5) node {\small [Sly-Sun '14]};
\end{tikzpicture}
\caption{{\footnotesize Summary of results for computation of
    $Z_G(z \vec{1})$ in the complex univariate setting, as a function of the degree $d$.
    Here $\lambda'_c = \frac{(d-1)^{d-1}}{d^d} \searrow \frac{1}{ed}$ and
         $\lambda_c  = \frac{(d-1)^{d-1}}{(d-2)^d} \searrow \frac{e}{d}$.
    Note that a major difference of this work from Patel-Regts~\cite{PatelRegts}
    that is not captured by this figure is that our running time has a
    significantly better dependence on distance from the boundary of Shearer's
    region, which is crucial in our applications.
  }\label{fig:1} }
\ifbool{twocol}{\end{figure*}}{\end{figure}}


\begin{corollary}[\textbf{FPTAS for the univariate case}]
  \CorollaryName{univariate} Let $G$ be an $n$-vertex graph with
  maximum degree $d$.
  Define $\GraphThreshold = \min \{\, \abs{z} \,:\, z \in \bC ,\, Z_G( z \vec{1} ) = 0 \,\}$.  
  Let $\alpha, \epsilon \in (0,1]$ and $z \in \bC$ satisfy
  $(1+\alpha) \abs{z} \leq \GraphThreshold$.  Then a
  $( 1+\epsilon )$-approximation to $Z_G(z \vec{1})$ can be deterministically computed in time
  $\big( \frac{n}{\epsilon \alpha} \big)^{O((1/\sqrt{\alpha})\cdot\log(d))}$.
\end{corollary}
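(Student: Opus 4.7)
The plan is to derive the corollary as a specialization of \Theorem{mainShearerIntro} to the univariate vector $\vec{z} = z\vec{1}$. The only nontrivial step is translating the univariate hypothesis $(1+\alpha)\abs{z} \leq \GraphThreshold$ into the multivariate Shearer condition $(1+\alpha)\abs{z}\vec{1} \in \cS_G$ required by \Theorem{mainShearerIntro}. Since $\cS_G$ is an open set while the hypothesis of the corollary is only a weak inequality, I first replace $\alpha$ by $\alpha' = \alpha/2$, so that $(1+\alpha')\abs{z} < \GraphThreshold$ strictly; this costs only constants in the exponent of the running time.

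The key reduction is the following equivalence: for any graph $G$ and radius $r \geq 0$, we have $r\vec{1} \in \cS_G$ if and only if the univariate polynomial $Z_G(w\vec{1})$ has no complex zero in the closed disc $\inb{w \in \bC : \abs{w} \leq r}$. The ``only if'' direction is immediate by specializing $\vec{z}$ to $w\vec{1}$ in the definition of $\cS_G$. For the ``if'' direction, I would invoke the Scott--Sokal reformulation of Shearer's criterion, which states that $\vec{p} \in \cS_G$ is equivalent to $Z_G(-\vec{q}) > 0$ for every $\vec{0} \leq \vec{q} \leq \vec{p}$. Univariate non-vanishing in $\inb{\abs{w} \leq r}$, combined with $Z_G(\vec{0})=1$ and continuity along the segment from $\vec{0}$ to $-r\vec{1}$, yields $Z_G(-q\vec{1}) > 0$ for every $q \in [0,r]$. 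A monotonicity argument based on the vertex-deletion recurrence $Z_G(\vec{z}) = Z_{G-v}(\vec{z}) + z_v\, Z_{G - N[v]}(\vec{z})$ then extends this to $Z_G(-\vec{q}) \geq Z_G(-q_{\max}\vec{1}) > 0$ for all $\vec{q} \in [0,r]^V$, where $q_{\max} = \max_v q_v$, verifying the Scott--Sokal condition on the entire negative real polydisc.

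With the equivalence in hand, taking $r = (1+\alpha/2)\abs{z}$ gives $r < \GraphThreshold$ and hence $r\vec{1} \in \cS_G$. Applying \Theorem{mainShearerIntro} to $\vec{z} = z\vec{1}$ with slack $\alpha/2$ then produces the required $(1+\epsilon)$-approximation in time $\inp{\tfrac{n}{\epsilon\alpha}}^{O(\log(d)/\sqrt{\alpha})}$, absorbing the factor of $2$ into the big-$O$. The main obstacle is the univariate-to-multivariate reduction itself: non-vanishing of the diagonal slice $Z_G(w\vec{1})$ does not, for general polynomials, certify non-vanishing of $Z_G$ on the full uniform complex polydisc of equal radius, so the argument must exploit both the nonnegativity of the coefficients of the independence polynomial and its vertex-deletion structure. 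Carrying out the monotonicity step requires propagating the positivity $Z_{G - N[v]}(-\vec{q}_{-N[v]}) > 0$ through the induction on $\abs{V}$, which implicitly relies on the analogous univariate non-vanishing persisting on the induced subgraphs encountered.
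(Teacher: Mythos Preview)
Your overall plan—specialize \Theorem{mainShearerIntro} to $\vec{z}=z\vec{1}$ after verifying $(1+\alpha')\abs{z}\,\vec{1}\in\cS_G$—is exactly right, and your $\alpha'=\alpha/2$ fix for the closed inequality is a good observation. The paper does not write out a separate proof of the corollary.

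The weak point is your route for the univariate-to-multivariate step. The Scott--Sokal form you invoke asks for $Z_G(-\vec{q})>0$ on the whole box $[0,r]^V$, and your monotonicity argument toward $q_{\max}\vec{1}$ is circular as written: the needed sign $\partial Z_G(-\vec{q})/\partial q_v=-Z_{G-N[v]}(-\vec{q})\leq 0$ is itself a Shearer-type positivity statement for an induced subgraph. You notice this and propose to patch it by induction on $|V|$ together with ``univariate non-vanishing persisting on induced subgraphs'' (i.e., $\lambda_{G[S]}\geq\lambda_G$). That last fact is true, but it is not elementary; its standard derivation already goes through the Shearer/Scott--Sokal machinery, so your induction does not close without importing something equivalent to \Lemma{continuous}.

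The paper's tools give a cleaner, self-contained route. By \Lemma{continuous} (equivalently, the path form of \cite[Theorem~2.10]{ScottSokal}): if a continuous path from $\vec{0}$ to $\vec{p}$ stays in $\{\qdown_V>0\}$, then $\vec{p}\in\cS_G$. You have already shown from the univariate hypothesis that $\qdown_V(q\vec{1})=Z_G(-q\vec{1})>0$ for all $q\in[0,r]$; applying \Lemma{continuous} to the diagonal path $t\mapsto tr\vec{1}$ gives $r\vec{1}\in\cS_G$ directly—no off-diagonal $\vec{q}$, no induced subgraphs, no induction.
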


\paragraph{Remark: Region of applicability.}
In order to understand $\GraphThreshold$ it is helpful to consider a bound
that depends only on the degree $d$.
Define $\ShearerThreshold(d)$ to be the minimum of $\GraphThreshold$ over all graphs of maximum degree
$d$. Then it is known \cite{Shearer} that $\ShearerThreshold(1) = 1/2$ and $\ShearerThreshold(d) =
\frac{(d-1)^{d-1}}{d^d}$ for $d \geq 2$; the minimum is achieved by the infinite $d$-regular tree.
So $\ShearerThreshold(d)$ is the threshold, depending only on $d$, that determines the region of applicability
of our algorithm. This is no accident:
approximating $Z_G(z \vec{1})$ for real $z < -\ShearerThreshold$ has recently been shown
to be NP-hard by Galanis, Goldberg and
\v{S}tefankovi\v{c}~\cite{galanis_inapproximability_2016}, showing
that \Corollary{univariate} %
has the
tightest possible range of applicability on the negative real line (i.e., the Shearer
region). %
Thus, a phase transition in the computational complexity of the problem occurs right at the
boundary of the region within which $Z_G$ is guaranteed to have no roots.
As \Cref{fig:1} shows, we now have a complete picture of the computational complexity of $Z_G$
in the real univariate case, as a function of $d$.

\begin{figure}[b!]
\begin{center}\includegraphics[width=130pt,trim=0.05in 4.2in 7.75in 1.3in]{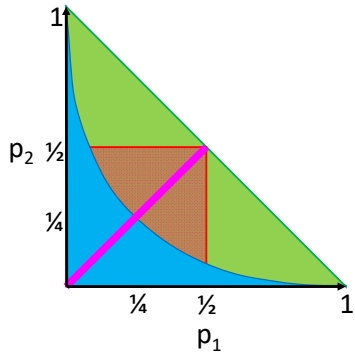}\end{center}
\caption{{\footnotesize For illustrative purposes, let us consider the graph $G = K_2$;
here $d=1$ and $\ShearerThreshold(d)=1/2$. %
Our algorithm applies throughout Shearer's region $\cS_G$, shown as the green triangular region.
The pink line segment is the restriction of $\cS_G$ for the univariate function $Z_G(z \vec{1})$.
The work of Patel-Regts describes an algorithm focused on the univariate case,
but they mention \cite[pp.~13]{PatelRegts} that it can be generalized to
all points dominated by $\ShearerThreshold$ (the red square region).
The blue region, defined as $\cL$ in \Cref{app:LLL}, is where the original
LLL~\cite{ErdosLovasz,Spencer77} applies.  %
}\label{fig:2} }
\end{figure}

\paragraph{Remark: Dependence on Slack.}
An important feature of \Cref{thm:mainShearerIntro} is that although
the running time degrades as the input vector $\vec{p}$ approaches the
boundary of the Shearer region $\cS$, the degradation is only
\emph{sub-exponential} in $\frac{1}{\alpha}$ (being exponential in
$\frac{1}{\sqrt{\alpha}}$) where $\alpha$ is the \emph{slack}
parameter that measures the distance to the boundary.  This is in
contrast to an earlier manuscript of the present
paper~\cite{harvey2016computing} and the concurrent paper of Patel and
Regts~\cite{PatelRegts}, which (using different methods), obtained an
FPTAS whose running time is actually \emph{exponential} in
$\frac{1}{\alpha}$.  We describe the new ideas required to get this
better dependence on $\alpha$ in \Cref{sec:corr-decay-with-2}, and
remark on the barriers to improving this dependence towards the end of
this subsection.

The importance of a sub-exponential dependence on the slack is that
for some applications it is imperative to approximate the
independence polynomial at points that are extremely close to the
boundary of the Shearer region and have slack at most $\Theta(1/n)$.
We present two such applications here, for both of which we are able
to obtain sub-exponential time algorithms, and for both of which the
earlier results~\cite{harvey2016computing,PatelRegts} only give
exponential time algorithms.

\paragraph{Remark: Connective constant.}
  \Cref{thm:mainShearerIntro} extends to graphs of unbounded maximum
  degree that have a bounded \emph{connective
    constant}~\cite{madras96:_self_avoid_walk,
    hammersley_percolation_1957, SSSY15, SSY13}. See
  \ifbool{twocol}{the full version}{\Cref{sec:extens-graphs-bound}} for the details of this extension.

\paragraph{Application 1: Testing membership in Shearer region.}
Physicists have studied the univariate threshold $\GraphThreshold$ for specific graphs,
as this determines the region within which there are no phase transitions \cite{leeyan52}.
For example, to understand phase transitions in $\bZ^2$,
researchers have performed numerical computations on finite graphs to estimate the exact value
$\GraphThreshold(\bZ^2)$.
(See, e.g., \cite{HolroydLiggett} \cite[Section 8.4]{ScottSokal} \cite{Todo}.)
Computations have shown that $\GraphThreshold(\bZ^2) \leq 1/8$ (rigorous)
and $\GraphThreshold(\bZ^2) = 0.119,338,881,88(1)$ (non-rigorous).

Our first application is an algorithm to test whether a given 
vector $\vec{p}$ lies in the Shearer region, up to accuracy $\alpha$. 
This can be used to compute bounds on $\GraphThreshold$,
and could potentially be useful for physicists.

\begin{theorem}
  Given a graph $G$, $\vec{p} \in (0,1)^V$, and $\alpha \in (0,1]$,
  there exists a deterministic algorithm which, in running time
  $(n/\alpha)^{O(\sqrt{n / \alpha} \log d)}$ decides whether
  $\vec{p} \in {\cal S}_G$ or $(1+\alpha) \vec{p} \notin {\cal S}_G$.
\end{theorem}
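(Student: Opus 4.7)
The plan is to reduce the decision problem to locating the smallest positive real root $t^*$ of the univariate polynomial $f(t) := Z_G(-t\vec{p})$, which has degree at most $n$. Along any positive ray from the origin, the first zero of $Z_G$ in the closed polydisc of radii $t\vec{p}$ is known to lie on the negative real orthant for the independence polynomial (a consequence of the Scott--Sokal characterization of Shearer's region), so $\vec{p} \in \cS_G$ is equivalent to $t^* > 1$ and $(1+\alpha)\vec{p}\notin\cS_G$ is equivalent to $t^* \leq 1+\alpha$. The algorithmic task therefore becomes to distinguish $t^* > 1$ from $t^* \leq 1+\alpha$, with the gap $(1, 1+\alpha]$ admitting either answer.

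First, I would use \Cref{thm:mainShearerIntro} to approximately evaluate $f$ at $n+1$ nodes $t_0, \ldots, t_n$ (taken as Chebyshev nodes scaled to $[0,\,1-c\alpha/n]$ for a small constant $c$), each evaluation invoking the FPTAS with slack $\alpha' = \Theta(\alpha/n)$ and relative accuracy $\epsilon' = (n/\alpha)^{-O(1)}$. Each call costs $(n/(\epsilon'\alpha'))^{O(\log d/\sqrt{\alpha'})} = (n/\alpha)^{O(\sqrt{n/\alpha}\,\log d)}$, and taking $n+1$ such calls preserves the order; the $\sqrt{n/\alpha}$ in the exponent arises precisely from the slack $\alpha/n$, which is exactly what is needed so that every $(1+\alpha')t_i\vec{p}$ stays strictly inside $\cS_G$ whenever $t^* > 1$. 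With these evaluations I would reconstruct $\tilde{f}\approx f$ by Chebyshev interpolation, output ``$\vec{p} \in \cS_G$'' if $\tilde{f}$ is uniformly positive on $[0,1]$ by a margin absorbing the interpolation error, and ``$(1+\alpha)\vec{p}\notin\cS_G$'' otherwise. In the strict-YES regime $t^* > 1+\alpha$ the FPTAS precondition holds at every node, so $\tilde{f}\approx f > 0$ throughout $[0,1]$; in the strict-NO regime $t^* \leq 1$ the true $f$ has a zero in $[0,1]$, which a degree-$n$ Chebyshev reconstruction witnesses via a sign change.

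The main obstacle will be the behavior of the FPTAS subroutine at sample points where its precondition might silently fail (i.e., in the NO case when some $t_i$ is close to $t^*$). Since the FPTAS then carries no correctness guarantee, I would have the outer algorithm monitor the internal correlation-decay certificates produced by the Weitz-style SAW-tree computation: the SAW-tree ratios lie inside a specific bounded region whenever the slack condition actually holds, and any violation unambiguously signals that we have escaped $\cS_G$, committing the algorithm to a NO output. A secondary concern is the conditioning of polynomial interpolation, which Chebyshev nodes keep at $\mathrm{poly}(n/\alpha)$; the resulting $(n/\alpha)^{-O(1)}$ per-node precision is sufficient to preserve the coarse discrimination between $t^* > 1$ and $t^* < 1$ without inflating the runtime beyond $(n/\alpha)^{O(\sqrt{n/\alpha}\,\log d)}$.
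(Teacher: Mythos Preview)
Your reduction to locating the smallest positive root $t^*$ of $f(t)=\qdown_V(t\vec{p})$ is sound, but the algorithm you build on it has a genuine gap in the NO regime $t^*\le 1$, and your proposed fix does not close it. The FPTAS of \Cref{thm:mainShearerIntro} carries no guarantee once its slack precondition fails, and for nodes $t_i>t^*$ the point $t_i\vec{p}$ lies outside $\cS_G$. You propose to detect this by checking that the computed SAW-tree ratios stay in range; it is true that $|R_{S,u}|\le\ra_{S,u}<1$ inside $\cS$ (\Cref{clm:comp-real}), so a violation is a valid NO-certificate, but what you need is the converse and it fails. The truncated recurrence is seeded with $R=0$ at the leaves and can remain below $1$ for many levels even when the untruncated ratios diverge: on the $d$-ary tree with activity just past threshold, the iterate $R_{k+1}=p/(1-R_k)^d$ lingers near the ghost fixed point for $\Theta(1/\sqrt{\eta})$ steps before escaping past $1$ (the same square-root scaling that makes \Cref{thm:decayShearer} tight), so for $t_i$ only slightly past $t^*$ no violation appears within your depth $\ell$. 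The FPTAS then returns uncontrolled values and the interpolant is meaningless. Even in the sub-case where every node happens to lie inside $\cS$ (namely $t^*\in(1-c\alpha/n,\,1]$), all samples are positive and accurate, but detecting the root now requires \emph{extrapolating} a degree-$n$ polynomial past the sampling interval by relative distance $\Theta(\alpha/n)$; the Chebyshev bound permits pointwise errors to be amplified by $e^{\Theta(\sqrt{\alpha n})}$, so your per-node precision $(n/\alpha)^{-O(1)}$ does not survive, and absorbing the blow-up pushes the FPTAS runtime to $e^{\Theta(n\log d)}$.

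The paper sidesteps both obstacles by never evaluating at a point of unknown slack. It starts at $\tilde{\vec{p}}=\vec{p}/(2n)$, certifiably inside $\cS_G$, and at each step uses the FPTAS (with guaranteed slack $\Omega(\alpha/n)$) to estimate $\gamma(\tilde{\vec{p}})=q_\emptyset(\tilde{\vec{p}})\big/\sum_i q_{\{i\}}(\tilde{\vec{p}})$. The key \Cref{lem:slack-test} shows that the true slack at $\tilde{\vec{p}}$ lies between $\gamma$ and $n\gamma$, so a constant-factor estimate of $\gamma$ either certifies $(1+\alpha)\tilde{\vec{p}}\notin\cS_G$ (output NO) or certifies enough slack to safely rescale $\tilde{\vec{p}}$ by $(1+\Theta(\alpha/n))$ while preserving the invariant. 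This incremental walk toward the boundary terminates in $O(n\log(n/\alpha))$ steps, every FPTAS call is legitimate with slack $\Theta(\alpha/n)$, and the stated runtime follows.
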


This algorithm uses the FPTAS of \Theorem{mainShearerIntro} in a black box fashion,
calling it at points in $\mathcal{S}$ that may have slack $O(1/n)$.
Replacing the black box by an algorithm that had an exponential
dependence on the slack would give an algorithm with only a trivial
exponential time guarantee on its run-time.  We note also that the
testing membership in $\cS$ is \#P-hard when $\alpha$ is exponentially small.
(See \Cref{app:hardness} for a precise statement of this hardness result.)

\paragraph{Application 2: Constructive algorithm for the
    Lov\'asz Local Lemma by polynomial evaluation.}
The LLL is a tool in combinatorics giving conditions ensuring that
it is possible to avoid certain bad events $\cE_1,\ldots,\cE_n$.
(For readers unfamiliar with the LLL, a statement is provided in \Cref{app:LLL}.)
Although the LLL guarantees that there exists a point in $\Intersect_{i=1}^n \overline{\cE_i}$,
it provides no hint on how to find such a point.
For decades, algorithmically constructing such a point was a major research challenge,
though over the past 10 years dramatic progress has been made.
Any such algorithm must necessarily make some assumptions on the probability space,
the most common being the ``variable model'' used by \cite{MoserTardos}.
All previous algorithms have been based on the idea of
randomly sampling variables followed by brute-force search \cite{Beck91,Alon91},
or random resampling
\cite{Srinivasan08,MoserConf09,MoserTardos,Kolipaka,Achlioptas,HarrisThesis,HVConf},
or derandomizations of those ideas \cite{Beck91,Alon91,MoserConf09,MoserTardos,Chandrasekaran13}.

We develop a completely new algorithmic approach to the LLL
in the variable model.
The previous randomized algorithms can be viewed as generating a sequence of
\textit{infeasible, integral} solutions;
at each step, they resample one of the bad events and hopefully move closer to feasibility.
(The previous deterministic algorithms are derandomizations of this approach.)
In contrast, our new algorithm generates a sequence of \textit{feasible, fractional} solutions;
at each step, it fixes the value of one of the variables while preserving feasibility in $\cS_G$.
The value of the polynomial $Z_G(\vec{z})$ is used to determine membership in $\cS_G$.
Thus, our algorithm can be viewed as a \textit{rounding algorithm} for the LLL,
and the value of $Z_G(\vec{z})$ can be viewed as a \textit{pessimistic estimator}
for the probability of $\Intersect_{i=1}^n \overline{\cE_i}$.
To compute $Z_G(\vec{z})$, our algorithm uses (as a black box) our deterministic FPTAS for
evaluating the independence polynomial with negative activities
and slack $\Omega(1/m)$, where $m$ is the number of variables.

\begin{theorem}
Consider an LLL scenario in the variable model (as in \Cref{app:VarMod}):
$\mu_{\vec{z}}$ is the product distribution on $\set{0,1}^m$ with expectation $\vec{z}$,
$G$ is the dependency graph for events $\cE_1,\ldots,\cE_n$, and $p_i = \mu_{\vec{z}}(\cE_i)$.
There is a deterministic algorithm that takes as input
a description of the events $\cE_1,\ldots,\cE_n$, a vector $\vec{z} \in [0,1]^m$,
and a parameter $\alpha \in (0,1]$ such that $(1+\alpha) \cdot \vec{p}(\vec{z}) \in \cS_G$.
The algorithm runs for time $(nm/\alpha)^{O(\log(d) \sqrt{m/\alpha})}$ 
and outputs a point in $\Intersect_{i=1}^n \overline{\cE_i}$.
\end{theorem}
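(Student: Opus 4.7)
Our approach is a deterministic pessimistic-estimator derandomization: we process the variables $x_1,\ldots,x_m$ in some order, at each step rounding a single $x_j$ to $\{0,1\}$, while maintaining throughout the invariant that the induced marginal vector $\vec{p}(\vec{z})$ remains inside $\cS_G$ with non-trivial slack. The observation that enables a clean termination is that any $\vec{p}\in\{0,1\}^n\cap\cS_G$ must equal $\vec{0}$: if some coordinate $p_v=1$, then the point $\vec{w}$ with $w_v=-1$ and $w_u=0$ for $u\neq v$ lies in the polydisc $|\vec{w}|\le\vec{p}$ and satisfies $Z_G(\vec{w})=1-1=0$, contradicting Shearer membership. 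Hence, once rounding terminates at $\vec{z}\in\{0,1\}^m$ with the invariant intact, $\vec{p}(\vec{z})$ must be the zero vector, which means no event $\cE_i$ holds at $\vec{z}$.

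The pessimistic estimator guiding the choices is $\Phi(\vec{z}) := Z_G(-\vec{p}(\vec{z}))$, and its key structural property is \emph{multilinearity} in $\vec{z}$. Each marginal $p_v(\vec{z})$ is multilinear in $\vec{z}$ because $\mu_{\vec{z}}$ is a product distribution. Moreover, for any independent set $I$ in $G$, the events $\{\cE_v : v\in I\}$ are pairwise non-adjacent in the dependency graph and thus share no variables, so $\prod_{v\in I}p_v(\vec{z})$ is a product of multilinear polynomials in disjoint variables, and hence multilinear. Summing over $I$ preserves multilinearity, so $\Phi$ is multilinear in $\vec{z}$. Consequently $\Phi(\vec{z}) = z_j\,\Phi(\vec{z}\mid x_j=1) + (1-z_j)\,\Phi(\vec{z}\mid x_j=0)$, and at least one choice $x_j \in \{0,1\}$ does not decrease $\Phi$. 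The algorithm uses the FPTAS of \Theorem{mainShearerIntro} to approximate $\Phi$ at the two candidate successor points and picks whichever yields the larger estimate.

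The main technical obstacle is ensuring that the chosen $x_j$ also preserves $\cS_G$-membership, with only $O(1/m)$ loss of slack per step. A useful structural fact is that $D_j := \{v : \cE_v \text{ depends on } x_j\}$ is a clique in the dependency graph $G$, since every pair of events in $D_j$ shares the variable $x_j$, and the vector $\vec{p}(\vec{z}\mid x_j)$ differs from $\vec{p}(\vec{z})$ only on the coordinates $D_j$. Using this clique structure together with the multilinear pessimistic-estimator inequality, we argue that if $(1+\alpha_t)\,\vec{p}(\vec{z}^{(t)}) \in \cS_G$, then for at least one $x_j \in \{0,1\}$ we have $(1+\alpha_{t+1})\,\vec{p}(\vec{z}^{(t+1)}) \in \cS_G$ with $\alpha_{t+1} \ge \alpha_t - O(\alpha/m)$, so the slack remains $\Omega(\alpha/m)$ throughout. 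At each step the algorithm certifies this invariant using the testing procedure from the preceding application, which itself reduces to the same FPTAS. The quantitative preservation argument combines the Scott-Sokal characterization of $\cS_G$ (non-vanishing of $Z_G$ on the complex polydisc, equivalently positivity of $Z_{G[V']}$ for all induced subgraphs) with the fact that only the clique coordinates $D_j$ change in a single rounding step.

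Putting this together, the algorithm performs $m$ rounding steps, each invoking the FPTAS a constant number of times at points in $\cS_G$ with slack $\Omega(\alpha/m)$. By \Theorem{mainShearerIntro}, each call runs in time $(nm/\alpha)^{O(\log(d)\sqrt{m/\alpha})}$, and multiplying by $m$ yields the claimed bound. The hardest part, as described, is the quantitative $\cS_G$-preservation step, which is where the clique structure of $D_j$, the multilinearity of $\Phi$, and Shearer's characterization of $\cS_G$ must be combined carefully.
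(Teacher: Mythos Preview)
Your high-level plan---round variables one at a time using $\Phi(\vec z)=\qdown_V(\vec p(\vec z))$ as a pessimistic estimator, exploit the multilinearity of $\Phi$ in $\vec z$, and observe that a $\{0,1\}^n$-valued $\vec p$ in $\cS_G$ must be $\vec 0$---is exactly the paper's approach. The multilinearity argument via the clique structure of $D_j$ is correct and is indeed how the paper proves it.

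However, two steps in your proposal do not work as written. First, you propose to call the FPTAS at the two candidate successor points $\vec p(\vec z\mid x_j{=}0)$ and $\vec p(\vec z\mid x_j{=}1)$ and take the larger value. But one of those points may lie \emph{outside} $\cS_G$, where the FPTAS has no guarantee and can return garbage. The paper avoids this by never evaluating at a rounded point: it instead estimates the derivative $\partial\qdown_V/\partial z_j$ via a finite difference at the \emph{current} point and a tiny perturbation of it, both of which are certified to be safely inside $\cS_G$ (this is also where the paper's preprocessing step, forcing each $z_j$ away from $0$ and $1$, is needed).

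Second, and more seriously, your slack-preservation claim ``$\alpha_{t+1}\ge\alpha_t-O(\alpha/m)$'' is not justified by the ingredients you list. The clique structure of $D_j$ buys you multilinearity and nothing more; it does not by itself bound the slack loss. The paper's mechanism is a scaling trick: instead of tracking $\Phi$, it tracks $\qdown_V\big(s_i\cdot\vec p(\vec z)\big)$ for a decreasing schedule $s_i=1+\frac{\alpha(m-i)}{2m}$, estimates the derivative of \emph{this} scaled quantity, and rounds accordingly. Since this scaled quantity is still multilinear in $\vec z$, the chosen direction keeps it positive along the whole segment, and then a topological lemma (\Lemma{continuous}: any continuous path in $\{\vec p:\qdown_V(\vec p)>0\}$ starting in $\cS$ stays in $\cS$) yields $s_i\cdot\vec p(\vec z')\in\cS_G$, i.e.\ slack exactly $s_i-1$. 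The budgeted schedule absorbs the FPTAS approximation error. Your suggestion to instead certify the invariant via the membership-testing procedure would work in principle, but that procedure internally calls the FPTAS at slack $\Omega(\alpha'/n)$ (not $\Omega(\alpha')$), so with tolerance $\alpha'=\Theta(\alpha/m)$ you would incur an extra $\sqrt n$ in the exponent and miss the stated running time.
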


This algorithm uses our FPTAS from \Cref{thm:mainShearerIntro} as a black box.
Note that our algorithm runs in subexponential time,
so, as of now, its runtime is not competitive with the state of the art
deterministic algorithms for the LLL~\cite{Chandrasekaran13}.
Nevertheless, prior to our work there was essentially only one known algorithmic
technique known for the LLL: the witness tree technique originating with Beck \cite{Beck91}.
Our work provides the only other known technique that gives an algorithm for the LLL better than brute-force.
The fact that our algorithm is slow is only because the best known implementation of the black box
(i.e., \Cref{thm:mainShearerIntro}) has a running time that depends
sub-exponentially on the slack.  The algorithm thus points to a new
intriguing connection between approximate counting and algorithmic
versions of the LLL, and suggests the open question of finding the
optimal dependence on the slack $\sqrt{\alpha}$ in
\Cref{thm:mainShearerIntro}.

\paragraph{Dependence on the ``slack parameter'' $\alpha$.} The
discussion following the two applications above suggests that the
question of the optimal dependence on the slack $\alpha$ in
\Cref{thm:mainShearerIntro} is of importance for further exploration
of the connection between approximate counting and the LLL.  While we
cannot yet provide a complete answer to this question, we conclude
this section with a couple of our results that address this point.
Our first result in this direction shows that some dependence on the
slack parameter is inevitable. (See \Cref{sec:approxEvalMemb} for a proof).
\begin{theorem}[\textbf{Necessity of slack}]
  \TheoremName{ShearerHardnessIntro}
If there is an algorithm to estimate $Z_G(-\vec{p})$, assuming $(1+\alpha) \vec{p} \in
\mathcal{S}$, within a $\poly{n}$ multiplicative factor in running time $(n \log
\frac{1}{\alpha})^{O(\log n)}$ then $\text{\#P} \subseteq \DTIME(n^{O(\log n)})$.
\end{theorem}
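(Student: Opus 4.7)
The plan is to exhibit a quasi-polynomial time reduction from a \SharpPHard problem to the hypothetical algorithm. The key numerical observation is that when $\alpha = 2^{-\poly{n}}$ we have $\log(1/\alpha) = \poly{n}$, so the hypothesized running time $(n \log(1/\alpha))^{O(\log n)}$ is bounded by $n^{O(\log n)}$, which is quasi-polynomial. Consequently, we are free to query the algorithm at points whose slack to the boundary of $\cS$ is exponentially small, while spending only quasi-polynomial time.

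The natural \SharpPHard kernel is the one established in \Cref{app:hardness}: sufficiently precise evaluation of $Z_G(-\vec{p})$ at suitable rational points $\vec{p}$ with $(1+\alpha)\vec{p} \in \cS_G$ is \SharpPHard. The goal of the reduction is to bootstrap the weak $\poly{n}$-factor multiplicative guarantee of the hypothetical algorithm into the stronger guarantee demanded by that hardness result. I would approach this by applying the algorithm to a gadget graph built from the \SharpPHard instance. A natural first choice is a disjoint-union gadget $G \defeq H^{\sqcup k}$ equipped with the replicated activity vector, so that $Z_G(-\vec{p}) = Z_H(-\vec{p})^k$ and $\cS_G$ inherits the slack $\alpha$ at the replicated point. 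Taking the $k$-th root of the returned approximation converts a $\poly{kn}$-factor multiplicative error on $Z_G$ into a $1 + O(\log(kn)/k)$ multiplicative error on $Z_H(-\vec{p})$, which for $k$ chosen polynomially large is within the tolerance demanded by the hardness statement. Composing the quasi-polynomial query with this amplification gives an $n^{O(\log n)}$-time algorithm for a \SharpPHard problem, yielding $\text{\#P} \subseteq \DTIME(n^{O(\log n)})$.

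The main obstacle is the precision-versus-size tradeoff inherent in the amplification. If $\vec{p}$ is rational with common denominator $D_p$, then $Z_H(-\vec{p})$ lies on a lattice of granularity $D_p^{-n}$ and can have magnitude up to $(D_p+1)^n$, so recovering it exactly from a multiplicative approximation requires an error that is exponentially small in the worst case---something the disjoint-union amplification alone cannot supply in polynomial gadget size. Overcoming this requires either choosing the \SharpPHard instance so that the target quantity has only polynomially many bits of informational content (e.g., encoding the hard answer into the leading or sub-leading terms of $Z_H$, rather than its full value), or replacing the disjoint-union gadget with a construction that extracts a single bit of the \SharpPHard answer per call. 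Verifying that any such refined reduction is simultaneously compatible with the slack constraint $(1+\alpha)\vec{p} \in \cS$ for $\alpha$ as small as $2^{-\poly{n}}$, while keeping the overall work within the quasi-polynomial budget, is the technical heart of the proof.
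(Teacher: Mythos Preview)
Your numerical observation is correct: taking $\alpha = n^{-\Theta(n)}$ makes the assumed runtime quasi-polynomial, so the task is indeed to show that a $\poly{n}$-factor oracle at exponentially small slack solves a \SharpPHard problem. But the disjoint-union amplification you propose cannot close the gap, and the ``obstacle'' you flag is fatal to that route. Taking $k$ copies boosts a $\poly{kn}$-factor estimate for $Z_H^k$ to a $\big(1+O(\tfrac{\log(kn)}{k})\big)$-factor estimate for $Z_H$, but the \SharpPHard instances in \Cref{app:hardness} require precision $n^{-\Theta(n)}$ to recover the hard count; that would force $k$ exponential, and then the graph is no longer polynomial-size. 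Your suggested patches (encoding into leading terms, single-bit-extraction gadgets) are not worked out, and it is unclear how any of them would simultaneously preserve the slack hypothesis and stay within the quasi-polynomial budget.

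The paper's proof takes a completely different route: it never tries to sharpen a single evaluation. Instead it shows directly (\Cref{thm:app-poly-factor-hardness}) that a $\poly{n}$-factor oracle at slack $n^{-2n}$ is already \SharpPHard, by using polynomially many oracle calls to run a Newton-type line search. Along the ray $t\mapsto t\vec{p}$, both $\phi(t)=\qdown_V(t\vec{p})$ and $\phi'(t)=-\sum_i p_i\,\qdown_{V\setminus\Gamma^+(i)}(t\vec{p})$ are sums of independence polynomials on induced subgraphs, so the oracle approximates the ratio $\phi(t)/|\phi'(t)|$ within a $\poly{n}$ factor. Stepping by half of this estimate keeps you inside $\cS$ (convexity of $\phi$) while shrinking the distance to the first root by a factor $1-\Omega(n^{-O(1)})$ per step; after polynomially many iterations you locate the boundary to precision $n^{-n}$, which by \Cref{thm:app-membership-inapprox} is \SharpPHard. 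The theorem you are proving is then the one-line \Cref{cor:estimateQ}: plug $\alpha=n^{-2n}$ into the assumed runtime. The idea you are missing is precisely this: use many crude evaluations of $Z$ \emph{and its derivative} to do root-finding, rather than trying to amplify one evaluation to high precision.
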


However, this hardness result, while applying to all algorithmic
approaches, only provides a weak lower bound on what can be achieved.
We do not yet have any stronger general lower bounds, but our second
result, described in detail in \Cref{sec:optim-decay-rate}, presents
evidence that the dependence on $1/\sqrt{\alpha}$ in
\Cref{thm:mainShearerIntro} is optimal for the techniques used in our
paper.  Nevertheless, it does not preclude the possibility that other
approximate counting techniques could substantially improve upon
\Cref{thm:mainShearerIntro}.  We discuss some related future
directions in \Cref{sec:concl}.

\ifbool{twocol}{\newpage}{}

\subsection{Related work}
\label{sec:relat-work-techn}

As discussed above, the exact computation of the independence
polynomial turns out to be \#P-hard.  This is a fate shared by the
partition functions of several other ``spin systems'' (e.g., the Ising
model) in statistical physics, and by now there is extensive work on
the complexity theoretic classification of partition functions in terms
of dichotomy theorems (see e.g., \cite{cai_complexity_2012}).

The approximation problem for a univariate partition function
with a positive real argument is also well
studied and has strong connections with phase
transitions in statistical mechanics.  In two seminal papers,
Weitz~\cite{Weitz} and Sly~\cite{Sly2010CompTransition} (see
also~\cite{sly12,Vigoda-hard-core-11,galanis_inapproximability_2015}) showed that there exists a
critical value $\lambda_c(d)$ such when $\lambda < \lambda_c(d)$,
there is an FPTAS for the partition function $Z_G(\lambda)$ on graphs
of maximum degree $d$, while for $\lambda > \lambda_c(d)$ close to the threshold,
approximating $Z_G(\lambda)$ on $d$-regular graphs is NP-hard under randomized
reductions. (Sly and Sun \cite{sly12} extended the hardness result to any $\lambda > \lambda_c(d)$.)

The approach for our FPTAS builds upon the correlation decay technique
pioneered by Weitz, which has since inspired several results in
approximate counting (see, e.g., \cite{li_correlation_2011,
  gamarnik_correlation_2007, efthymiou2016convergence,
  liu_fptas_2015, SSSY15, LL15, sinclair_approximation_2012,
  bayati_simple_2007}).
Unlike previous work,
where the partition function has positive activities 
and induces a probability distribution on the underlying structures,
our emphasis is on negative and complex activities.
It turns out that Weitz's proof can be easily modified to handle a
\emph{univariate} independence polynomial with a \emph{negative} (and indeed, \emph{complex}) parameter $z$ satisfying
$\abs{z} < \ShearerThreshold(d) = \frac{(d-1)^{d-1}}{d^d}$,
analogous to the $\lambda < \lambda_c(d)$ condition mentioned above;
this observation appears in~\cite{SrivastavaNote}.
Our work considers a much more general scenario:
the \emph{multivariate} independence polynomial 
under a global condition incorporating all vertex activities
(i.e., the set $\cS$).  %
This yields a result for the univariate case \emph{stronger} than \cite{SrivastavaNote},
as our threshold $\GraphThreshold$ in \Corollary{univariate} depends on $G$ not just on $d$.

Starting with a paper of Barvinok~\cite{barvinok_computing_2015}, a
different approach to approximating partition functions in their
zero-free regions has emerged. Here, the analyticity of $\log Z$ in
the zero-free region of $Z$ is used to provide an additive
approximation to $\log Z$ (which translates to a multiplicative
approximation for $Z$) via a Taylor expansion truncated at an
appropriate degree.  While this method has by now been applied to
several classes of partition
functions~\cite{Barvinok15,BarvinokSoberon16a,BarvinokSoberon16b,Regts},
the resulting algorithms had turned out to be quasi-polynomial in the
earlier applications because of the lack of a method to efficiently
compute coefficients of terms of degree $\Omega(\log n)$ in the Taylor
expansion of $\log Z$ (which, in the case of the hard core model,
correspond to $\Omega(\log n)$-wise correlations among vertices in a
random independent set).  In work that was circulated concurrently
with an earlier manuscript~\cite{harvey2016computing} of this paper,
Patel and Regts~\cite{PatelRegts} showed that for a class of models,
these coefficients could be computed in polynomial time on bounded
degree graphs, and as a consequence obtained an FPTAS for some
partition functions in the region of analyticity of their
logarithms. This included the \emph{univariate} independence
polynomial in bounded-degree graphs of degree $d$ when the activity
$\lambda$ satisfies $\abs{\lambda} < \ShearerThreshold$.  While
their particular result for the univariate independence polynomial
seems to be implied by the observations in \cite{SrivastavaNote}
pointed out above, their technique applies also to other models.  We
note however that the present work has advantages over the result of
Patel and Regts in two qualitative aspects which are both crucial for
our applications.

First, as mentioned above, the running time of our FPTAS is
sub-exponential in $1/\alpha$ when the input activity (or more
generally, the input probability vector in the multivariate case) has
slack $\alpha$, whereas their algorithm has an exponential dependence
on $1/\alpha$.  Indeed, it appears that this exponential dependence on
$1/\alpha$ is intrinsic to their approach since the rate of
convergence of the power series they use for approximating $\log Z$ is
exactly $1 - \alpha$ for an activity that has slack $\alpha$, so that
the number of terms of the series that need to be evaluated for an
additive $\epsilon/n$-approximation to $\log Z$ (which corresponds to
a $(1 \pm \Theta(\epsilon/n))$ multiplicative approximation for $Z$)
is
$\Omega(\log_{1/(1-\alpha)}\log(n/\epsilon)) =
\Omega\inp{\frac{\log(n/\epsilon)}{\alpha}}$.  Since the complexity of
computing the $k$th term of this series in their framework is
$\Omega(d^k)$, this leads to a run time that is
$\inp{\nfrac{n}{\epsilon}}^{\Omega\inp{(1/\alpha)\cdot\log(d)}}$ (our
algorithm, in contrast will run in time
$\inp{\nfrac{n}{\alpha\epsilon}}^{O\inp{(1/\sqrt{\alpha})\cdot\log(d)}}$).
As discussed above, this improvement over Patel and
Regts~\cite{PatelRegts} is crucial for the applications considered in
this paper. Second, our paper explicitly handles the multivariate
independence polynomial.  The work of Patel-Regts describes an
algorithm focused on the univariate case, but they mention
\cite[pp.~13]{PatelRegts} that it can be generalized to all points
dominated by $\ShearerThreshold$ (the red square region in
\Cref{fig:2}). Though it seems plausible that their method can be
extended to be applicable throughout the Shearer region (albeit still
with an exponential dependence on the slack $\alpha$), to the best of
our knowledge, the algorithmic details for doing so have not yet been
published.
\subsection{Techniques}
\SectionName{techniques}

As in previous work, our starting point is the standard
self-reducibility argument showing that the problem of designing an
FPTAS for $Z_G(\lambda)$ is equivalent to the problem of designing an
FPTAS for computing the \emph{occupation ratio} $r_v$ of a given
vertex $v$ in any given graph $G$ (i.e., the ratio of the total
weights of the independent sets containing $v$ to the total weight of
those that do not). In previous work, these occupation
ratios are actual likelihood ratios that can be translated to the
probability that the vertex $v$ is occupied, but because of complex
weights, we do not have the luxury of this interpretation.  However,
as in earlier work, we can still write formal recurrences for these
occupation ratios.  As Weitz showed~\cite{Weitz},
this recursive computation is naturally structured as a tree which has
the same structure as the the tree $\SAWTree(v, G)$ whose nodes
correspond to self-avoiding walks in $G$ starting at $v$.\footnote{
  Interpreting the computation tree as the $\SAWTree$ tree
  will have less prominence in our analysis than in previous work.}
However the tree $\SAWTree(v, G)$
has size exponential in $|V|$, so this reduction does not
immediately give a polynomial time algorithm.

The crucial step in correlation decay algorithms is to show
that this tree can be truncated to polynomial size without incurring a
large error in the value computed at the root.  In earlier work
on positive activities, especially since Restrepo et
al.~\cite{restrepo11:_improv_mixin_condit_grid_count}, the standard
method for doing this has been to consider instead a recurrence for an
appropriately chosen function $\phi(r_v)$, known as the
\emph{message}, that is chosen so that when correlation decay
holds on the $d$-regular tree, each step of the recurrence on the
truncated $\SAWTree$ contracts the error introduced by the truncation
by a constant factor.  Thus, by expanding the tree to
$\ell = O(\log\frac{n}{\epsilon})$ levels (so $d^\ell = \poly{n, \frac{1}{\epsilon}}$ nodes),
one obtains a $(1 + O(\frac{\epsilon}{n}))$-approximation to the value at the root.

Our approach also involves truncating the computation tree at an
appropriate depth and then controlling the errors introduced due to
truncation. However, in part because of the lack of a uniform bound on
the vertex activities, we are not able to recreate a message-based approach.
Instead, we perform a direct amortization argument, where we define
recursively for each node in the computation tree an \emph{error
  sensitivity parameter}, and then measure errors at that node as a
fraction of the local error sensitivity parameter.  We then establish
two facts: (1) that the errors, when measured as a fraction of the
error sensitivity parameter, do indeed decay by a constant fraction
(roughly $(1 -\Omega(\sqrt{\alpha}))$ when the input probability
vector has slack $\alpha$) at each step of the recurrence (even though
the absolute errors may not), and (2) that the error sensitivity
parameter of the root node is not too large, so that the absolute
error of the final answer can be appropriately bounded.  The detailed
argument appears in \Cref{sec:corr-decay-with-2}.
For readers familiar with the earlier
manuscript~\cite{harvey2016computing} of this paper, we point out that
in that manuscript, the decay at each step of the recurrence could
only be shown to be of the form $(1 - O(\alpha))$; informal and formal
descriptions of how this is improved to $(1 - \Omega(\sqrt{\alpha}))$
in the present paper also appear in \Cref{sec:corr-decay-with-2}.

%

%

%


\section{Overview of the correlation decay method}
\label{sec:preliminaries}
\newcommand{\CIP}{ComputeIndependencePolynomial}
\newcommand{\OcR}{OccupationRatio}
\ifbool{twocol}{
    \renewcommand{\CIP}{ComputeIndepPoly}
    \renewcommand{\OcR}{OccRatio}
}

In this section we summarize the basic concepts and facts relating to Weitz's correlation decay method.
Since all the claims are simple or known, the proofs are omitted or appear in the appendix.

\paragraph{Partition functions and occupation ratios.}
Since we are primarily interested in the hard-core partition function
(i.e., independence polynomial) with \emph{negative} activities, 
it will be convenient to introduce the following notation.
Let $G = (V, E)$ be a fixed graph,
and let $\vec{p}$ be a fixed vector of (possibly complex)
parameters on the vertices of $V$.
For $S \subseteq V$, let
$\Ind(S) = \Ind_G(S) = \setst{ I \subseteq S }{ I \text{ independent in } G }$.
Following the notation of \cite{Kolipaka,HV15},
we define the \emph{alternating-sign independence polynomial} for any subset $S$ of $V$ to be
\begin{equation*}
  \qdown_S \:=\: \qdown_S(\vec{p})
  \defeq
   \sum_{
    \substack{I \in \Ind(S)}
  } (-1)^{\abs{I}} \prod_{v \in I}p_v.
\end{equation*}
Note that $\qdown_V(\vec{p}) = Z_G(-\vec{p})$.  The computation of
$\qdown_V$ will be reduced to the computation of 
\emph{occupation ratios} defined as follows.  For a pair
$(S, u)$, where $S \subseteq V$ and $u \in S$, the occupation ratio
$r_{S,u}$ is
\begin{equation}
  \label{eq:10}
  r_{S,u} = r_{S,u}(\vec{p}) \defeq 
 - \frac{\sum_{I \in \Ind(S), u \in I} (-1)^{|I|} \prod_{v \in I} p_v}{\sum_{I \in \Ind(S), u \notin I} (-1)^{|I|} \prod_{v \in I} p_v}.
\end{equation}
For readers familiar with the notation of
Weitz~\cite{Weitz}, we note that $r_{S,u}$ agrees
with his definition of occupation ratios except for the negative signs
used in the definition here.
Using the definition of $\qdown_S$, 
and the notation $\Gamma(u) = \setst{ v }{ u \text{ is a neighbor of } v \text{ in } G}$, $\Gamma^{+}(u) = \Gamma(u) \cup \inb{u}$,
we can also rewrite this quantity as
\begin{equation}
\label{eq:11}
r_{S, u} 
  = \frac{p_u \qdown_{S \setminus \Gamma^+(u)}}{\qdown_{S\setminus\inb{u}}}
  = - \frac{\qdown_{S} - \qdown_{S\setminus\inb{u}}}{\qdown_{S\setminus \inb{u}}}
  = 1 - \frac{\qdown_S}{\qdown_{S \setminus \inb{u}}}.
\end{equation}
A standard self-reducibility argument now reduces the computation of
$\qdown_V$ to that of the $r_{S,u}$.
\begin{claim}
\label{clm:selfreduc}
  Fix an arbitrary ordering $(v_1, v_2, \dots, v_n)$ of $V$,
  and let $S_i = \inb{v_i, v_{i+1}, \dots, v_n}$. We then have
  \begin{equation}
  \label{eq:q-product}
    \qdown_V = \prod_{i=1}^{n} \frac{\qdown_{S_i}}{\qdown_{S_{i+1}}}
    = \prod_{i=1}^n(1 - r_{S_i, v_i}).
  \end{equation}
\end{claim}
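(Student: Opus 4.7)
The proof is essentially a combination of telescoping and direct substitution, and I do not anticipate any substantive obstacle; the main thing to keep track of is that no denominator vanishes (which is implicit in the statement, as the $r_{S,u}$ are only defined when $\qdown_{S\setminus\{u\}} \neq 0$).

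For the first equality, the plan is to observe that the product
\[
\prod_{i=1}^n \frac{\qdown_{S_i}}{\qdown_{S_{i+1}}}
\]
telescopes. Indeed, by construction $S_1 = V$ and $S_{n+1} = \emptyset$, so the product collapses to $\qdown_{S_1}/\qdown_{S_{n+1}} = \qdown_V / \qdown_{\emptyset}$. Since $\Ind(\emptyset) = \{\emptyset\}$, the definition of $\qdown_S$ gives $\qdown_\emptyset = 1$, and we are done with the first equality. (Assuming $\qdown_{S_i} \neq 0$ for all $i$, so that all intermediate ratios are well defined; otherwise the statement is read as a formal identity.)

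For the second equality, I will appeal directly to the third expression for $r_{S,u}$ derived in Eq.~(\ref{eq:11}), namely $r_{S,u} = 1 - \qdown_S/\qdown_{S\setminus\{u\}}$, which rearranges to
\[
\frac{\qdown_S}{\qdown_{S\setminus\{u\}}} \;=\; 1 - r_{S,u}.
\]
Applying this with $S = S_i$ and $u = v_i$, and noting that by the definition of the ordering we have $S_i \setminus \{v_i\} = S_{i+1}$, each factor in the telescoping product above equals $1 - r_{S_i, v_i}$. Multiplying these identities for $i = 1, \dots, n$ yields the second equality and completes the proof.
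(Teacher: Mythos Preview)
Your proof is correct and follows essentially the same approach as the paper: both use the identity $1 - r_{S_i,v_i} = \qdown_{S_i}/\qdown_{S_{i+1}}$ from \cref{eq:11} together with the telescoping of $\prod_i \qdown_{S_i}/\qdown_{S_{i+1}}$ to $\qdown_V/\qdown_\emptyset = \qdown_V$. The only cosmetic difference is the order in which the two equalities are established.
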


\paragraph{Recurrences for the occupation ratios.}
An important observation in Weitz's
work~\cite{Weitz} was that the computation of
occupation ratios similar to the $r_{S, u}$ can be carried out over a
tree-like recursive structure. We follow a similar strategy, although
we find it convenient to work with a somewhat different notation.
\begin{definition}[\textbf{Child subproblems}]
  Given a pair $(S, u)$ with $S \subseteq V$ and $u \in S$, 
  and an arbitrary ordering $(v_1, v_2, \dots, v_k)$ of $\Gamma(u) \cap S$,
  we define the set of \emph{child} sub-problems
  $\children(S, u) = \children_{(v_1, v_2, \dots, v_k)}(S, u)$ to be
  \ifbool{twocol}{
    \begin{multline*}
      \children(S, u) ~=~ \left\{\, (S \setminus \inb{u},v_1) ,\, (S
        \setminus \inb{u,v_1},v_2) ,\right.\,\\
      \left.\dots ,\, (S \setminus \inb{u,v_1,\ldots,v_{k-1}}, v_k) \right\}
    \end{multline*}
.   }{
    \[
     \children(S, u) ~=~ \inb{\,
      (S \setminus \inb{u},v_1) ,\,
      (S \setminus \inb{u,v_1},v_2) ,\,
      \dots ,\,
      (S \setminus \inb{u,v_1,\ldots,v_{k-1}}, v_k)
      }. 
    \]
  }
\end{definition}%
\ifbool{twocol}{}{\vspace{6pt}}
Note that the ordering of neighbors used in the definition of
$\children(S, u)$ is completely arbitrary and orderings between
neighbors of different vertices do not share any consistency
constraints.  The recurrence relation for the computation of the
$r_{S, u}$, analogous to Weitz's computation tree, is then the
following:
\begin{lemma}[\textbf{Computational recurrence}]
\label{lem:recurrence}
  Fix a graph $G = (V, E)$ and a vector $\vec{p} \in \bC^V$ of complex
  parameters. Let $(S, u)$ be such that $u \in S$ and
  $S \subseteq V$.  Fix an arbitrary ordering $(v_1, v_2, \dots, v_k)$
  of $\Gamma(u) \cap S$
  and define the corresponding set
  $\children(S, u)$ of child subproblems.  We then have
  \begin{equation}
  \label{eq:tree-recursion}
    r_{S, u} = p_u \cdot \prod_{(S',u') \in \children(S,u)} \frac{1}{1 - r_{S', u'}}.
  \end{equation}
\end{lemma}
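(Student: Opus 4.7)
The plan is to combine the closed-form expression for $r_{S,u}$ given in \eqref{eq:11} with a telescoping argument over the child subproblems. The key observation is that the rightmost equality in \eqref{eq:11}, applied to any admissible pair $(S',u')$, rearranges to $1 - r_{S',u'} = \qdown_{S'}/\qdown_{S' \setminus \inb{u'}}$, so each factor $\frac{1}{1 - r_{S',u'}}$ on the right-hand side of \eqref{eq:tree-recursion} rewrites as a ratio of $\qdown$-values. Chaining these ratios over the children of $(S,u)$ collapses them telescopically into a single ratio, which is exactly what the leftmost form of $r_{S,u}$ in \eqref{eq:11} requires.

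Concretely, I would introduce the intermediate sets $T_0 = S \setminus \inb{u}$ and $T_i = S \setminus \inb{u, v_1, \ldots, v_i}$ for $1 \le i \le k$. By construction $T_i = T_{i-1} \setminus \inb{v_i}$, the $i$-th child subproblem is exactly $(T_{i-1}, v_i)$, and $T_k = S \setminus \Gamma^+(u)$. Applying the identity above to each child gives $1 - r_{T_{i-1}, v_i} = \qdown_{T_{i-1}}/\qdown_{T_i}$, so
\begin{equation*}
\prod_{i=1}^{k} \frac{1}{1 - r_{T_{i-1}, v_i}} \;=\; \prod_{i=1}^{k} \frac{\qdown_{T_i}}{\qdown_{T_{i-1}}} \;=\; \frac{\qdown_{T_k}}{\qdown_{T_0}} \;=\; \frac{\qdown_{S \setminus \Gamma^+(u)}}{\qdown_{S \setminus \inb{u}}}.
\end{equation*}
Multiplying by $p_u$ and invoking the leftmost form $r_{S,u} = p_u\,\qdown_{S \setminus \Gamma^+(u)}/\qdown_{S \setminus \inb{u}}$ from \eqref{eq:11} then produces precisely \eqref{eq:tree-recursion}.

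There is no substantive obstacle here: the lemma is essentially \Claim{selfreduc} applied to the particular linear ordering that peels off $u$ first and then its neighbors $v_1, \ldots, v_k$ in the chosen order, so the entire argument reduces to noticing which sets appear along the way. The only hygienic point to verify is that the denominators $\qdown_{T_{i-1}}$ are all nonzero, so that the occupation ratios $r_{T_{i-1}, v_i}$ are well-defined and the rearrangement of \eqref{eq:11} above is legal; this is automatic in the regime of \Theorem{mainShearerIntro}, where $\abs{\vec{p}} \in \cS_G$ forces $\qdown_W \neq 0$ for every $W \subseteq V$ (indeed, $\cS_G$ is defined precisely to make all such evaluations zero-free).
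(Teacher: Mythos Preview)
Your proof is correct and is essentially identical to the paper's own argument: both rewrite each factor $1/(1-r_{S',u'})$ as a ratio of $\qdown$-values via \eqref{eq:11} and telescope across the chain $S\setminus\{u\}, S\setminus\{u,v_1\},\dots,S\setminus\Gamma^+(u)$, recovering $r_{S,u}=p_u\,\qdown_{S\setminus\Gamma^+(u)}/\qdown_{S\setminus\{u\}}$. Your indexing $T_i$ is simply an off-by-one relabeling of the paper's $S_i$, and your added remark about nonvanishing denominators is a welcome bit of hygiene the paper leaves implicit.
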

\begin{remark}
  As observed by Weitz~\cite{Weitz}, each node in the
  computation tree of $r_{V, v}$ at depth $\ell$ corresponds to a
  unique self-avoiding walk of length $\ell$ starting at $v$.
\end{remark}
\paragraph{The Shearer region.}
\label{sec:occup-rati-shear}

The Shearer region was defined (implicitly) by Shearer \cite{Shearer} as follows.

\begin{definition}[\textbf{Shearer region and slack}]
\label{def:ShearerRegion}
Given a graph $G = (V,E)$, the \emph{Shearer region} $\cS$ is the set of vectors $\vec{p} \in (0,1)^V$
such that $\qdown_S(\vec{p}) > 0$ for all $S \subseteq V$.  A
probability vector $\vec{p}$ is said to have \emph{slack} $\alpha \geq 0$
if the vector $(1 + \alpha)\vec{p}$ is also a probability vector and
is contained in $\cS$.
\end{definition}

Shearer proved that this is the maximal region of probability vectors to which the LLL
can be extended.
The equivalence with our earlier definition \eqref{eq:8} is due to Scott and Sokal
\cite[Theorem 2.10]{ScottSokal}: the Shearer region can be equivalently defined by the absence of
roots in a certain polydisc, as follows.

\begin{theorem}
\label{thm:ShearerRegionEquiv}
  A probability vector $\vec{p} \in (0,1)^V$ is in the Shearer region
  (as defined in \Cref{def:ShearerRegion})
  if and only if for all vectors
  $\vec{z} = \inp{z_v}_{v \in V}$ of complex activities such
  that $\abs{z_v} \leq p_v$, it holds that $Z_G(\vec{z}) \neq 0$.
\end{theorem}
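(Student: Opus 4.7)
The plan is to prove the two directions separately. For the easier direction ($\Leftarrow$), I would use a continuity/homotopy argument. Assume that $Z_G(\vec{z}) \neq 0$ for all complex $\vec{z}$ with $\abs{z_v} \leq p_v$. For an arbitrary $S \subseteq V$, consider the vector $\vec{z}^{(S,t)}$ defined by $z_v^{(S,t)} = -t p_v$ for $v \in S$ and $z_v^{(S,t)} = 0$ otherwise, with $t \in [0,1]$. Since $Z_G$ restricted to vertices outside $S$ set to $0$ reduces to $Z_{G[S]}$, we have $Z_G(\vec{z}^{(S,t)}) = \qdown_S(t\vec{p})$. At $t=0$ this equals $1 > 0$, and by hypothesis it is never zero along the path $t \in [0,1]$, so by the intermediate value theorem $\qdown_S(\vec{p}) > 0$, giving the Shearer condition.

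For the forward direction ($\Rightarrow$), I would introduce a complex analogue of the occupation ratio. For $|z_v| \leq p_v$ and $(S,u)$ with $u \in S$, define
\[
  R_{S,u}(\vec{z}) \defeq -\frac{z_u \, Z_{G[S\setminus\Gamma^+(u)]}(\vec{z})}{Z_{G[S\setminus\{u\}]}(\vec{z})},
\]
which satisfies the factorization $Z_{G[S]}(\vec{z}) = Z_{G[S\setminus\{u\}]}(\vec{z}) \cdot (1 - R_{S,u}(\vec{z}))$, and the recurrence
\[
  R_{S,u}(\vec{z}) = -z_u \cdot \prod_{(S',u') \in \children(S,u)} \frac{1}{1 - R_{S',u'}(\vec{z})}
\]
whenever the denominators are nonzero. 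Observe that substituting $\vec{z} = -\vec{p}$ recovers exactly $R_{S,u}(-\vec{p}) = r_{S,u}(\vec{p})$ and the recurrence of \Lemma{recurrence}. Under the Shearer hypothesis $\qdown_S(\vec{p}) > 0$, the factorization and recurrence (at $\vec{z} = -\vec{p}$) force $r_{S,u}(\vec{p}) \in [0,1)$ for every $(S,u)$.

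The main work is then a simultaneous strong induction on $\abs{S}$ establishing the two assertions (i) $Z_{G[S]}(\vec{z}) \neq 0$, and (ii) $\abs{R_{S,u}(\vec{z})} \leq r_{S,u}(\vec{p}) < 1$ for every $u \in S$. The base case $S = \emptyset$ is trivial, and the case $\Gamma(u) \cap S = \emptyset$ gives $\abs{R_{S,u}(\vec{z})} = \abs{z_u} \leq p_u = r_{S,u}(\vec{p})$. In the inductive step, the inductive hypothesis applied to each child $(S',u') \in \children(S,u)$ gives both that $Z_{G[S'\setminus\{u'\}]}(\vec{z}) \neq 0$ (so $R_{S',u'}(\vec{z})$ is well-defined) and $\abs{R_{S',u'}(\vec{z})} \leq r_{S',u'}(\vec{p}) < 1$. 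Plugging into the recurrence and using the triangle inequality $\abs{1 - w} \geq 1 - \abs{w}$,
\[
  \abs{R_{S,u}(\vec{z})} \;\leq\; \abs{z_u} \prod_{(S',u')} \frac{1}{1 - \abs{R_{S',u'}(\vec{z})}} \;\leq\; p_u \prod_{(S',u')} \frac{1}{1 - r_{S',u'}(\vec{p})} \;=\; r_{S,u}(\vec{p}),
\]
which is strictly less than $1$. Consequently $1 - R_{S,u}(\vec{z}) \neq 0$, and the factorization identity together with $Z_{G[S\setminus\{u\}]}(\vec{z}) \neq 0$ (by induction on a smaller set) gives $Z_{G[S]}(\vec{z}) \neq 0$. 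Taking $S = V$ completes the proof. The one technical subtlety worth flagging — and the place where the argument is most delicate — is ensuring that all ratios in the recurrence are legitimately defined at the complex point $\vec{z}$; this is why the induction must be run on $\abs{S}$ and must prove nonvanishing of $Z_{G[\cdot]}$ and the bound on $\abs{R_{\cdot,\cdot}}$ simultaneously rather than sequentially.
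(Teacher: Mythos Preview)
The paper does not actually prove this theorem: it is stated with attribution to Scott and Sokal~\cite[Theorem 2.10]{ScottSokal}, and no proof is given in the paper itself. So there is no ``paper's own proof'' to compare against directly.

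That said, your argument is correct in both directions. The backward direction via the path $t \mapsto \qdown_S(t\vec{p})$ and the intermediate value theorem is clean and standard. For the forward direction, your simultaneous induction on $\abs{S}$ establishing both $Z_{G[S]}(\vec{z}) \neq 0$ and $\abs{R_{S,u}(\vec{z})} \leq r_{S,u}(\vec{p}) < 1$ is exactly right, and you correctly identify the delicate point: well-definedness of the ratios at the complex $\vec{z}$ must be carried along in the induction, not assumed. Your observation that $r_{S,u}(\vec{p}) \in (0,1)$ follows immediately from the two expressions $r_{S,u} = p_u\,\qdown_{S\setminus\Gamma^+(u)}/\qdown_{S\setminus\{u\}}$ (giving positivity) and $r_{S,u} = 1 - \qdown_S/\qdown_{S\setminus\{u\}}$ (giving the upper bound), with only the Shearer hypothesis $\qdown_T > 0$ for all $T$; no monotonicity of $\qdown$ is needed.

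It is worth noting that the core of your forward direction --- the inductive domination $\abs{R_{S,u}(\vec{z})} \leq r_{S,u}(\abs{\vec{p}})$ via the recurrence and the triangle inequality --- is essentially the content of \Cref{clm:comp-real} in the paper (there phrased for the quantities $r_{S,u}(\vec{p})$ and $\ra_{S,u}(\vec{p}) = r_{S,u}(\abs{\vec{p}})$). So while the paper outsources the theorem to Scott--Sokal, the machinery it develops in \Cref{sec:analysis} already contains your key inductive step, just not packaged as a proof of \Cref{thm:ShearerRegionEquiv}.
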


\medskip

Considering this, it is natural to extend the definition of the Shearer region to complex parameters.
In the following, we consider primarily this complex extension of the Shearer region.

\begin{definition}[\textbf{Complex Shearer region}]
Given a graph $G = (V,E)$, the \emph{complex Shearer region} is
\[
\cS \,=\, \cS_G \defeq \setst{ \vec{p} \in \CC^V }{ Z_G(\vec{z}) \neq 0
    ~\:\forall \vec{z} \in \CC^V ,\, |z_v| \leq |p_v| }.
\]
\end{definition}
We now state some important properties of the occupation ratios and
$\qdown_S$ in the setting of real, positive parameters $\vec{p}$.
These results are essentially translations
of the results of \cite{Shearer,ScottSokal} into our notation.

\begin{lemma}[\textbf{Monotonicity and positivity of $\qdown$}]
  \label{lem:breve-prop}
  Let $G = (V, E)$ be any graph and let $\vec{p} \in (0,1)^V$ be such that $\vec{p} \in \cS$.  Then
for any subsets $A$ and $B$ of $V$ such that $A \subseteq B$, we have
    $\qdown_A(\vec{p}) \geq \qdown_B(\vec{p}) > 0$.\label{item:breve-2}
\end{lemma}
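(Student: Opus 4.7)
The plan is to prove both statements together by reducing to a one-vertex peeling identity and then inducting on $\abs{B \setminus A}$. The positivity portion $\qdown_B(\vec{p}) > 0$ is in fact the very definition of $\vec{p} \in \cS$ (see \Cref{def:ShearerRegion}), so the only substantive work is the inequality $\qdown_A \geq \qdown_B$.

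First, I would establish the single-vertex deletion identity: for any $u \in S$,
\[
\qdown_S(\vec{p}) \;=\; \qdown_{S \setminus \inb{u}}(\vec{p}) \,-\, p_u \cdot \qdown_{S \setminus \Gamma^+(u)}(\vec{p}).
\]
This is a direct consequence of splitting the sum defining $\qdown_S$ according to whether $u \in I$ or $u \notin I$: contributions with $u \notin I$ are indexed by independent sets of $S \setminus \inb{u}$, giving $\qdown_{S \setminus \inb{u}}$, while contributions with $u \in I$ factor as $I = \inb{u} \cup I'$ with $I'$ an independent set of $S \setminus \Gamma^+(u)$, contributing the sign $-1$ and the activity $p_u$ times $\qdown_{S \setminus \Gamma^+(u)}$.

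Next, I would rearrange the identity to
\[
\qdown_{S \setminus \inb{u}}(\vec{p}) \,-\, \qdown_S(\vec{p}) \;=\; p_u \cdot \qdown_{S \setminus \Gamma^+(u)}(\vec{p}).
\]
Since $\vec{p} \in \cS$ implies $\qdown_T(\vec{p}) > 0$ for \emph{every} subset $T \subseteq V$ (by \Cref{def:ShearerRegion}), in particular $\qdown_{S \setminus \Gamma^+(u)}(\vec{p}) > 0$, and because $\vec{p} \in (0,1)^V$ gives $p_u > 0$, the right-hand side is non-negative. Hence deleting one vertex from $S$ weakly increases $\qdown$.

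Finally, I would conclude by a trivial induction on $k = \abs{B \setminus A}$. Enumerate the vertices of $B \setminus A$ as $u_1, u_2, \dots, u_k$, set $B_0 = B$ and $B_i = B_{i-1} \setminus \inb{u_i}$, so that $B_k = A$. Applying the one-vertex inequality at each step yields
\[
\qdown_B(\vec{p}) \,=\, \qdown_{B_0}(\vec{p}) \,\leq\, \qdown_{B_1}(\vec{p}) \,\leq\, \cdots \,\leq\, \qdown_{B_k}(\vec{p}) \,=\, \qdown_A(\vec{p}),
\]
completing the proof. There is no real obstacle here: the only subtlety is ensuring that each intermediate application of the peeling identity invokes $\qdown_{B_{i-1} \setminus \Gamma^+(u_i)} > 0$, which is immediate from the membership $\vec{p} \in \cS$ since $B_{i-1} \setminus \Gamma^+(u_i) \subseteq V$.
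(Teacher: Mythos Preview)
Your argument is correct. The deletion identity $\qdown_S = \qdown_{S\setminus\{u\}} - p_u\,\qdown_{S\setminus\Gamma^+(u)}$ is exactly \cref{eq:11} rewritten, and together with the definitional positivity of every $\qdown_T$ for $\vec{p}\in\cS$ it immediately gives the one-step monotonicity; the induction on $|B\setminus A|$ is routine.

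The paper does not actually prove this lemma: it simply cites Scott--Sokal~\cite[Corollary~2.27(b)]{ScottSokal} and \cite[Section~5.3]{HV15}. Your write-up is therefore a self-contained elementary proof rather than a different mathematical route---the deletion-contraction identity you use is the standard engine behind the cited results as well. One minor remark: since $p_u>0$ and $\qdown_{S\setminus\Gamma^+(u)}>0$, your right-hand side is in fact strictly positive, so you even get the strict inequality $\qdown_A > \qdown_B$ whenever $A\subsetneq B$; calling it merely ``non-negative'' undersells what you have shown, though of course it suffices for the stated lemma.
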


\begin{lemma}[\textbf{Occupation ratios are bounded}]
\label{lem:ratios-bounded}
  Let $G = (V, E)$ be any graph and let $\vec{p} \in (0,1)^V$ be such that $\vec{p} \in \cS$.
  Then, for any subset $S$ of $V$ and any vertex
  $u \in S$, we have $p_u \leq r_{S, u} < 1$.
\end{lemma}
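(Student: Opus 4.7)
The plan is to derive both bounds directly from the two equivalent forms of $r_{S,u}$ given in \Equation{11}, using the monotonicity and positivity of $\qdown$ supplied by \Lemma{breve-prop}.

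For the upper bound $r_{S,u} < 1$, I would use the identity $r_{S,u} = 1 - \qdown_S/\qdown_{S\setminus\{u\}}$. Since $\vec{p} \in \cS$ and $S, S\setminus\{u\} \subseteq V$, \Lemma{breve-prop} gives $\qdown_S(\vec{p}) > 0$ and $\qdown_{S\setminus\{u\}}(\vec{p}) > 0$, so the ratio being subtracted is strictly positive and $r_{S,u} < 1$ follows immediately.

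For the lower bound $r_{S,u} \geq p_u$, I would instead use the identity $r_{S,u} = p_u \qdown_{S \setminus \Gamma^+(u)} / \qdown_{S \setminus \{u\}}$. Note $S \setminus \Gamma^+(u) \subseteq S \setminus \{u\}$, so the monotonicity statement in \Lemma{breve-prop} (together with positivity) yields
\[
\qdown_{S \setminus \Gamma^+(u)}(\vec{p}) \;\geq\; \qdown_{S\setminus\{u\}}(\vec{p}) \;>\; 0,
\]
hence $\qdown_{S\setminus\Gamma^+(u)}/\qdown_{S\setminus\{u\}} \geq 1$ and $r_{S,u} \geq p_u$.

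There is really no obstacle here: both bounds fall out in one line once the appropriate form of $r_{S,u}$ is chosen. The only thing to be careful about is to invoke \Lemma{breve-prop} for the right pair of nested subsets (namely $S \setminus \Gamma^+(u) \subseteq S \setminus \{u\}$ for the lower bound, and just positivity for the upper bound), and to note that the positivity of $p_u$ is given by $\vec{p} \in (0,1)^V$.
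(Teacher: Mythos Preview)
Your proof is correct and follows essentially the same approach as the paper: both use the two forms of $r_{S,u}$ from \cref{eq:11} together with the monotonicity and positivity from \Lemma{breve-prop}. The paper's write-up just compresses the two cases into the single chain $0 < \qdown_V \leq \qdown_S \leq \qdown_{S\setminus\{u\}} \leq \qdown_{S\setminus\Gamma^+(u)}$, but the content is identical.
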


\paragraph{The correlation decay algorithm.}

Weitz's high-level approach to compute the independence polynomial
is to compute the partition function via a telescoping product analogous to \eqref{eq:q-product},
and to compute each occupation ratio via a recurrence analogous to \eqref{eq:tree-recursion}.
As discussed in \Section{techniques}, the recursion is truncated
to $\ell$ levels, and the analysis shows that the occupation
ratio at the root is not affected heavily by the occupation ratios where
the truncation occurred.

\comment{
Naturally, executing this recursion without limits would result
in exponential running time. The main purpose of the technique of
{\em correlation decay} is to prove that in fact we do not have to
execute this recursion to a very large depth, because the occupation
ratio at the root is not affected heavily by the occupation ratios for
distant vertices. The main technical difficulty
of the method is to prove that this indeed happens, and to quantify
the depth that is sufficient to obtain a good estimate at the root.
}

We follow that same high-level approach here, although the details
of the analysis are quite different. 
\Algorithm{main} presents pseudocode giving a compact description of the full algorithm.
The main procedure, \textsc{\CIP}($G,\vec{p},\ell$)
implements \eqref{eq:q-product} to estimate $\breve{q}_V(\vec{p})$
for a graph $G = (V,E)$, a parameter vector $\vec{p}$,
and a desired recursion depth $\ell$. (The required value of $\ell$
depends upon the the accuracy parameter $\epsilon$; see
\Cref{thm:mainShearer}). The recursive procedure
\textsc{\OcR}($G,\vec{p},\ell,S,u$)
implements \eqref{eq:tree-recursion} to estimate the occupation ratio $r_{S,u}$ 
by executing $\ell$ levels of recursion.

\algdef{SxN}{Myprocedure}{EndMyprocedure}[2]{\algorithmicprocedure\ \textproc{#1}\ifthenelse{\equal{#2}{}}{}{(#2)}}
\algdef{SxN}{MyFor}{MyEndFor}[1]{\algorithmicfor\ #1\ \algorithmicdo}

\begin{algorithm}[t]
\caption{Our algorithm to compute a $(1+\epsilon)$-approximation to $\qdown_V(\vec{p})$.
}
\label{alg:main}
\begin{algorithmic}[1]
\Myprocedure{\CIP}{$G=(V,E),\vec{p},\ell$}
\State Fix an ordering $V = ( v_1, v_2, \ldots, v_n )$
\State $\qdown \leftarrow 1$
\MyFor{$i \leftarrow 1,\ldots,n$}
\ifbool{twocol}{
    \State $\qdown \!\leftarrow\! \qdown \cdot (1 \!-\! \text{\sc \OcR}
        (G,\vec{p},\ell,\{v_i,...,v_n\},v_i))$
}{
    \State $\qdown \leftarrow \qdown \cdot (1 - \text{\sc \OcR}
        (G,\vec{p},\ell,\{v_i,\ldots,v_n\},v_i))$
}
\MyEndFor
\State \textbf{return} $\qdown$
\EndMyprocedure

\vspace{3pt}

\Myprocedure{\OcR}{$G,\vec{p},\ell,S,u$}
\State \textbf{if} $\ell=0$ \textbf{then return} $0$
\ifbool{twocol}{
    \State Let $(w_1, \ldots, w_k)$ be a fixed ordering of $S \cap \Gamma(u)$
}{
    \State Let $(w_1, w_2, \ldots, w_k)$ denote a fixed ordering of $S \cap \Gamma(u)$
}
\State $r \leftarrow p_u$
\MyFor{$i \leftarrow 1,\ldots,k$}
    \ifbool{twocol}{
        \State $r \leftarrow \frac{ r }{ 1 \,-\, \text{\sc \OcR}
            (G,\, \vec{p},\, \ell-1,\, S \setminus \{u,w_1,\ldots,w_{i-1}\},\, w_i) }$
    }{
        \State $r \leftarrow r / (1 - \text{\sc \OcR}
            (G,\vec{p},\ell-1,S \setminus \{u,w_1,\ldots,w_{i-1}\},w_i))$
    }
\MyEndFor
\State \textbf{return} $r$
\EndMyprocedure
\end{algorithmic}
\end{algorithm}

%
%
%
%


\section{The analysis}
\label{sec:analysis}

Let us turn to the analysis of the correlation decay method in our setting.
The notion of {\em correlation decay} in the hard core model refers to the
decaying dependence of the occupation probability at a given vertex $v$
on the conditioning on a set of vertices at a certain distance from $v$.
In the setting of positive activities $\vec{z}$ \cite{Weitz},
these correlations are closely tied to the decay of errors in the computation tree for $r_{S,u}$
described in (\ref{eq:tree-recursion}). For negative or general complex activities,
the occupation ratios $r_{S,u}$ do not have a direct interpretation in terms of occupation probabilities.
However, the analysis of errors in the computation tree is reminiscent of that of
\cite{Weitz} and hence we still refer to it as {\em correlation decay}. 

Unlike Weitz's setting \cite{Weitz}, where all vertex activities are the same,
and the bounds are derived uniformly for all graphs with degrees bounded by $d$, here we are aiming
for a more refined analysis for a particular graph $G$ and a (possibly non-uniform) vector $\vec{p}$.
In Weitz's setting, the worst-case errors in the recursive tree can be proved to decay
in a uniform fashion (possibly after an application of an appropriate
potential function or message). %
That is not the case here, since the local structure of $G$ and $\vec{p}$ might cause the errors
to locally increase, even if the computation eventually converges. 
Hence it is critical to identify a local sensitivity parameter that describes how the errors
propagate in the recursive tree.

\subsection{The error sensitivity parameter}
\label{sec:corr-decay-with}

For simplicity of notation, we fix the input graph $G = (V, E)$
and an ordering on vertices $V = \{v_1,\ldots, v_n\}$ for the rest of this section.
Recall that $\vec{p} = (p_1, p_2, \dots, p_n)$ denotes a vector of vertex parameters
in the complex plane. (We have $\vec{p} = -\vec{z}$ where $\vec{z}$ are the usual activities
in the hard core model.)
We use $\abs{\vec{p}}$ to denote the vector $(\abs{p_1}, \abs{p_2}, \dots, \abs{p_n})$.
Note that $\vec{p}$ is in the Shearer region $\cS$ if and only if $\abs{\vec{p}}$ is in
$\cS$. %

First, let us consider how the errors propagate throughout the recursive computation
in Algorithm~\ref{alg:main}. Let $R_{S,u}$ an estimate obtained by the algorithm
for the occupation ratio $r_{S,u} = 1 - \breve{q}_S / \breve{q}_{S \setminus \{u\}}$.
We are interested in how the additive approximation error $|r_{S,u} - R_{S,u}|$ propagates
in the recursive computation. 

It turns out that $\vec{p}$ being real positive is in some sense the worst case; to simplify
notation, we define for $u \in S$ and $\vec{p} \in \CC^V$,
  \[
	  \ra_{S, u}(\vec{p}) \defeq r_{S, u}(\abs{\vec{p}}).
  \]
  The reader who wishes to understand the main ideas while avoiding
  some mild technical details may henceforth assume that $\vec{p}$ is
  a real positive vector, and therefore
  $\ra_{S,u}(\vec{p}) = r_{S,u}(\vec{p})$.  Indeed, an easy recursive
  argument shows that when $\vec{p} \in \cS$, $\ra_{S, u}(\vec{p})$
  dominates both $r_{S, u}(\vec{p})$ and $R_{S, u}(\vec{p})$ (see
  \Cref{clm:comp-real}):
  \begin{displaymath}
    \abs{r_{S, u}(\vec{p})} \leq \ra_{S, u}(\vec{p}) \text{ and }
    \abs{R_{S, u}(\vec{p})} \leq \ra_{S, u}(\vec{p}).
  \end{displaymath}

Now, from the mean value theorem, we obtain the following recursive bound.

\begin{claim}
\ClaimName{errorPropagate}
Let $\vec{p}$ lie in the complex Shearer region. For a node $(S,u)$ with children $\children(S,u)$ in the recursive computation tree,
we have 
\begin{equation}
    \label{eq:4}
    \abs{r_{S, u} - R_{S, u}} \leq 
    \ra_{S, u}\sum_{c \in \children(S, u)}\frac{\abs{r_c - R_c}}{1 - \ra_c}.
\end{equation}
\end{claim}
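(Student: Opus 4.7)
\textbf{Proof plan for Claim~\ref{clm:errorPropagate}.} The plan is to avoid an actual differentiation argument and instead carry out an elementary telescoping expansion on the recurrence of Lemma~\ref{lem:recurrence}, combined with the domination property $|r_c|, |R_c| \le \ra_c$ that the excerpt attributes to the auxiliary Claim on comparison with the real case. Enumerate the children as $\children(S,u) = \{c_1,\ldots,c_k\}$, write $r_i := r_{c_i}$ and $R_i := R_{c_i}$, and abbreviate $\ra_i := \ra_{c_i}$. By Lemma~\ref{lem:recurrence} and the definition of $R_{S,u}$ used in Algorithm~\ref{alg:main},
\[
r_{S,u} \,-\, R_{S,u} \;=\; p_u\!\left( \prod_{i=1}^k \frac{1}{1-r_i} \,-\, \prod_{i=1}^k \frac{1}{1-R_i}\right).
\]

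The first step is a standard product-telescoping identity: for any complex numbers $a_i,b_i$ with $a_i,b_i \neq 0$,
\[
\prod_{i=1}^{k} \frac{1}{1-r_i} - \prod_{i=1}^{k} \frac{1}{1-R_i}
\;=\; \sum_{i=1}^{k} \left(\prod_{j<i} \frac{1}{1-r_j}\right) \frac{r_i - R_i}{(1-r_i)(1-R_i)} \left(\prod_{j>i}\frac{1}{1-R_j}\right),
\]
obtained by interpolating one factor at a time and using $\tfrac{1}{1-r_i} - \tfrac{1}{1-R_i} = \tfrac{r_i-R_i}{(1-r_i)(1-R_i)}$.

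The second step is to take absolute values and invoke the domination hypothesis. For each child $c_i$, the analogue of Lemma~\ref{lem:ratios-bounded} applied to $|\vec{p}|$ gives $0 \le \ra_i < 1$, and the comparison $|r_i|,|R_i| \le \ra_i$ (inducted along the computation tree via Claim~\ref{clm:comp-real}) yields by the triangle inequality
\[
|1 - r_i| \,\ge\, 1 - |r_i| \,\ge\, 1 - \ra_i, \qquad |1 - R_i| \,\ge\, 1 - \ra_i.
\]
Substituting these lower bounds into each factor of the telescoping sum, and pulling the $|p_u|$ outside, produces
\[
|r_{S,u} - R_{S,u}| \;\le\; |p_u| \sum_{i=1}^{k} \left(\prod_{j \neq i}\frac{1}{1-\ra_j}\right) \frac{|r_i - R_i|}{1-\ra_i}
\;=\; |p_u|\! \left(\prod_{j=1}^{k} \frac{1}{1-\ra_j}\right)\! \sum_{i=1}^{k}\frac{|r_i-R_i|}{1-\ra_i}.
\]

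The third step recognises the prefactor. Applying the real-positive recurrence $\ra_{S,u} = |p_u| \prod_{j} \frac{1}{1-\ra_j}$ (which is Lemma~\ref{lem:recurrence} specialised to $|\vec{p}|$) collapses the prefactor to $\ra_{S,u}$, yielding exactly \eqref{eq:4}. The mildly technical step is really just the domination $|r_i|,|R_i| \le \ra_i$; this is a routine induction on the recursion depth, with the base case $R = 0$ handled trivially because $0 \le \ra_i$, and the inductive step reusing the same triangle-inequality bound $|1 - r_i| \ge 1 - \ra_i$ inside the recurrence \eqref{eq:tree-recursion}. No differentiation or mean-value estimate in the complex plane is actually needed, since the product-rule identity used above plays exactly that role algebraically.
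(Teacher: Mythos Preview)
Your approach is sound and arrives at the same bound as the paper, but by a more elementary route. The paper invokes a complex mean value inequality (their Theorem~C.3): it parametrizes the segment $t\mapsto f(t\vec{r}+(1-t)\vec{R})$ with $f(\vec{x})=p_u\prod_i(1-x_i)^{-1}$, bounds $|g'(t)|$ via the product rule, and then uses $|x_i|,|y_i|\le\ra_i$ to get $|f(\vec{r})-f(\vec{R})|\le|f(\vec{\ra})|\sum_i\frac{|r_i-R_i|}{1-\ra_i}$. Your telescoping expansion is the discrete version of the same product-rule computation and avoids the calculus packaging entirely; both arguments rely on exactly the same domination input $|r_i|,|R_i|\le\ra_i<1$ from \Cref{clm:comp-real}.

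One bookkeeping slip: in the $i$th telescoping term the middle factor $\frac{1}{(1-r_i)(1-R_i)}$ contributes \emph{two} powers of $\frac{1}{1-\ra_i}$ after bounding, so the correct intermediate expression is
\[
|p_u|\sum_{i=1}^{k}\left(\prod_{j\neq i}\frac{1}{1-\ra_j}\right)\frac{|r_i-R_i|}{(1-\ra_i)^{2}},
\]
not $(1-\ra_i)^{-1}$. As written, your displayed equality is false (the two sides differ by exactly this missing factor in each summand); once the square is restored the equality holds and the final bound $\ra_{S,u}\sum_i\frac{|r_i-R_i|}{1-\ra_i}$ follows as you claim.
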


\begin{proof}
Recall Lemma~\ref{lem:recurrence}, $r_{S,u} = p_u \prod_{c \in \children(S,u)} \frac{1}{1-r_c}$.
Using the mean value theorem~(\Cref{thm:mean-value}) with $\gamma_c \defeq \ra_c$, we obtain
\ifbool{twocol}{
  \begin{align*}
    \abs{r_{S, u} - R_{S, u}}
    &\leq \abs{p_u}\prod_{c \in \children(S,
      u)}\frac{1}{1 - \ra_c}
    \sum_{c' \in \children(S, u)}\frac{\abs{r_{c'} - R_{c'}}}{1 -
      \ra_{c'}}\\
    &= \ra_{S, u}\sum_{c \in \children(S, u)}\frac{\abs{r_c - R_c}}{1 - \ra_c}.
  \end{align*}
}{
  $$
    \abs{r_{S, u} - R_{S, u}} \leq \abs{p_u}\prod_{c \in \children(S,
      u)}\frac{1}{1 - \ra_c}
    \sum_{c' \in \children(S, u)}\frac{\abs{r_{c'} - R_{c'}}}{1 - \ra_{c'}} =
    \ra_{S, u}\sum_{c \in \children(S, u)}\frac{\abs{r_c - R_c}}{1 - \ra_c}.
    $$
  }
  Note that the conditions imposed in the hypothesis of
    \Cref{thm:mean-value} hold, since, as pointed out above,
    $\abs{r_{S, u}}, \abs{R_{S, u}} \leq \ra_{S, u}$ for all nodes
    $(S, u)$ in the computation tree (see \Cref{clm:comp-real}).
\end{proof}

Our error sensitivity parameter is defined to capture how errors propagate under this recursive bound.
An important observation is that the derivative of $\ra_{S,u}((1+t) \vec{p})$ with respect to $t$ satisfies
a recurrence very similar to \Claim{errorPropagate}, and this is the main motivation behind the 
following definition.

\begin{definition}[\textbf{Error sensitivity parameter}]
\label{def:beta}
  The \emph{error sensitivity parameter}
  $\beta_{S, u} \defeq \beta_{S, u}(\vec{p})$ is defined as
  \begin{equation}
  	\label{eq:beta}
    \beta_{S, u}(\vec{p}) \defeq \left.\diff{\ra_{S,u}((1+t)\vec{p})}{t}\right\vert_{t = 0}.
  \end{equation}
\end{definition}

\begin{claim}
  \ClaimName{bRecurrence} Let $(S, u)$ be node in the computation tree. Then
  \begin{displaymath}
    \beta_{S,u}
    = \ra_{S,u} \cdot \inp{1 + \sum_{c \in \children(S,u)} \frac{\beta_c}{1 - \ra_c}}.
  \end{displaymath}
\end{claim}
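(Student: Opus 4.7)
The plan is a direct calculation via logarithmic differentiation of the recurrence from \Cref{lem:recurrence}.

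First, I would note that the recurrence in \Cref{lem:recurrence} applies to $r_{S,u}(\vec{p}')$ for any complex parameter vector $\vec{p}'$, and in particular for $\vec{p}' = (1+t)\abs{\vec{p}}$, which is real and positive for $t > -1$. Since $\ra_{S,u}(\vec{q}) = r_{S,u}(\abs{\vec{q}})$ by definition, and since scaling by the nonnegative real number $1+t$ does not change the argument structure (i.e.\ $\abs{(1+t)\vec{p}} = (1+t)\abs{\vec{p}}$), applying the recurrence yields
\[
\ra_{S,u}((1+t)\vec{p}) \;=\; (1+t)\abs{p_u} \prod_{c \in \children(S,u)} \frac{1}{1 - \ra_c((1+t)\vec{p})}.
\]
Crucially, the child set $\children(S,u)$ is determined by the graph $G$ and the fixed orderings, so it is independent of $t$.

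Next, I would define $f(t) \defeq \ra_{S,u}((1+t)\vec{p})$ and $f_c(t) \defeq \ra_c((1+t)\vec{p})$ for each $c \in \children(S,u)$. By \Cref{lem:breve-prop} and \Cref{lem:ratios-bounded} applied to the real positive parameter vector $(1+t)\abs{\vec{p}}$ (which remains in $\cS$ for $t$ in a small neighborhood of $0$, since $\cS$ is open), all of $f(t)$ and the $f_c(t)$ are positive and each $f_c(t) < 1$, so the quantities inside the logarithm below are positive. Taking logs in the displayed identity gives
\[
\log f(t) \;=\; \log(1+t) + \log\abs{p_u} - \sum_{c \in \children(S,u)} \log\bigl(1 - f_c(t)\bigr).
\]
Differentiating both sides with respect to $t$ and evaluating at $t=0$, using $f(0) = \ra_{S,u}$, $f_c(0) = \ra_c$, $f'(0) = \beta_{S,u}$, and $f'_c(0) = \beta_c$, one obtains
\[
\frac{\beta_{S,u}}{\ra_{S,u}} \;=\; 1 + \sum_{c \in \children(S,u)} \frac{\beta_c}{1 - \ra_c},
\]
and multiplying through by $\ra_{S,u}$ gives the claim.

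There is essentially no obstacle here beyond bookkeeping: the only subtle point is justifying that one may legitimately differentiate (i.e.\ that $\ra_{S,u}((1+t)\vec{p})$ is a smooth function of $t$ near $t=0$), which follows because $\ra_{S,u}$ is a rational function of its arguments whose denominator $\qdown_{S \setminus \{u\}}(\abs{\vec{p}})$ is strictly positive by \Cref{lem:breve-prop} whenever $(1+t)\abs{\vec{p}} \in \cS$, and openness of $\cS$ ensures this holds on an open interval of $t$ around $0$.
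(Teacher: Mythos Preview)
Your proof is correct and takes essentially the same approach as the paper: both differentiate the recurrence $\ra_{S,u}((1+t)\vec{p}) = (1+t)\abs{p_u}\prod_c (1-\ra_c((1+t)\vec{p}))^{-1}$ at $t=0$, the only difference being that you use logarithmic differentiation while the paper applies the product rule directly. Your version adds a slightly more careful justification of smoothness; the one caveat is that taking $\log$ requires $\ra_{S,u}>0$, i.e.\ $\abs{p_u}>0$, but in the degenerate case $\abs{p_u}=0$ both sides of the identity vanish trivially.
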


\begin{proof} By a direct calculation using the definition of $\beta_{S, u}$,
  \ifbool{twocol}{
    \begin{align*}
    &\beta_{S,u}
    \defeq \diff{\ra_{S,u}((1+t)\vec{p})}{t} \Big|_{t=0}\\
    &= \diff{}{t}
      \Bigg(
      (1+t) \abs{p_u} \cdot
      \prod_{c \in \children(S,u)}
      \frac{1}{1 - \ra_c((1+t)\vec{p})}
      \Bigg)
      \Bigg|_{t=0} \\
    &\:= \prod_{c \in \children(S,u)} \frac{1}{1 - \ra_c(\vec{p})} \cdot
      \Big(\abs{p_u} + \abs{p_u} \sum_{c \in \children(S,u)}
      \frac{\beta_c}{1 - \ra_c} \Big) \\
    &\:= \ra_{S,u}(\vec{p}) \cdot \Big(
      1 + \sum_{c \in \children(S,u)}
      \frac{\beta_c}{1 - \ra_c}
      \Big). \qedhere
  \end{align*}
  }{
  \begin{align*}
    &\beta_{S,u}
    \defeq \diff{\ra_{S,u}((1+t)\vec{p})}{t} \Big|_{t=0}
      = \diff{}{t}
      \Bigg(
      (1+t) \abs{p_u} \cdot
      \prod_{c \in \children(S,u)}
      \frac{1}{1 - \ra_c((1+t)\vec{p})}
      \Bigg)
      \Bigg|_{t=0} \\
    &\:=
      \prod_{c \in \children(S,u)} \frac{1}{1 - \ra_c(\vec{p})}  \cdot
      \Bigg( \abs{p_u}
      + \Big( (1+t)\abs{p_u} \Big) \cdot \!\!
      \sum_{c \in \children(S,u)} \frac{1}{1 - \ra_c((1+t)\vec{p})} \cdot
      \diff{\ra_c((1+t)\vec{p})}{t} \Bigg)\Bigg|_{t=0} \\
    &\:= \prod_{c \in \children(S,u)} \frac{1}{1 - \ra_c(\vec{p})} \cdot
      \Big(\abs{p_u} + \abs{p_u} \sum_{c \in \children(S,u)}
      \frac{\beta_c}{1 - \ra_c} \Big) \\
    &\:= \ra_{S,u}(\vec{p}) \cdot \Big(
      1 + \sum_{c \in \children(S,u)}
      \frac{\beta_c}{1 - \ra_c}
      \Big). \qedhere
  \end{align*}
}
\end{proof}

We will now prove several additional properties of the error sensitivity parameter.

\begin{lemma}
\label{lem:r-convex}
Fix a parameter vector $\vec{p} \in \cS$. Let $t_0$ be such that for
$0 \leq t \leq t_{0}$, $(1 + t)\vec{p}$ is also in $\cS$.
Define $\beta_{S, u}(\vec{p}, t) = \diff{}{t}\ra_{S, u}((1 + t)\vec{p})$. 
(Note that Definition~\ref{def:beta} is consistent with
 $\beta_{S, u}(\vec{p}) = \beta_{S,u}(\vec{p},0)$.)
Then,
for all nodes $(S, u)$ in the computation tree, $\beta_{S, u}(\vec{p}, t)$
is a non-negative, non-decreasing function of $t$ for $t \in [0, t_0]$.
Thus, the map $t \mapsto \ra_{S, u}((1+t)\vec{p})$ is non-decreasing and convex
over the same domain.
\end{lemma}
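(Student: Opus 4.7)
My plan is to prove all three assertions simultaneously by induction on $|S|$. In the base case $\Gamma(u) \cap S = \emptyset$, so $\children(S,u) = \emptyset$ and \Lemma{recurrence} collapses to $\ra_{S,u}((1+t)\vec{p}) = (1+t)|p_u|$. This is affine in $t$, so $\beta_{S,u}(\vec{p}, t) = |p_u|$ is a non-negative constant (hence non-decreasing), and $t \mapsto \ra_{S,u}((1+t)\vec{p})$ is trivially non-decreasing and convex.

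For the inductive step, the first task is to generalize \Claim{bRecurrence} from $t = 0$ to arbitrary $t \in [0, t_0]$. I would apply \Claim{bRecurrence} at the shifted parameter $\vec{q} \defeq (1+t)\vec{p}$, which still lies in $\cS$ by hypothesis. A short chain-rule calculation (writing $(1+s)\vec{q} = (1+\tau)\vec{p}$ with $\tau = (1+s)(1+t) - 1$, so $d\tau/ds|_{s=0} = 1+t$) shows that $\beta_{S,u}((1+t)\vec{p}) = (1+t)\,\beta_{S,u}(\vec{p}, t)$, and likewise for each child. Substituting these identities into \Claim{bRecurrence} at $\vec{q}$, dividing by $(1+t)$, and using the product form $\ra_{S,u}((1+t)\vec{p}) = (1+t)|p_u|\prod_c (1-\ra_c((1+t)\vec{p}))^{-1}$ from \Lemma{recurrence}, I obtain
\[
\beta_{S,u}(\vec{p}, t) = |p_u| \prod_{c \in \children(S,u)} \frac{1}{1 - \ra_c((1+t)\vec{p})} \cdot \left[\, 1 + (1+t) \sum_{c \in \children(S,u)} \frac{\beta_c(\vec{p}, t)}{1 - \ra_c((1+t)\vec{p})} \,\right].
\]

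The conclusion then falls out from closure of non-negative, non-decreasing functions under sums and products. By \Lemma{ratios-bounded}, $\ra_c((1+t)\vec{p}) < 1$, so each reciprocal $(1-\ra_c((1+t)\vec{p}))^{-1}$ is strictly positive. By the inductive hypothesis applied to each child subproblem (whose ground set is strictly smaller than $S$), $\ra_c((1+t)\vec{p})$ is non-decreasing in $t$, and $\beta_c(\vec{p},t) \geq 0$ is non-decreasing in $t$. Every factor on the right-hand side of the display is therefore non-negative and non-decreasing in $t$, so their sums and products are too, which yields $\beta_{S,u}(\vec{p}, t) \geq 0$ and non-decreasing in $t$. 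Monotonicity and convexity of $t \mapsto \ra_{S,u}((1+t)\vec{p})$ then follow at once, since $\beta_{S,u}(\vec{p}, t)$ is its first derivative.

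The only delicate step is the chain-rule bookkeeping that upgrades \Claim{bRecurrence} from a statement at a fixed vector to a recurrence indexed by the scalar $t$; once that identity is in hand, the induction is purely structural and invokes no further analytic tools beyond the strict bound $\ra_c < 1$ provided by \Lemma{ratios-bounded}.
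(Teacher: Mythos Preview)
Your proof is correct and follows essentially the same route as the paper: induction on $|S|$, establishing a recurrence for $\beta_{S,u}(\vec{p},t)$ at general $t$, and then reading off non-negativity and monotonicity from the inductive hypothesis together with $\ra_c < 1$ from \Cref{lem:ratios-bounded}. The only cosmetic difference is that the paper re-derives the recurrence by direct differentiation (repeating the product-rule computation of \Cref{clm:bRecurrence} at general $t$), whereas you obtain the same formula by applying \Cref{clm:bRecurrence} at the shifted vector $(1+t)\vec{p}$ and doing the chain-rule bookkeeping; both paths yield the identical expression and the remainder of the argument is the same.
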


\begin{proof}
  We induct on $|S|$.
  The base case is $S=\inb{u}$, so $\ra_{\{u\},u}((1+t)\vec{p}) = (1+t)\abs{p_u}$.
  We therefore have
  \begin{displaymath}
    \beta_{\{u\}, u}(\vec{p}, t) = \diff{}{t} \ra_{\{u\},u}((1+t)\vec{p}) = |p_u|,
  \end{displaymath}
  which is a constant (and hence non-decreasing), non-negative function of $t$.
  
  For the inductive case, we use a recursive formula for
  $\beta_{S, u}(p,t)$ as in the proof of \Cref{clm:bRecurrence}. We have
  \ifbool{twocol}{
    \begin{multline*}
      \beta_{S, u}(\vec{p}, t) = \abs{p_u} \cdot \hspace{-3ex}
        \prod_{(S',u') \in \children(S, u)} \frac{1}{1 -
          \ra_{S',u'}((1 + t)\vec{p})} \\
        + \ra_{S, u}((1+t)\vec{p})
        \cdot \hspace{-3ex} \sum_{(S',u') \in \children(S,u)}
        \frac{\beta_{S',u'}(\vec{p}, t)}{1 -
          \ra_{S',u'}((1+t)\vec{p})}.
    \end{multline*}
  }{
    \[
      \beta_{S, u}(\vec{p}, t)
      = {\abs{p_u} \cdot \hspace{-3ex} \prod_{(S',u') \in \children(S, u)}
        \frac{1}{1 - \ra_{S',u'}((1 + t)\vec{p})} \:+\: \ra_{S, u}((1+t)\vec{p})
        \cdot \hspace{-3ex} \sum_{(S',u') \in \children(S,u)}
        \frac{\beta_{S',u'}(\vec{p}, t)}{1 - \ra_{S',u'}((1+t)\vec{p})} }.
    \]
  }
  \noindent
  By the induction hypothesis, $\beta_{S',u'}(\vec{p}, t) \geq 0$ for each
  $|S'| < |S|$, $u' \in S'$. Since $(1+t)\vec{p} \in \cS$, \Lemma{ratios-bounded} implies
  $0 \leq \ra_{S',u'}((1+t)\vec{p}) < 1$.
  Therefore $\beta_{S, u}(\vec{p}, t) \geq 0$ as well.
  Moreover, the inductive hypothesis
  implies that both $\ra_{S',u'}((1+t)\vec{p})$ and $\beta_{S', u'}(\vec{p}, t)$ are
  non-decreasing in $t$.  Since the whole expression is monotone in
  $\ra_{S',u'}$ and $\beta_{S', u'}(\vec{p}, t)$, the left-hand side
  $\beta_{S, u}(\vec{p}, t)$ is also a non-decreasing function of $t$.
\end{proof}

We can now prove the following relations between the $\beta_{S,u}$
and the $\ra_{S,u}$.

\begin{lemma}
  \label{lem:beta-prop}
  Let $\vec{p} \in \bC^V$ and $\alpha > 0$ satisfy
  $(1 + \alpha)\vec{p} \in \cS$.  We then have the following
  inequalities for all nodes $(S, u)$ in the computation tree. (We use
  the shorthand notation $\ra_{S, u} = \ra_{S, u}(\vec{p})$ and
  $\beta_{S, u} = \beta_{S, u}(\vec{p})$.)
  \begin{enumerate}
  \item $\beta_{S, u} < (1 - \ra_{S, u})/\alpha$. \label{item:1}
  \item $\ra_{S, u} \leq \beta_{S, u} \leq (1+d_u/\alpha)\cdot \ra_{S,
      u}$, where $d_u$ is the degree of the vertex $u$ in $G$. \label{item:2}
  \item
    $\ra_{S, u} \le \frac{1}{1 + \alpha}\ra_{S, u}((1 + \alpha)\vec{p}) <
    \frac{1}{1 + \alpha}$. \label{item:3}
  \end{enumerate}
\end{lemma}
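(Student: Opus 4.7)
The plan is to prove the three items in an order that exposes the logical dependencies between them, relying on Lemma~\ref{lem:r-convex}, Lemma~\ref{lem:ratios-bounded}, and Claim~\ref{clm:bRecurrence}.

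First, I would establish item (3) by a straightforward induction on $|S|$. The base case $S = \{u\}$ is trivial because $\ra_{\{u\}, u}((1+t)\vec{p}) = (1+t)|p_u|$ scales exactly with $(1+t)$. For the inductive step, I use the recurrence
\[
  \ra_{S,u}((1+t)\vec{p}) \:=\: (1+t)|p_u| \prod_{c \in \children(S,u)} \frac{1}{1 - \ra_c((1+t)\vec{p})}.
\]
The inductive hypothesis gives $\ra_c(\vec{p}) \le \tfrac{1}{1+\alpha}\ra_c((1+\alpha)\vec{p}) \le \ra_c((1+\alpha)\vec{p})$, so $1 - \ra_c(\vec{p}) \ge 1 - \ra_c((1+\alpha)\vec{p})$, and each factor in the product at $t=0$ is dominated by the corresponding factor at $t=\alpha$. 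Multiplying gives $(1+\alpha)\ra_{S,u}(\vec{p}) \le \ra_{S,u}((1+\alpha)\vec{p})$, and the strict bound $\ra_{S,u}((1+\alpha)\vec{p}) < 1$ is immediate from \Lemma{ratios-bounded} applied at the point $(1+\alpha)\vec{p} \in \cS$.

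Next, item (1) drops out of convexity, with essentially no induction needed. By Lemma~\ref{lem:r-convex}, the map $t \mapsto \ra_{S,u}((1+t)\vec{p})$ is differentiable and convex on $[0, \alpha]$, so the tangent inequality at $t=0$ reads
\[
  \ra_{S,u}((1+\alpha)\vec{p}) \:\ge\: \ra_{S,u}(\vec{p}) + \alpha\,\beta_{S,u}.
\]
Combining this with $\ra_{S,u}((1+\alpha)\vec{p}) < 1$ from Lemma~\ref{lem:ratios-bounded} gives $\alpha\,\beta_{S,u} < 1 - \ra_{S,u}$, which is exactly item (1). The key conceptual observation here is that item (1) is really a one-line consequence of the tangent inequality plus Shearer-region boundedness, not a recursive statement.

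Finally, item (2) follows by plugging (1) into the recurrence of Claim~\ref{clm:bRecurrence}. The lower bound $\ra_{S,u} \le \beta_{S,u}$ is immediate from that recurrence, because $\beta_c \ge 0$ (Lemma~\ref{lem:r-convex}) and $\ra_c \in [0,1)$ (Lemma~\ref{lem:ratios-bounded}) make every summand non-negative. For the upper bound, apply item (1) to each child: $\beta_c/(1-\ra_c) < 1/\alpha$. Since $(S,u)$ has at most $d_u$ children, $\sum_c \beta_c/(1-\ra_c) < d_u/\alpha$, and the recurrence yields $\beta_{S,u} \le \ra_{S,u}(1 + d_u/\alpha)$. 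There is no single hard step; the only subtle point is to sequence the three items so that convexity supplies (1) before the recurrence is invoked for (2), which in turn is what makes the eventual amortized correlation-decay analysis go through.
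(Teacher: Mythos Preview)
Your proof is correct, and for items (1) and (2) it is essentially identical to the paper's. The only difference is in item (3): the paper orders the items $1 \to 2 \to 3$ and obtains (3) by reusing the tangent inequality from (1) together with the lower bound $\beta_{S,u} \geq \ra_{S,u}$ from (2), namely
\[
  \ra_{S,u}((1+\alpha)\vec{p}) \;\geq\; \ra_{S,u} + \alpha\,\beta_{S,u} \;\geq\; (1+\alpha)\,\ra_{S,u},
\]
whereas you prove (3) first by a direct induction on $|S|$ via the recurrence. Both routes are fine; the paper's is a bit more economical because it recycles the convexity step already used for (1), while yours is self-contained. One small remark: in your inductive step for (3) you only use the consequence $\ra_c(\vec{p}) \leq \ra_c((1+\alpha)\vec{p})$ of the inductive hypothesis, and that monotonicity is already supplied by Lemma~\ref{lem:r-convex}, so the induction is not actually doing any work beyond a single application of the recurrence plus monotonicity.
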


\begin{proof}
  Since $(1 + \alpha)\vec{p} \in \cS$, \Lemma{ratios-bounded} implies $\ra_{S,
    u}((1+\alpha)\vec{p}) < 1$.  Further, from \Cref{lem:r-convex}, we know
  that $t \mapsto \ra_{S,u}((1+t)\vec{p})$ is convex for $t \in [0,
  \alpha]$. 
  \Cref{item:1} of the lemma then follows from the inequalities
  \begin{equation}
    1 ~>~ \ra_{S,u}((1+\alpha) \vec{p})
    ~\geq~ \ra_{S,u} + \alpha \diff{\ra_{S,u}}{t} \Big|_{t=0}
    ~\geq~ \ra_{S,u} + \alpha \beta_{S,u}. \label{eq:3}
  \end{equation}

\Cref{item:2} follows from \Cref{clm:bRecurrence}, using $\beta_{S,u}
\geq 0$, $\abs{\children(S, u)} \leq d_u$, and $\beta_{S,u} < (1-\ra_{S,u}) / \alpha$ from \cref{item:1}.

To prove the first inequality in \cref{item:3}, we again use \cref{eq:3},
and substitute $\beta_{S,u} \geq \ra_{S,u}$ from \cref{item:2}. The
second inequality follows from the fact that $\ra_{S, u}((1+\alpha)\vec{p}) < 1$
as mentioned above. \qedhere
\end{proof}

\vspace*{-4pt} 

Finally we relate the quantities $\ra_{S, u}$ to the
quantities $r_{S, u}$ that we actually want to approximate.

\begin{claim}
  \label{clm:comp-real}
  Let $\vec{p}$ lie in the complex Shearer region.
  For any node $(S, u)$ in the computation tree, we have
  \begin{displaymath}
    \abs{r_{S, u}(\vec{p})} \leq \ra_{S, u}(\vec{p}) \text{ and }
    \abs{R_{S, u}(\vec{p})} \leq \ra_{S, u}(\vec{p}).
\vspace*{-4pt} 
  \end{displaymath}
\end{claim}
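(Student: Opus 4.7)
The plan is to prove both bounds by a straightforward induction, using the fact that $|\vec{p}| \in \cS$ (which is equivalent to $\vec{p}$ lying in the complex Shearer region by definition), so that \Lemma{ratios-bounded} applied to the real vector $|\vec{p}|$ gives $|p_u| \leq \ra_{S,u}(\vec{p}) < 1$ for every node $(S,u)$ in the tree. This is the key fact that keeps all denominators bounded away from zero.

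For the bound on $r_{S,u}$, I would induct on $|S|$. The base case $S = \{u\}$ is immediate since $r_{\{u\},u} = p_u$ and $\ra_{\{u\},u}(\vec{p}) = |p_u|$. For the inductive step, apply the recurrence in \Cref{lem:recurrence}:
\[
r_{S,u} \:=\: p_u \prod_{c \in \children(S,u)} \frac{1}{1 - r_c}.
\]
By the inductive hypothesis $|r_c| \leq \ra_c$, and since $\ra_c < 1$ by \Lemma{ratios-bounded} applied at $|\vec{p}|$, the reverse triangle inequality gives $|1 - r_c| \geq 1 - |r_c| \geq 1 - \ra_c > 0$. Taking absolute values throughout then yields
\[
|r_{S,u}| \:\leq\: |p_u| \prod_{c \in \children(S,u)} \frac{1}{1 - \ra_c} \:=\: \ra_{S,u}(\vec{p}),
\]
where the final equality is just the recurrence of \Cref{lem:recurrence} applied to the real vector $|\vec{p}|$.

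For the bound on $R_{S,u}$, I would induct on the remaining recursion depth $\ell$ in \Algorithm{main}. The base case $\ell = 0$ is trivial since \textsc{\OcR}{} returns $0$, and $0 \leq \ra_{S,u}$. For the inductive step, the algorithm uses exactly the same recurrence as for $r_{S,u}$ but with child estimates $R_c$ at depth $\ell - 1$; by induction $|R_c| \leq \ra_c < 1$, so $|1 - R_c| \geq 1 - \ra_c > 0$, and the same product bound goes through verbatim.

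There is no real obstacle here — the only thing to verify is that the inductive chain is well-founded and that the denominators never vanish, both of which follow from $\ra_c < 1$ via \Lemma{ratios-bounded}. The bound is also intuitively tight in the sense that the real positive vector $|\vec{p}|$ plays the role of the worst case, which is precisely why $\ra_{S,u}$ is the right quantity to compare against in \Claim{errorPropagate} and throughout the remainder of the analysis.
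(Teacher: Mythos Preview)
Your proof is correct and follows essentially the same approach as the paper: induction on $|S|$ for $r_{S,u}$ and on recursion depth for $R_{S,u}$, using the recurrence of \Cref{lem:recurrence} together with the bound $\ra_c < 1$ to control the denominators. The paper cites \cref{item:3} of \Cref{lem:beta-prop} (using that $\cS$ is open) for the bound $\ra_c < 1$, whereas your citation of \Cref{lem:ratios-bounded} is slightly more direct, but otherwise the arguments are identical.
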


\begin{proof}
  For both $r_{S, u}$ and $R_{S, u}$, the proof is by induction on
  $|S|$.  The base case for $r_{S, u}$ is when $S$ is a singleton, in
  which case we have
  $\abs{r_{\inb{u}, u}(\vec{p})} = \abs{p_u} = \ra_{\inb{u},
    u}(\vec{p})$.  For $R_{S, u}$, the base case is when $(S, u)$ is at
  depth $\ell$ in the computation tree (where $\ell$ is as in the
  input to \Cref{alg:main}), in which case one has
  $R_{S, u}(\vec{p}) = 0 \leq \rho_{S, u}(\vec{p})$.  For the inductive case, we use the
  recursion for $r_{S, u}(\vec{p})$ to obtain
  \ifbool{twocol}{
    \begin{align*}
    \abs{r_{S, u}(\vec{p})} &= \abs{p_u\prod_{c \in \children(S, u)}\frac{1}{1 -
        r_c(\vec{p})}}\\
    &\leq \abs{p_u}\prod_{c \in \children(S, u)} \frac{1}{1 -
      \abs{r_c(\vec{p})}}\\
    &\leq \abs{p_u}\prod_{c \in \children(S, u)} \frac{1}{1 -
      \ra_c(\vec{p})} = \ra_{S, u}.
  \end{align*}
  }{
  \begin{displaymath}
    \abs{r_{S, u}(\vec{p})} = \abs{p_u\prod_{c \in \children(S, u)}\frac{1}{1 -
        r_c(\vec{p})}} \leq \abs{p_u}\prod_{c \in \children(S, u)} \frac{1}{1 -
      \abs{r_c(\vec{p})}} \leq \abs{p_u}\prod_{c \in \children(S, u)} \frac{1}{1 -
      \ra_c(\vec{p})} = \ra_{S, u}.
  \end{displaymath}
  }
  Here, the first inequality follows from \Cref{fct:simple} since the
  induction hypothesis implies that $\vec{p} \in \cS$
  $\abs{r_c(\vec{p})} \leq \ra_c(\vec{p})$, while that fact that
  $\vec{p} \in \cS$ implies that $\ra_c(\vec{p}) < 1$ (e.g., from
  \cref{item:3} of \Cref{lem:beta-prop}). The second inequality
  follows directly from the induction hypothesis.  The inductive step
  for $R_{S, u}$ is identical.
\end{proof}

%
%
%
%


\subsection{Correlation decay with complex activities}
\label{sec:corr-decay-with-2}

We now use the error sensitivity parameters
to establish the correlation decay results
needed for our FPTAS.

  Let $G = (V, E)$ be a graph on $n$ vertices, and let $\vec{p} \in \CC^V$ be
  such that $(1+\alpha)^2 \vec{p} \in \cS$ ($\vec{p}$ is in the Shearer region with slack $\simeq 2 \alpha$).
  The root of the recursion is a pair $(A,a)$ where $A \subseteq V$ and $a \in A$.
  Let $\ell \geq 0$ be arbitrary.
  Recall that Algorithm~\ref{alg:main} recursively computes an estimate $R_{S,u}$ of $r_{S,u}$,
  where for every pair $(S,u)$ encountered, we have
  \ifbool{twocol}{$R_{S,u} = 0$ if $(S, u)$ is at depth $\ell$ in the computation
        tree, and $R_{S, u} = p_u \cdot \prod_{c \in \children(S,u)}
        (1 - R_c)^{-1}$ otherwise. }{
  \begin{displaymath}
    R_{S,u} ~=~
    \begin{cases}
      0 &\quad\text{(if $(S, u)$ is at depth $\ell$ in the computation
        tree)} \\
      p_u \cdot \prod_{c \in \children(S,u)} (1 - R_c)^{-1}
      &\quad\text{(otherwise)}
    \end{cases}.
  \end{displaymath}
  }
  The depth $\delta$ of a node is defined as its distance from the root $(A,a)$ in the recursive tree:
  the root has $\delta(A, a)=0$,
  its children have $\delta(S, u)=1$, etc.
  
  \paragraph{Intuition.}
  \Cref{clm:selfreduc} implies that it is sufficient to get good
  approximations for the occupation ratios in order to obtain an
  FPTAS.  Suppose now that we were to expand the computation tree for
  computing a particular occupation ratio up to depth $\ell$ as
  described above, and were then able to show that at every node
  $(S, u)$ in this tree, the approximation error $|r_{S,u} - R_{S,u}|$
  is smaller than the maximum approximation error at the node's
  children by a factor $c < 1$.  It would then follow that the
  approximation error at the root node is $O(c^\ell)$, and hence that
  it is sufficient to take $\ell = O(\log n)$ in order to obtain an
  inverse polynomial approximation of the occupation ratio (which in
  turn can be shown to be sufficient for obtaining an FPTAS for $Z$).
  However, since degrees and the activity parameters might vary
  throughout the tree, the errors $|r_{S,u} - R_{S,u}|$ do not decay
  uniformly in this fashion at each node of the tree; they might even
  increase locally. (Examples are not difficult to construct.)
  Instead, we aim to use the error sensitivity parameter $\beta_{S,u}$
  as a yardstick against which the approximation error
  $|r_{S,u} - R_{S,u}|$ at node $(S, u)$ ought to be compared.

  As we mentioned earlier, $\beta_{S,u}$ is a natural error
  sensitivity parameter because it satisfies a recurrence
  (\Claim{bRecurrence}) similar to the recurrence for error
  propagation (\Claim{errorPropagate}).  In an earlier version of this
  paper, we used ``normalized errors'' roughly of the form
  $|r_{S,u} - R_{S,u}| / \beta_{S,u}$, and showed that they decay by a
  factor of $1-\Theta(\alpha)$ at each level of the tree.  Here we
  present an improved analysis which leads to a decay factor of
  $1-\Theta(\sqrt{\alpha})$, which is in fact tight (in particular on
  the infinite $d$-regular tree, see \Cref{sec:optim-decay-rate}).
  
  The improvement comes from analyzing in conjunction the behavior of
  $r_{S,u}$ at two different probability vectors: $\vec{p}$ and the
  vector $(1+ \alpha) \vec{p}$ of slightly larger probabilities (which
  nonetheless still has a slack of $\alpha$).  We denote
  $\beta'_{S,u} =
  \frac{d\ra_{S,u}((1+t)\vec{p})}{dt}\Big|_{t=\alpha}$.  Instead of
  $\beta_{S,u}$, we compare the errors to the quantity
  $\sqrt{\beta_{S,u} \beta'_{S,u}}$, i.e., we normalize the error at
  node $(S, u)$ as
  $|r_{S,u} - R_{S,u}| / \sqrt{\beta_{S,u}\beta'_{S,u}}$.

  The reason for this choice of the normalization is as follows.  It
  can be shown that the case where the earlier version of the
  normalized error decays by a factor of only $1 - O(\alpha)$
  corresponds to the situation where
  $\ra_{S,u}((1+\alpha) \vec{p}) \simeq (1+\alpha)
  \ra_{S,u}(\vec{p})$. But in that case, an argument based on
  \Cref{lem:r-convex,lem:beta-prop} implies that
  $\beta_{S,u} - \ra_{S, u}$ must be quite small.  The new
  normalization of the error allows us to exploit this phenomenon: we
  can now show, roughly speaking, that the smaller of these two
  factors, i.e., $\ra_{S,u}(\vec{p})/\ra_{S,u}((1+\alpha) \vec{p})$
  (which was the factor obtained in the analysis of the earlier
  version) on the one hand, and $\beta_{S,u} - \ra_{S, u}$ on the
  other, can be taken to be the decay factor for the new normalized
  error.  We then show that at least one of these two factors is as
  small as $1 - \Omega(\sqrt{\alpha})$.  At a technical level, proving
  this requires a careful comparison of the recurrences for the
  propagation of unnormalized errors (\Claim{errorPropagate}) with the
  recurrence for the $\beta_{S, u}$ (\Claim{bRecurrence}), exploiting
  in particular the extra additive term of $1$ in the latter
  recurrence.  We now make this intuition precise in the following
  theorem. Recall that $\vec{p}$ is assumed to be such that
  $(1+\alpha)^2\vec{p} \in \cS$.

  \begin{theorem}
  \TheoremName{decayShearer}
  For notational simplicity, let $\ra_{S,v}=\ra_{S, v}(\vec{p})$ and
  $\raPrime{S,v}=\ra_{S,v}((1+\alpha)\vec{p})$.
  Similarly, let $\beta_{S,v} = \beta_{S, v}(\vec{p})$ and
  $\beta'_{S,v} = \beta_{S,v}((1+\alpha) \vec{p})$.
  For a node $(S,u)$ in a computation tree of depth $\ell$,
  \[
  \abs{r_{S,u} - R_{S, u}} \leq
  \sqrt{\beta_{S,u}\beta'_{S,u}}{(1+\sqrt{\alpha})^{-(\ell-\depth(S, u))/2}}.
  \]
\end{theorem}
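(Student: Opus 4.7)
My plan is to prove the bound by induction on $\ell - \depth(S,u)$, i.e., the number of recursion levels remaining below the node. For the base case $\ell - \depth(S,u) = 0$, the algorithm sets $R_{S,u} = 0$, so $\abs{r_{S,u} - R_{S,u}} = \abs{r_{S,u}} \le \ra_{S,u}$ by \Claim{comp-real}. Combining $\ra_{S,u} \le \raPrime{S,u}$ (monotonicity from \Lemma{r-convex}) with $\ra_{S,u} \le \beta_{S,u}$ and $\raPrime{S,u} \le \beta'_{S,u}$ (item 2 of \Lemma{beta-prop}) then yields $\ra_{S,u}^2 \le \beta_{S,u}\beta'_{S,u}$, establishing the base case.

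For the inductive step, I would combine the per-level error recurrence of \Claim{errorPropagate} with the inductive hypothesis applied to each child in $\children(S,u)$, obtaining
\[
\abs{r_{S,u} - R_{S,u}} \le \ra_{S,u}\,(1+\sqrt{\alpha})^{-(\ell-\depth(S,u)-1)/2} \sum_{c \in \children(S,u)} \frac{\sqrt{\beta_c \beta'_c}}{1-\ra_c}.
\]
Applying Cauchy--Schwarz to the sum, followed by replacing $1/(1-\ra_c)$ by $1/(1-\raPrime{c})$ in one of the two factors (valid since $\ra_c \le \raPrime{c}$ by monotonicity), gives a bound of $\sqrt{A A'}$, where $A \defeq \sum_c \beta_c / (1-\ra_c)$ and $A' \defeq \sum_c \beta'_c / (1-\raPrime{c})$. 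Since \Claim{bRecurrence} applied at both $\vec{p}$ and $(1+\alpha)\vec{p}$ gives $\beta_{S,u} = \ra_{S,u}(1+A)$ and $\beta'_{S,u} = \raPrime{S,u}(1+A')$, after squaring, the induction closes as soon as one proves the key inequality
\[
\frac{\ra_{S,u}}{\raPrime{S,u}} \cdot \frac{A}{1+A} \cdot \frac{A'}{1+A'} \le \frac{1}{1+\sqrt{\alpha}}.
\]

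To prove this inequality, I plan to case split on the sizes of $x \defeq A/(1+A)$ and $x' \defeq A'/(1+A')$, both of which lie in $[0,1)$. If $\min(x,x') \le 1/(1+\sqrt{\alpha})$, then $xx' \le \min(x,x') \le 1/(1+\sqrt{\alpha})$, and the factor $\ra_{S,u}/\raPrime{S,u} \le 1$ (from \Lemma{r-convex}) suffices. Otherwise $\min(x,x') > 1/(1+\sqrt{\alpha})$, which rearranges to $A > 1/\sqrt{\alpha}$; then convexity of $t \mapsto \ra_{S,u}((1+t)\vec{p})$ (\Lemma{r-convex}) together with the definition of $\beta_{S,u}$ gives $\raPrime{S,u} \ge \ra_{S,u} + \alpha\beta_{S,u} = \ra_{S,u}(1+\alpha(1+A)) > \ra_{S,u}(1+\sqrt{\alpha})$, and the trivial bound $xx' \le 1$ finishes the case.

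The main obstacle will be choosing the right normalization so that these two cases balance at rate $\sqrt{\alpha}$. Normalizing by $\beta_{S,u}$ alone (as in the earlier manuscript) only yields a decay rate of $1-O(\alpha)$; the improvement comes from using the geometric mean $\sqrt{\beta_{S,u}\beta'_{S,u}}$, which couples the recurrences at $\vec{p}$ and at the slightly larger $(1+\alpha)\vec{p}$. This coupling exposes the additive ``$+1$'' term in \Claim{bRecurrence}: via convexity it forces the multiplicative gap $\raPrime{S,u}/\ra_{S,u}$ to be at least $1+\alpha(1+A)$, exactly compensating when the contraction from $xx'$ is weak.
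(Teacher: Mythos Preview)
Your proposal is correct and follows essentially the same route as the paper: induction on $\ell-\depth(S,u)$, the error recurrence of \Claim{errorPropagate} combined with Cauchy--Schwarz and the monotonicity $\ra_c\le\raPrime{c}$, the identification via \Claim{bRecurrence} of the resulting sums with $\beta_{S,u}-\ra_{S,u}$ and $\beta'_{S,u}-\raPrime{S,u}$, and a two-case split according to whether $\beta_{S,u}$ is large or small compared to $\ra_{S,u}/\sqrt{\alpha}$, using convexity (\Lemma{r-convex}) to handle the large case. The only cosmetic differences are that the paper writes the key quantity as $\sqrt{\ra_{S,u}/\raPrime{S,u}}\sqrt{(\beta_{S,u}-\ra_{S,u})(\beta'_{S,u}-\raPrime{S,u})}$ rather than in your $x,x'$ notation, splits on $\beta_{S,u}\le\ra_{S,u}/\sqrt{\alpha}$ (equivalently $1+A\le 1/\sqrt{\alpha}$) rather than on $\min(x,x')$, and does the replacement $1/(1-\ra_c)\le 1/(1-\raPrime{c})$ before Cauchy--Schwarz rather than after; none of these affect the argument.
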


\begin{proof}
  The proof is by induction on $\ell - \depth(S, u)$.  The base case
  is $\depth(S, u) = \ell$ and $R_{S, u} = 0$; we want to prove $|r_{S,u}| \leq \sqrt{\beta_{S,u} \beta'_{S,u}}$.
  We have $\abs{r_{S, u}} \leq \ra_{S,u}$ (\Cref{clm:comp-real}).
  The base case  follows since $\ra_{S, u} \leq \beta_{S, u}$ and
  $\ra_{S, u} \leq \raPrime{S, u} \leq \beta'_{S,u}$, by \Cref{lem:r-convex}, and \cref{item:2} of
  \Cref{lem:beta-prop}.

  For the inductive step, we apply the recursive formula from \Claim{errorPropagate}:
  $$
    \abs{r_{S, u} - R_{S, u}} \leq 
    \ra_{S, u}\sum_{c \in \children(S, u)}\frac{\abs{r_c - R_c}}{1 - \ra_c}.
  $$
  By definition, $\depth(c) = \depth(S, u) + 1$ for all $c \in \children(S, u)$.
  By the induction hypothesis, we therefore have
  \ifbool{twocol}{
    \begin{align}
    \abs{r_{S, u} - R_{S, u}}
    &\leq
      \ra_{S, u}\sum_{c \in \children(S,u)} \frac{\sqrt{\beta_c\beta'_c}}{1 - \ra_c}
         (1+\sqrt{\alpha})^{-(\ell - \depth(S,u) - 1)/2} \nonumber\\
    &\leq
      \frac{\ra_{S, u} \sum_{c \in \children(S, u)}
      \sqrt{\frac{\beta_c}{1 - \ra_c}}
      \cdot \sqrt{\frac{\beta'_c}{1 - \raPrime{c}}}}
      {(1+\sqrt{\alpha})^{(\ell - \depth(S,u) - 1)/2}},\nonumber\\
    &\leq
      \frac{\ra_{S, u}  \sqrt{\sum_{c \in \children(S, u)}
      \frac{\beta_c}{1 - \ra_c}}
      \cdot
      \sqrt{
      \sum_{c \in \children(S, u)}
      \frac{\beta_c'}{1 - \raPrime{c}}
      }}{(1+\sqrt{\alpha})^{(\ell - \depth(S,u) - 1)/2}} \nonumber\\
    &= \frac{\sqrt{\frac{\ra_{S, u}}{\raPrime{S, u}}}
      \sqrt{\beta_{S, u}  - \ra_{S, u}}
      \cdot
      \sqrt{\beta_{S,u}' - \raPrime{S, u}}}{
      (1+\sqrt{\alpha})^{(\ell - \depth(S,u) - 1)/2}}\label{eq:5}
  \end{align}
  }{
    \begin{align}
    \abs{r_{S, u} - R_{S, u}}
    &\leq
      \ra_{S, u}\sum_{c \in \children(S,u)} \frac{\sqrt{\beta_c\beta'_c}}{1 - \ra_c}
         (1+\sqrt{\alpha})^{-(\ell - \depth(S,u) - 1)/2} \nonumber\\
    &\leq
      \ra_{S, u} \sum_{c \in \children(S, u)}
      \sqrt{\frac{\beta_c}{1 - \ra_c}}
      \cdot \sqrt{\frac{\beta'_c}{1 - \raPrime{c}}}
		\ (1+\sqrt{\alpha})^{-(\ell - \depth(S,u) - 1)/2},\nonumber\\
    &\leq
      \ra_{S, u}  \sqrt{\sum_{c \in \children(S, u)}
      \frac{\beta_c}{1 - \ra_c}}
      \cdot
      \sqrt{
      \sum_{c \in \children(S, u)}
      \frac{\beta_c'}{1 - \raPrime{c}}
      } \ (1+\sqrt{\alpha})^{-(\ell - \depth(S,u) - 1)/2} \nonumber\\
    &= \sqrt{\frac{\ra_{S, u}}{\raPrime{S, u}}}
      \sqrt{\beta_{S, u}  - \ra_{S, u}}
      \cdot
      \sqrt{\beta_{S,u}' - \raPrime{S, u}}
      (1+\sqrt{\alpha})^{-(\ell - \depth(S,u) - 1)/2} \label{eq:5}
    \end{align}
  }
  where the second inequality uses $\ra_c \leq \raPrime{c}$ (which
  follows from \cref{item:3} of \Cref{lem:beta-prop}), the third is
  the Cauchy-Schwarz inequality, and the last equality uses the
  recursion for $\beta_{S,u}$ as developed in
  \Cref{clm:bRecurrence}.\footnote{We implictly assume here and later
    in this proof that $\raPrime{S, u}$ is strictly positive.  For, if
    $\raPrime{S, u}$ were $0$, then by \cref{item:3} of
    \Cref{lem:beta-prop}, $\ra_{S, u}$ and $R_{S, u}$ will also be
    $0$ and the inductive hypothesis will be trivially true.} Note
  that it also follows from \cref{item:2,item:3} of
  \Cref{lem:beta-prop} that
  \begin{equation}
    \label{eq:6}
    0 \leq \ra_{S, u} \leq \raPrime{S, u} \text{ and }
    \beta_{S, u} - \ra_{S, u}, \beta'_{S, u} - \raPrime{S, u} \ge 0.
  \end{equation}
  We now divide the rest of the analysis into two cases.
  \begin{description}
  \item[Case 1: $\beta_{S, u} \leq \ra_{S, u}/\sqrt{\alpha}$].  In
    this case, we have
    $ 0\leq \beta_{S, u} - \ra_{S, u} \leq \beta_{S, u}(1 -
    \sqrt{\alpha}) \leq \beta_{S, u}/(1 + \sqrt{\alpha})$.
    Substituting this into \cref{eq:5}, and estimating the rest of the
    factors using \eqref{eq:6}, we get
    \ifbool{twocol}{
      \begin{align*}
      \abs{r_{S, u} - R_{S, u}}
      &\leq 
        \frac{\sqrt{\frac{\ra_{S, u}}{\raPrime{S, u}}}
        \sqrt{\beta_{S, u}  - \ra_{S, u}}
        \cdot
        \sqrt{\beta_{S,u}' - \raPrime{S, u}}}{
        (1+\sqrt{\alpha})^{(\ell - \depth(S,u) - 1)/2}}\\
      &\leq 
        \frac{\sqrt{\beta_{S, u}\beta_{S,u}'}}{(1+\sqrt{\alpha})^{(\ell - \depth(S,u) - 1)/2}}\cdot \frac{1}{\sqrt{1 + \sqrt{\alpha}}},
    \end{align*}
    }{\begin{align*}
      \abs{r_{S, u} - R_{S, u}}
      &\leq 
        \sqrt{\frac{\ra_{S, u}}{\raPrime{S, u}}}
        \sqrt{\beta_{S, u}  - \ra_{S, u}}
        \cdot
        \sqrt{\beta_{S,u}' - \raPrime{S, u}}
        \ (1+\sqrt{\alpha})^{-(\ell - \depth(S,u) - 1)/2}\\
      &\leq 
        \sqrt{\beta_{S, u}\beta_{S,u}'}\cdot \frac{1}{\sqrt{1 + \sqrt{\alpha}}}
        \ (1+\sqrt{\alpha})^{-(\ell - \depth(S,u) - 1)/2},
      \end{align*}
      }
    which completes the inductive step in this case.
  \item[Case 2: $\beta_{S, u} > \ra_{S, u}/\sqrt{\alpha}$].  We
    claim that in this case,
    $\ra_{S, u} \leq \raPrime{S, u}/(1 + \sqrt{\alpha})$.  Indeed,
    using the same argument as in the proof of \cref{item:1} of
    \Cref{clm:bRecurrence}, we have
    \ifbool{twocol}{
      \begin{multline}
      \raPrime{S, u} ~\defeq~ \ra_{S,u}((1+\alpha) \vec{p})
      ~\geq~ \ra_{S,u} + \alpha \diff{\ra_{S,u}}{t} \Big|_{t=0}\\
      ~\geq~ \ra_{S,u} + \alpha \beta_{S,u} ~>~ (1 + \sqrt{\alpha}) \ra_{S, u}.
      \label{eq:12}
    \end{multline}
    }{\begin{equation}
      \raPrime{S, u} ~\defeq~ \ra_{S,u}((1+\alpha) \vec{p})
      ~\geq~ \ra_{S,u} + \alpha \diff{\ra_{S,u}}{t} \Big|_{t=0}
      ~\geq~ \ra_{S,u} + \alpha \beta_{S,u} ~>~ (1 + \sqrt{\alpha}) \ra_{S, u}.
      \label{eq:12}
    \end{equation}}
    Substituting this into \cref{eq:5}, we obtain
    \ifbool{twocol}{\begin{align*}
      \abs{r_{S, u} - R_{S, u}}
      &\leq 
        \frac{\sqrt{\frac{\ra_{S, u}}{\raPrime{S, u}}}
        \sqrt{\beta_{S, u}  - \ra_{S, u}}
        \cdot
        \sqrt{\beta_{S,u}' - \raPrime{S, u}}}{
        (1+\sqrt{\alpha})^{(\ell - \depth(S,u) - 1)/2}}\\
      &\leq 
        \frac{1}{\sqrt{1 + \sqrt{\alpha}}}\cdot
        \frac{\sqrt{\beta_{S, u}\beta_{S,u}'}}{
        (1+\sqrt{\alpha})^{(\ell - \depth(S,u) - 1)/2}},
    \end{align*}
  }{
      \begin{align*}
      \abs{r_{S, u} - R_{S, u}}
      &\leq 
        \sqrt{\frac{\ra_{S, u}}{\raPrime{S, u}}}
        \sqrt{\beta_{S, u}  - \ra_{S, u}}
        \cdot
        \sqrt{\beta_{S,u}' - \raPrime{S, u}}
        \ (1+\sqrt{\alpha})^{-(\ell - \depth(S,u) - 1)/2}\\
      &\leq 
        \frac{1}{\sqrt{1 + \sqrt{\alpha}}}
        \sqrt{\beta_{S, u}\beta_{S,u}'}
        \ (1+\sqrt{\alpha})^{-(\ell - \depth(S,u) - 1)/2},
      \end{align*}
    }
    which establishes the induction step in this case as well.\qedhere
\end{description}
\end{proof}

\begin{corollary}
  \CorollaryName{decayRoot} Given a graph $G = (V, E)$, let
  $\vec{p}$ be a complex parameter vector such that $(1+\alpha)^2
  \vec{p} \in \cS$.  Let $(A,
  a)$ be the root of the recursive computation of
  \Theorem{decayShearer}, where $a$ is a vertex of degree $d_a$ in
  $G$. Then, we have
\[
  \abs{r_{A, a} - R_{A, a}} \leq \frac{1 + d_a/\alpha}{(1 + \sqrt{\alpha})^{\ell/2}}.
\]
\end{corollary}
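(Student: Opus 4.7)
The plan is to apply \Theorem{decayShearer} directly at the root $(A,a)$, where by definition $\depth(A,a) = 0$, yielding
\[
\abs{r_{A, a} - R_{A, a}} \leq \sqrt{\beta_{A,a}\,\beta'_{A,a}}\,(1+\sqrt{\alpha})^{-\ell/2}.
\]
The remaining task is therefore purely to bound the prefactor $\sqrt{\beta_{A,a}\beta'_{A,a}}$ by $1 + d_a/\alpha$, with no further work on the recursion needed.

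For the factor $\beta_{A,a}$, I would invoke \cref{item:2} of \Lemma{beta-prop}, which is applicable because the hypothesis $(1+\alpha)^2\vec{p} \in \cS$ implies $(1+\alpha)\vec{p} \in \cS$ (the Shearer region is downward closed in magnitudes). This yields $\beta_{A,a} \leq (1 + d_a/\alpha)\,\ra_{A,a}$, and combining this with $\ra_{A,a} < 1$ from \cref{item:3} of the same lemma gives $\beta_{A,a} < 1 + d_a/\alpha$.

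For the factor $\beta'_{A,a} = \beta_{A,a}((1+\alpha)\vec{p})$, the same argument applies to the shifted parameter vector $(1+\alpha)\vec{p}$: since $(1+\alpha)\cdot (1+\alpha)\vec{p} = (1+\alpha)^2\vec{p} \in \cS$, the vector $(1+\alpha)\vec{p}$ itself has slack $\alpha$, and \Lemma{beta-prop} with this vector gives $\beta'_{A,a} \leq (1 + d_a/\alpha)\,\raPrime{A,a} < 1 + d_a/\alpha$. This is the one spot requiring a moment of care: it is crucial that the hypothesis was stated as $(1+\alpha)^2\vec{p} \in \cS$ (and not merely $(1+\alpha)\vec{p} \in \cS$), precisely so that \Lemma{beta-prop} can be applied at the shifted vector as well.

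Multiplying the two bounds gives $\sqrt{\beta_{A,a}\beta'_{A,a}} < 1 + d_a/\alpha$, and substituting into the bound from \Theorem{decayShearer} completes the argument. There is no genuine obstacle here; the corollary is essentially a specialization of \Theorem{decayShearer} to the root, with the prefactor massaged via \Lemma{beta-prop} into a form that depends only on the root degree and the slack.
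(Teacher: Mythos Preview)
Your proposal is correct and essentially follows the paper's own argument: apply \Theorem{decayShearer} at depth~$0$ and bound the prefactor via \cref{item:2} of \Lemma{beta-prop}, noting that the hypothesis $(1+\alpha)^2\vec{p}\in\cS$ is exactly what is needed to apply the lemma at the shifted vector $(1+\alpha)\vec{p}$. The only cosmetic difference is that the paper bounds just $\beta'_{A,a}$ explicitly and then implicitly uses $\beta_{A,a}\le\beta'_{A,a}$ (from the monotonicity in \Lemma{r-convex}), whereas you bound both factors separately; either route gives $\sqrt{\beta_{A,a}\beta'_{A,a}}\le 1+d_a/\alpha$.
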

\begin{proof}
  Recall from \Cref{lem:ratios-bounded} that $\ra_{A, a} \in [0, 1)$ when $\vec{p} \in \cS$.
  Then \cref{item:2} of \Cref{lem:beta-prop} implies that
\[
\beta_{A,a}\big( (1+\alpha) \vec{p} \big)
    ~\leq~ (1+d_a/\alpha) \cdot r_{A,a}\big( (1+\alpha) \vec{p} \big)
    ~\leq~ 1+d_a/\alpha.
\]
We can apply \Cref{lem:beta-prop} to $\beta_{A,a}((1+\alpha)
\vec{p})$ (as opposed to $\beta_{A,
  a}(\vec{p})$) because of the assumption that $(1+\alpha)^2 \vec{p}
\in S$. The claim now follows from \Theorem{decayShearer} since $(A,
a)$ is at depth $0$ in the computation tree.
\end{proof}

\medskip

We can now prove that our algorithm indeed provides an FPTAS for the quantity $\qdown_V(\vec{p})$
(for bounded degree graphs and constant slack).
We remark that this also proves \Theorem{mainShearerIntro}.

\begin{theorem}[\textbf{FPTAS for $\qdown$}]
  \TheoremName{mainShearer} Given $\alpha, \epsilon \in (0,1]$, a
  graph $G = (V, E)$ on $n$ vertices with maximum degree $d$, and a
  parameter vector $\vec{p}$ such that $(1+\alpha)^2 \vec{p} \in \cS$,
  a $(1+\epsilon)$-approximation to $\qdown_V(\vec{p})$ can be
  computed in time
  $(\frac{n}{\epsilon \alpha})^{O(\log (d)/ \sqrt{\alpha})}$.
\end{theorem}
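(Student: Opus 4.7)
The plan is to combine Claim~\ref{clm:selfreduc} with Corollary~\ref{cor:decayRoot}. By the claim, for any fixed ordering $V = (v_1, \dots, v_n)$ we have
\[
\qdown_V(\vec{p}) \;=\; \prod_{i=1}^n \bigl(1 - r_{S_i, v_i}(\vec{p})\bigr),
\]
and \textsc{\CIP} returns the analogous product $\hat{Q} = \prod_{i=1}^n (1 - R_{S_i, v_i})$ of truncated occupation ratios. So it suffices to choose the recursion depth $\ell$ large enough that every factor is approximated to multiplicative accuracy roughly $1 + \epsilon/n$.

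The first step would be to bound the conditioning of each factor. Since $(1+\alpha)^2 \vec{p} \in \cS$ and $\cS$ is downward closed in $\abs{\vec{p}}$, we also have $(1+\alpha)\vec{p} \in \cS$, so item~3 of Lemma~\ref{lem:beta-prop} gives $\ra_{S_i, v_i}(\vec{p}) \leq 1/(1+\alpha)$. Combined with Claim~\ref{clm:comp-real}, this yields
\[
\abs{1 - r_{S_i, v_i}} \;\geq\; 1 - \ra_{S_i, v_i} \;\geq\; \frac{\alpha}{1+\alpha},
\]
and the same $\Omega(\alpha)$ lower bound holds for $\abs{1 - R_{S_i, v_i}}$. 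This is the crucial numerical input: it allows absolute errors on the $r_{S_i, v_i}$ to be converted into multiplicative errors on the factors $1 - r_{S_i, v_i}$.

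Next I would invoke Corollary~\ref{cor:decayRoot} at each root $(S_i, v_i)$ (whose degree is at most $d$) to obtain
\[
\abs{r_{S_i, v_i} - R_{S_i, v_i}} \;\leq\; \frac{1 + d/\alpha}{(1 + \sqrt{\alpha})^{\ell/2}}.
\]
Choosing $\ell$ so that the right-hand side is at most $\tfrac{\alpha}{1+\alpha} \cdot \tfrac{\epsilon}{2n}$ makes each factor accurate to multiplicative tolerance $1 + \epsilon/(2n)$, whence the full product has multiplicative error at most $(1 + \epsilon/(2n))^n - 1 \leq \epsilon$ for $\epsilon \in (0,1]$. Since $\log(1+\sqrt{\alpha}) = \Theta(\sqrt{\alpha})$ on $\alpha \in (0,1]$, this is achieved by $\ell = O\bigl(\log(nd/(\epsilon\alpha^2))/\sqrt{\alpha}\bigr)$.

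Finally, each call to \textsc{\OcR} explores a recursion tree of depth $\ell$ in which every non-root node has at most $d - 1$ children, so it runs in time $d^{O(\ell)}$; aggregating over the $n$ outer calls gives
\[
n \cdot d^{O(\ell)} \;=\; \Bigl(\tfrac{n}{\epsilon \alpha}\Bigr)^{O(\log(d)/\sqrt{\alpha})},
\]
as claimed. The substantive work has already been done in Theorem~\ref{thm:decayShearer}, so there is no real obstacle beyond this bookkeeping; the only point requiring care is the additive-to-multiplicative conversion, which is handled by the $\Omega(\alpha)$ lower bound on $\abs{1-r}$ and $\abs{1-R}$ established above.
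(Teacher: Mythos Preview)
Your proposal is correct and follows essentially the same approach as the paper: use \Cref{clm:selfreduc} to reduce to occupation ratios, apply \Cref{cor:decayRoot} for the absolute error at each root, convert to a multiplicative error on each factor via the $\Omega(\alpha)$ lower bound on $\abs{1-r_{S_i,v_i}}$ coming from \Cref{clm:comp-real} and item~3 of \Cref{lem:beta-prop}, and multiply out. The only cosmetic differences are that the paper packages the product step through \Cref{fct:multiply} rather than the inequality $(1+\epsilon/(2n))^n-1\le\epsilon$, and that your observation about the $\Omega(\alpha)$ lower bound on $\abs{1-R_{S_i,v_i}}$ is true but not actually needed.
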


\begin{proof}
  Order the vertices of $G$ arbitrarily as $v_1, v_2, \dots, v_n$. Recall that
  \[
    \qdown \defeq \qdown_V = \prod_{i = 1}^n\inp{1 - r_{S_i, v_i}},
  \]
  where $S_i \defeq \inb{v_i, v_{i+1}, \dots, v_n}$.  Let us denote by
  capital letters the estimates computed by Algorithm~\ref{alg:main}.
  $R_{S_i, v_i}$ is computed using $\ell$ levels of the recurrence in
  \Cref{thm:decayShearer}, where 
    $$\ell = \ceil{ 2\log_{(1 + \sqrt{\alpha})}\inp{ \frac{2 (1 +
        \alpha)(1 + d/\alpha) n}{\epsilon\alpha}}}.$$ 
  We have
  \ifbool{twocol}{
    \begin{multline*}
      \ell = \left\lceil \frac{2}{\log (1+\sqrt{\alpha})} \log \inp{
          \frac{2 (1 + \alpha)(1 + d/\alpha) n}{\epsilon\alpha}}
      \right\rceil\\
      \leq O\inp{\frac{1}{\sqrt{\alpha}} \log \inp{
          \frac{n}{\epsilon \alpha}}}.
    \end{multline*}
}{
    $$ \ell = \left\lceil \frac{2}{\log (1+\sqrt{\alpha})} \log \inp{ \frac{2 (1 +
        \alpha)(1 + d/\alpha) n}{\epsilon\alpha}} \right\rceil 
  \leq O\inp{\frac{1}{\sqrt{\alpha}} \log \inp{ \frac{n}{\epsilon \alpha}}}. $$}
  The number of nodes of the computation
  tree explored in the computation of each $R_{S_i, v_i}$ is
  $O(d^\ell)$ since the graph is assumed to be of degree at most $d$.
  This proves the running time bound.
    
  The algorithm outputs the quantity
  \begin{displaymath}
    \breve{Q} \defeq \prod_{i=1}^n(1 - R_{S_i, v_i}).
  \end{displaymath}
  We now prove that this is indeed a $(1+\epsilon)$-approximation
  for $\qdown_V(\vec{p})$.  For ease of notation, we define
  $\xi_i \defeq 1 - r_{S_i, v_i}$ and $\Xi_i = 1 - R_{S_i, v_i}$.
  From \Cref{cor:decayRoot} we have, for each $i$,
  \begin{equation}
    \abs{\frac{\Xi_i}{\xi_i} - 1}
    = \frac{\abs{r_{S_i, v_i} - R_{S_i, v_i}}}{\abs{1 - r_{S_i, v_i}}}
    \leq
    \frac{1 + \alpha}{\alpha}
    \cdot \frac{1 + d/\alpha}{(1+\sqrt{\alpha})^{\ell/2}}. \label{eq:2}
  \end{equation}
  Here, the last inequality uses \Cref{clm:comp-real} (which implies
  that $\abs{r_{S_i, v_i}} \leq \ra_{S_i, v_i}$) and \cref{item:3}
  from \Cref{lem:beta-prop} (which implies that when $\vec{p} \in \cS$,
  $\ra_{S_i, v_i} \leq \frac{1}{1 + \alpha} < 1$).  Together, with
  \Cref{fct:simple}, these two inequalities imply that
  $\frac{1}{\abs{1 - r_{S_i, v_i}}} \leq \frac{1+\alpha}{\alpha}$.
  Since
  $\ell = \ceil{
    2\log_{(1 + \sqrt{\alpha})}\inp{
      \frac{2 (1 + \alpha)(1 + d/\alpha) n}{\epsilon\alpha}}}$,
  we therefore have
    $\abs{\frac{\Xi_i}{\xi_i} - 1} \leq \frac{\epsilon}{2n}$, for all $i$.
  Combining this with \Cref{fct:multiply},
  and recalling that $\qdown = \prod_{i=1}^n\xi_i$ and that $\breve{Q} = \prod_{i=1}^n \Xi_i$,
  we obtain $\abs{\breve{Q} - \qdown} \leq \epsilon\abs{\qdown}$, which proves the theorem.
\end{proof}

%
%
%
%


\section{Application for the Lov\'{a}sz Local Lemma}
\label{sec:ShearerVar}
\subsection{Proof of Shearer's lemma by rounding variables}
\SectionName{existential}

Let us recall Shearer's formulation~\cite{Shearer} of the Lov\'{a}sz Local Lemma,
as stated in \Cref{app:LLL}.
For any distribution $\mu$ and events $\cE_1,\ldots,\cE_n$ with dependency graph $G$,
and any $\vec{p} \in \cS_G$ for which $\mu(\cE_i) \leq p_i$,
then $\mu\inp{\Intersect_{i=1}^n\overline{\cE_i}} > 0$,
and hence $\Intersect_{i=1}^n\overline{\cE_i} \neq \emptyset$.

In this section,
we give a new proof of Shearer's lemma in the so-called ``variable model'',
in which it is assumed that the events are determined by
underlying independent variables.
The variable model is assumed in most algorithmic formulations
of the Lov\'{a}sz Local Lemma~\cite{MoserTardos}.

For the purposes of this section, it will be slightly more convenient to use
the definition of the Shearer region from \Cref{def:ShearerRegion}:
\begin{equation}
\label{eq:OpenShearer}
\cS = \setst{ \vec{p} \in \bR^V }{ \qdown_S(\vec{p}) > 0 ~\:\forall S \subseteq V }.
\end{equation}

\paragraph{Preliminaries.}
We will use the variable model, as described in \Cref{app:VarMod}.
For simplicity we restrict to $\set{0,1}$-valued variables,
although similar arguments work for variables with arbitrary finite domains.
Given any vector $\vec{z} \in [0,1]^m$, let $\mu_{\vec{z}}$ now be the product
distribution on $\Omega = \set{0,1}^m$ with expectation $\vec{z}$.
We assume that event $\cE_i$ depends only on the coordinates $A_i \subseteq [m]$.
The dependency graph $G$ on vertex set $V=[n]$ has
an edge between $i$ and $j$ if $A_i \intersect A_j \neq \emptyset$.

\paragraph{Multilinearity.}
Let us now define $\vec{p} = \vec{p}(\vec{z}) \in [0,1]^n$ by $p_i = \mu_{\vec{z}}(\cE_i)$.
We first observe that each $p_i$ is a multilinear polynomial in $\vec{z}$:
\begin{equation}
\EquationName{PMultilinear}
p_i ~=~ p_i(\vec{z})
    ~=~ \sum_{S \in \Pi_{A_i}(\cE_i)} ~ \prod_{j \in S} z_j \prod_{j \in A_i \setminus S} (1-z_j),
\end{equation}
where $\Pi_{A_i}$ denotes projection to the coordinates in $A_i$.

The key observation is that
$\qdown_U( \vec{p}(\vec{z}) )$ is also a multilinear polynomial in $\vec{z}$,
for any $U \subseteq V$.
To see this, note that the events depending on variable $z_j$ form a 
clique in $G$, whereas each summand $\plusminus \prod_{i \in I} p_i$ in the definition of $\qdown_U$
involves an independent set $I$ in $G$.
Since a clique and an independent set intersect in at most one vertex,
each summand $\plusminus \prod_{i \in I} p_i$ involves at most one $p_i$ that depends on $z_j$.
So $\plusminus \prod_{i \in I} p_i$ is multilinear in $\vec{z}$, and the same is true of
$\qdown_U( \vec{p}(\vec{z}) )$.

\paragraph{Proof of Shearer's Lemma.}
In the variable model, the hypothesis of Shearer's lemma is that
there exists $\vec{z} \in [0,1]^m$ such that
under distribution $\mu_{\vec{z}}$ we have $\vec{p}(\vec{z}) \in \cS$.
The conclusion $\Intersect_{i=1}^n \overline{\cE_i} \neq \emptyset$
is equivalent to the existence of
an assignment $\vec{z} \in \set{0,1}^m$ to the variables with $\vec{p}(\vec{z})=\veczero$.

We prove that $\Intersect_{i=1}^n \overline{\cE_i} \neq \emptyset$ by the following argument.
Given an initial vector $\vec{z} \in [0,1]^m$ with $\vec{p}(\vec{z}) \in \cS$,
we first round $z_1$ to $0$ or $1$ while maintaining the property that $\vec{p}(\vec{z}) \in \cS$.
Then we repeat with $z_2,\ldots,z_m$, resulting in a final vector 
$\vec{z} \in \set{0,1}^m$ with $\vec{p} = \vec{p}(\vec{z}) \in \cS$.
As the distribution $\mu_{\vec{z}}$ is now deterministic, 
we must have $\vec{p} \in \set{0,1}^n$.
In fact $\vec{p}=\veczero$, for if $p_j = 1$ then $\qdown_{\set{j}} = 1-p_j = 0$,
contradicting that $\vec{p} \in \cS$.
Thus none of the events $\cE_1,\ldots,\cE_n$ occur under the assignment $\vec{z}$,
so $\vec{z} \in \Intersect_{i=1}^n \overline{\cE_i}$.

The crux is deciding how to round $z_i$.
We will increase $z_i$ to $1$ if
$\frac{\partial \qdown_V }{ \partial z_i }(\vec{p}(\vec{z})) \geq 0$,
and otherwise decrease $z_i$ to $0$.
This decision ensures that $\qdown_V$ does not decrease during this rounding step,
since $\qdown_V$ is multilinear in $\vec{z}$.
Note that the condition $\qdown_V(\vec{p})>0$ alone does not imply that $\vec{p} \in \cS$;
referring to \Cref{eq:OpenShearer}, we must also ensure that $\qdown_S(\vec{p})>0$
for all $S \subseteq V$.
Fortunately \Lemma{breve-prop} implies that
$\qdown_V(\vec{p}) = \min_{S \subseteq V} \qdown_S(\vec{p})$ whenever $\vec{p} \in \cS$.
So, thinking of \emph{continuously} modifying $z_i$, 
if any $\qdown_S(\vec{p})$ were to become non-positive then $\qdown_V(\vec{p})$
should be the first to do so.
Since the rounding is a continuous process ensuring that $\qdown_V(\vec{p})>0$,
this is actually sufficient to imply that $\vec{p} \in \cS$.
A formal version of this argument appears in \Cref{app:rigorousrounding}.

\paragraph{Remarks.}
Since this rounding argument for the LLL in Shearer's region is very simple,
one might be tempted to
try a similar rounding
in the region $\cL$ employed in the original form of the Lov\'asz Local Lemma
(see \Cref{app:LLL}).
It turns out that $\cL$ does not support such roundings: 
there is a probability space in the variable model with $\vec{p} \in \cL$,
such that rounding some variable to $0$ or $1$ will \emph{both} lead to $\vec{p} \not\in \cL$.
An example is shown in \Cref{app:Lrounding}.
So the fact that our rounding argument works is a special property of the Shearer region.

\medskip

This proof of Shearer's lemma directly suggests a potential algorithm:
compute the sign $\frac{\partial \qdown_V }{ \partial z_i }(\vec{p}(\vec{z}))$
in order to perform the rounding.
We design such an algorithm in the next section.

\subsection{An algorithmic LLL by polynomial evaluation}

We now explain how the algorithm from the previous subsection can be made
to run in subexponential time, using our FPTAS
\textsc{\CIP}, assuming that the initial
probabilities have slack.  We also assume that the probabilities
$p(\vec{z})$ of the events can be efficiently computed given the
probability distribution $\vec{z}$ on the underlying variables: this
is the case in standard applications of the variable model LLL such as
$k$-CNF-SAT.  The notation used in the theorem is as in the previous
subsection.

\begin{theorem}
Assume that the initial distribution $\vec{z}$ has slack $\alpha \in (0,1]$,
i.e., $(1+\alpha) \cdot \vec{p}(\vec{z}) \in \cS$, and that
$\vec{p}(\vec{z})$ can be computed from $\vec{z}$ in time polynomial
in $m$.
Then there is a deterministic algorithm with runtime
$(nm/\alpha)^{O(\log(d) \sqrt{m/\alpha})}$ that can construct a point
in $\Intersect_{i=1}^n \overline{\cE_i}$.  Here, $d$ is the degree of
the dependency graph.
\end{theorem}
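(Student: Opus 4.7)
The plan is to implement the continuous rounding argument of \Cref{sec:existential} using \Cref{thm:mainShearer} as a black-box FPTAS for $\qdown_V$. I would process variables $z_1, \ldots, z_m$ in order, committing each to $\{0,1\}$, and maintain throughout the invariant
\[
  (1 + \alpha_i) \,\vec{p}(\vec{z}^{(i)}) \in \cS, \qquad \alpha_i \defeq \alpha(1 - i/m),
\]
where $\vec{z}^{(i)}$ denotes the configuration after $i$ roundings. The case $i = 0$ is the hypothesis, and $i = m$ gives $\vec{p}(\vec{z}^{(m)}) \in \cS$, which by the argument in \Cref{sec:existential} forces $\vec{p}(\vec{z}^{(m)}) = \veczero$ and thus produces a point in $\Intersect_{i=1}^n \overline{\cE_i}$.

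For the step from $i$ to $i+1$, downward-closedness of $\cS$ under coordinate-wise shrinking of magnitudes (immediate from the complex polydisc definition) gives $(1+\alpha_{i+1})\vec{p}(\vec{z}^{(i)}) \in \cS$ with remaining slack $\Theta(\alpha/m)$ to the original $\cS$-boundary. I would then invoke \Cref{thm:mainShearer} with slack $\Theta(\alpha/m)$ and accuracy $\epsilon' = \Theta(1/m)$ to approximate $\qdown_V((1+\alpha_{i+1})\vec{p}(\vec{z}))$ at the two candidates $\vec{z}^{(i)}|_{z_{i+1}=0}$ and $\vec{z}^{(i)}|_{z_{i+1}=1}$, and commit $z_{i+1}$ to the candidate with the larger FPTAS output. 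Multilinearity of $\vec{z} \mapsto \qdown_V((1+\alpha_{i+1})\vec{p}(\vec{z}))$ in $z_{i+1}$ (derived in \Cref{sec:existential} from the clique structure of the dependency graph along each variable) ensures that the true value at the chosen candidate is positive; since the same function is affine in $z_{i+1}$, positivity propagates along the entire segment between $\vec{z}^{(i)}$ and $\vec{z}^{(i+1)}$. Coupled with \Lemma{breve-prop} and the observation (from \Cref{sec:existential}) that $\qdown_V$ is the first $\qdown_S$ to vanish when $\cS$ is exited, this preserves the invariant.

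The principal obstacle is managing the cumulative FPTAS errors across $m$ steps, since an approximation-driven suboptimal choice can shrink $\qdown_V$ multiplicatively and positivity must hold at every intermediate step. With $\epsilon' = \Theta(1/m)$, each step contributes at worst a factor $(1+\epsilon')^2$, so the total multiplicative loss after all $m$ rounds is $(1+\epsilon')^{2m} = O(1)$; since $\qdown_V((1+\alpha)\vec{p}(\vec{z}^{(0)})) > 0$ by hypothesis, $\qdown_V$ stays positive throughout. A secondary subtlety---that a candidate might itself lie just outside $\cS$, rendering the FPTAS inapplicable on it---is handled by the affine-path argument above, which shows that any candidate with $\qdown_V > 0$ is already in $\cS$; the FPTAS therefore produces meaningful output on at least the good candidate (which exists by the convex-combination identity from multilinearity), so comparing the outputs always selects a good candidate. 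Putting it all together, each of the $m$ outer iterations makes $O(1)$ FPTAS calls of cost $(nm/\alpha)^{O(\log(d)\sqrt{m/\alpha})}$ by \Cref{thm:mainShearer}, plus polynomial-in-$m$ work to recompute $\vec{p}(\vec{z})$, giving the total running time $(nm/\alpha)^{O(\log(d) \sqrt{m/\alpha})}$ claimed in the theorem.
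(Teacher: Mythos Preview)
Your overall strategy---round each $z_j$ in turn while tracking a shrinking slack budget $\alpha_i = \alpha(1 - i/m)$---matches the paper's. The crucial divergence is \emph{where} you call the FPTAS: you evaluate at the two extreme candidates $\vec{z}^{(i)}|_{z_{i+1}=0}$ and $\vec{z}^{(i)}|_{z_{i+1}=1}$, whereas the paper estimates $\partial \qdown_V / \partial z_{i+1}$ by a finite difference at the \emph{current} point $\vec{z}^{(i)}$ and a small perturbation $\vec{z}^{(i)} - \delta\vec{e}_{i+1}$, both of which are provably inside $\cS$ with slack $\Theta(\alpha/m)$.

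Your handling of the ``secondary subtlety'' does not close the gap. Rounding $z_{i+1}$ to $0$ or $1$ can move $\vec{p}$ by a large amount, so one candidate may lie well outside $\cS$. On such a point \Cref{thm:mainShearer} gives no guarantee whatsoever---Algorithm~\ref{alg:main} may even hit $1 - R_c = 0$ in the recursion---so its output is uncontrolled. The sentence ``comparing the outputs always selects a good candidate'' is therefore a non sequitur: if the bad candidate's FPTAS output happens to be large garbage, your comparison picks it, and nothing you have said rules this out. Separately, even on the good candidate your affine-path argument at scaling $(1+\alpha_{i+1})$ only yields membership in $\cS$, not the slack $\Theta(\alpha/m)$ you need for the FPTAS; to get slack you would have to run the path argument at the larger scaling $(1+\alpha_i)$, which you do not do.

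The paper avoids both problems by never evaluating at the candidates. It first preprocesses so that every unfixed $z_j \in [\alpha/4,\,1-\alpha/4]$, then with $\delta = \Theta(\alpha^2/m)$ evaluates $\qdown_{[n]}(s_i\vec{p}(\vec{z}))$ and $\qdown_{[n]}(s_i\vec{p}(\vec{z} - \delta\vec{e}_{i+1}))$, where $s_i = 1 + \alpha(m-i)/(2m)$. A short calculation using $z_{i+1} \leq 1-\alpha/4$ shows $\vec{p}(\vec{z}-\delta\vec{e}_{i+1}) \leq (1+O(\alpha/m))\vec{p}(\vec{z})$ coordinatewise, so the perturbed point inherits slack $\Theta(\alpha/m)$ and the FPTAS is legitimately applicable at both evaluation points. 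The sign of the resulting approximate derivative $A$ determines the rounding direction, and one then checks directly (using $\epsilon = \delta/4$) that $\qdown_{[n]}(s_i\vec{p}(\cdot))$ stays positive along the whole rounding segment, re-establishing the invariant via \Lemma{continuous}.
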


\paragraph{Main ideas.}
Recall that the algorithm examines the sign of
$\frac{\partial \qdown_{[n]} }{ \partial z_i }(\vec{p}(\vec{z}))$
in order to decide whether to round $z_i$ up or down.
Since $\qdown_{[n]}$ is multilinear in $z_i$,
we can estimate $\frac{\partial \qdown_{[n]} }{ \partial z_i }(\vec{p}(\vec{z}))$
by using the FPTAS to compute $\qdown_{[n]}$ at two points nearby $\vec{p}(\vec{z})$.

Recall that the FPTAS is only efficient so long as there is
sufficiently large slack.
So the main challenge in the rounding is ensuring that the
points $\vec{p}(\vec{z})$ constructed during the algorithm
not only remain in $\cS$ but also have slack.
In each of the $m$ iterations, we might step a bit towards the Shearer
boundary, but we ensure that in one step, the slack cannot decrease by
more that $\frac{\alpha}{2m}$.  Since the initial slack is at least
$\alpha$, it can then be insured that all points constructed during
the $m$ iterations have slack at least $\frac{\alpha}{m}$.

\paragraph{Detailed discussion.}
The input to the algorithm is $\vec{z^0}$ satisfying $(1+\alpha) \cdot \vec{p}(\vec{z^0}) \in \cS$.
As a preprocessing step, we will first eliminate any coordinates of $\vec{z}$ that are
nearly integral:
if $z_i \leq \alpha/4$ we set $z_i$ to $0$,
and if $z_i \geq 1-\alpha/4$ we set $z_i$ to $1$.
In doing so, $p_j(\vec{z})$ can increase by at most a factor $(1-\frac{\alpha}{4})^{-1}$,
because $p_j$ is a probability and is multilinear in $z$.
So the resulting point $\vec{p}(\vec{z})$ still has slack at least $\frac{\alpha}{2}$.

As in the non-constructive version above, the algorithm has $m$
iterations, in which the $i^{\mathrm{th}}$ iteration rounds $z_i$ to
either $0$ or $1$.  Define $s_i = 1 + \frac{\alpha(m-i)}{2m}$. We maintain
the following invariant:
\ifbool{twocol}{
  \begin{equation}
\begin{gathered}
    \text{At the start of iteration $i$,}\\
    \text{the point $\vec{p}(\vec{z})$
  has slack at least $s_{i-1} - 1$.}
\end{gathered}\label{eq:13}\tag{$\dagger$}
\end{equation}

}{
  \begin{equation}
\text{At the start of iteration $i$,  the point $\vec{p}(\vec{z})$
  has slack at least $s_{i-1} - 1$.} \label{eq:13}\tag{$\dagger$}
\end{equation}
}
We then proceed to estimate $\frac{\partial \qdown_{[n]} }{ \partial z_i }$
at the point $s_i \cdot \vec{p}(\vec{z})$ which, due to the definition of $s_i$,
still has slack $\frac{\alpha}{4m}$.
Note that
\begin{equation}
\label{eq:qderiv}
\frac{\partial \qdown_{[n]} }{ \partial z_i }
\big( s_i \cdot \vec{p}(\vec{z}) \big)
~=~ \frac{1}{\delta} \Big(
      \qdown_{[n]}\big( s_i \cdot \vec{p}(\vec{z}) \big)
    - \qdown_{[n]}\big( s_i \cdot \vec{p}(\vec{z}-\delta \vec{e}_i) \big)
    \Big),
\end{equation}
by multilinearity.
Our algorithm will choose an appropriate $\delta$, then use the FPTAS to
estimate the two terms on the right-hand side with sufficiently small error.

First we must check that $s_i \cdot \vec{p}(\vec{z}-\delta \vec{e}_i)$
is still in the Shearer region, and with sufficient slack.  Set
$\delta=\frac{\alpha^2}{36m}$.  Since $z_i \geq \alpha/4$, we
certainly have $\vec{p}(\vec{z}-\delta \vec{e}_i) \geq 0$.  Next we
will prove that
\begin{equation}
\label{eq:coordinatewise}
\textstyle
s_i \cdot \vec{p}(\vec{z}-\delta \vec{e}_i) ~\leq~ (1 +
\frac{\alpha}{9m}) s_i \cdot \vec{p}(\vec{z})
\end{equation}
coordinate-wise.
Since $s_i \cdot \vec{p}(\vec{z})$ has slack $\frac{\alpha}{4m}$,
the inequality \eqref{eq:coordinatewise} will then imply that
$s_i \cdot \vec{p}(\vec{z}-\delta \vec{e}_i)$ has slack $\frac{\alpha}{8m}$.

Let us consider the $j$th coordinate in \eqref{eq:coordinatewise}.
Fixing all coordinates of $\vec{z}$ other than $z_i$,
we may write $p_j$ as the linear function $\mu z_i + \nu$,
where $\mu+\nu \geq 0$ since $p_j$ is a probability.
We may assume that $\mu \leq 0$, otherwise the inequality \eqref{eq:coordinatewise} is trivial.
Then, using $\mu \leq 0$ and recalling that $z_i \leq 1 - \frac{\alpha}{4}$, we have
\ifbool{twocol}{
  \begin{multline*}
    \frac{\mu(z_i-\delta)+\nu}{\mu z_i+\nu} ~=~ 1 - \frac{\mu
      \delta}{\mu z_i + \nu} ~\leq~ 1 - \frac{\mu \delta}{\mu
      (1-\frac{\alpha}{4}) + \nu} \\
    ~\leq~ 1 + \frac{\delta}{\alpha/4}
    ~=~ 1 + \frac{\alpha}{9m}.
  \end{multline*}
}{
  $$
\frac{\mu(z_i-\delta)+\nu}{\mu z_i+\nu}
~=~    1 - \frac{\mu \delta}{\mu z_i + \nu}
~\leq~ 1 - \frac{\mu \delta}{\mu (1-\frac{\alpha}{4}) + \nu}
~\leq~ 1 + \frac{\delta}{\alpha/4}
~=~    1 + \frac{\alpha}{9m}.
$$
}
This proves \eqref{eq:coordinatewise}.

Now let us return to our estimation of \eqref{eq:qderiv}.
For simplicity let us rewrite \eqref{eq:qderiv} with the following shorthand notation
\ifbool{twocol}{
  \begin{multline*}
    \ifbool{twocol}{\!\!\!\!\!\!\!\!\!}{}
    \underbrace{\frac{\partial \qdown_{[n]} }{ \partial z_i } \big(
      s_i \cdot \vec{p}(\vec{z}) \big)}_{a}
      \ifbool{twocol}{=}{~=~}
      \frac{1}{\delta} \Big(
    \underbrace{\qdown_{[n]}\big( s_i \cdot \vec{p}(\vec{z})
      \big)}_{q_0} - \underbrace{\qdown_{[n]}\big( s_i \cdot
      \vec{p}(\vec{z}-\delta \vec{e}_i) \big)}_{q_\delta} \Big)
    \\~~ \iff ~~ a ~=~ \frac{q_0-q_\delta}{\delta}.
  \end{multline*}
}{\[
\underbrace{\frac{\partial \qdown_{[n]} }{ \partial z_i }
\big( s_i \cdot \vec{p}(\vec{z}) \big)}_{a}
~=~ \frac{1}{\delta} \Big(
      \underbrace{\qdown_{[n]}\big( s_i \cdot \vec{p}(\vec{z}) \big)}_{q_0}
    - \underbrace{\qdown_{[n]}\big( s_i \cdot \vec{p}(\vec{z}-\delta \vec{e}_i) \big)}_{q_\delta}
    \Big)
\quad~~
\iff
\quad~~
a ~=~ \frac{q_0-q_\delta}{\delta}.
\]}
We set $\epsilon = \frac{\delta}{4}$, then use the FPTAS to compute quantities
$Q_0 \in [1-\epsilon,1] \cdot q_0$ and
$Q_\delta \in [1-\epsilon,1] \cdot q_\delta$,
then compute
$$
A ~=~ \frac{Q_0 - Q_\delta}{\delta}.
$$
It then follows that
\begin{equation}
\label{eq:Aadiff}
-\frac{\epsilon}{\delta} q_\delta ~\leq~ a-A ~\leq~ \frac{\epsilon}{\delta} q_0.
\end{equation}

The algorithm's rounding proceeds as follows. Let
$\vec{z'} = \vec{z}$.  If $A \geq 0$ then round $z'_i$ to $1$,
otherwise round $z'_i$ to $0$.  We claim that $\vec{z'}$ satisfies
$s_i \cdot \vec{p}(\vec{z'}) \in \cS$. This implies that $\vec{z'}$
has slack at least $s_i - 1$, so that the invariant \eqref{eq:13} is
satisfied at the start of the next (i.e., the $(i+1)$th) iteration.

To prove this claim, we begin by observing that \Lemma{continuous} implies that it
is sufficient to prove that while rounding the $i$th co-ordinate,
$\qdown_{[n]}( s_i \cdot \vec{p}(\vec{z''}) )$ is strictly positive on all
points $\vec{z''}$ lying on the straight line segment along which
$z_i$ is rounded.  Let $\vec{z''}$ be a point on this line segment.
Consider first the case $A < 0$.  Using \cref{eq:Aadiff}, we have
\ifbool{twocol}{
  \begin{align*}
    \textstyle \qdown_{[n]}( s_i \cdot \vec{p}(\vec{z''}) ) ~=~ q_0 -
    (z_i- z_i'') a
    &~\geq~ q_0 (1 - z_i \frac{\epsilon}{\delta} )\\
    &~\geq~ q_0 / 2 ~>~ 0.
  \end{align*}
}{\[
\textstyle
\qdown_{[n]}( s_i \cdot \vec{p}(\vec{z''}) )
~=~ q_0 - (z_i- z_i'') a
~\geq~ q_0 (1 - z_i \frac{\epsilon}{\delta} )
~\geq~ q_0 / 2 ~>~ 0.
\]
}
Next consider the case $A \geq 0$.
Then $0 \leq Q_0 - Q_\delta \leq q_0 - (1-\epsilon) q_\delta$,
and hence $q_\delta \leq (1+2\epsilon) q_0 \leq 2 q_0$.
Then, by eq.~\eqref{eq:Aadiff}, we have
$ a \geq - \frac{\epsilon}{\delta} q_\delta
  \geq - \frac{2 \epsilon}{\delta} q_0 $.
It follows that
\ifbool{twocol}{
  \begin{align*}
    \textstyle \qdown_{[n]}( s_i \cdot \vec{p}(\vec{z''}) ) ~=~ q_0 +
    (z_i''-z_i) a &~\geq~ q_0 \big( 1 - (1-z_i) \frac{2 \epsilon }{
      \delta } \big)\\
    &~\geq~ q_0 / 2 ~>~ 0.
  \end{align*}
}{
  $$
  \textstyle
\qdown_{[n]}( s_i \cdot \vec{p}(\vec{z''}) )
 ~=~ q_0 + (z_i''-z_i) a
~\geq~ q_0 \big( 1 - (1-z_i) \frac{2 \epsilon }{ \delta } \big)
~\geq~ q_0 / 2 ~>~ 0.
$$
}
This completes the claim that $s_i \cdot \vec{p}(\vec{z'}) \in \cS$.

\paragraph{Runtime.}
The FPTAS is invoked $O(m)$ times, each time with
$\epsilon = \frac{\delta}{4} = \frac{\alpha^2}{144m}$
and with slack at least $\alpha/8m$.
The runtime is therefore
\ifbool{twocol}{
  \begin{multline*}
    O(m) \cdot \Big(\frac{n}{\epsilon (\alpha/8m)}
    \Big)^{O\big(\log(d)/\sqrt{\alpha/8m}\big)}
    \\~=~
    \Big(\frac{nm}{\alpha} \Big)^{O\big(\log(d)\sqrt{m/\alpha}\big)}.
  \end{multline*}
}{
  $$
O(m) \cdot \Big(\frac{n}{\epsilon (\alpha/8m)} \Big)^{O\big(\log(d)/\sqrt{\alpha/8m}\big)}
~=~ \Big(\frac{nm}{\alpha} \Big)^{O\big(\log(d)\sqrt{m/\alpha}\big)}.
$$
}

%

%

%

%

%

%

%

%

%

%

%


\section{Testing membership in Shearer's region}
\label{sec:membership}

In this section we consider the following question:
\begin{question*}
  Given a graph $G$ and activities $p_v$ for all $v \in V$, is $\vec{p}$
  in the Shearer region of $G$?
\end{question*}
We recall that $\vec{p} \in {\cal S}_G$ if and only if $\breve{q}_S(\vec{p}) > 0$ for all $S \subseteq [n]$,
or equivalently if $\breve{q}_V(\vec{p}) > 0$ everywhere on the line segment connecting $\vec{0}$ and $\vec{p}$.
As we prove, it is \#P-hard to answer this question exactly (see~\Cref{app:hardness}).
On the other hand, in running time $2^{O(n)}$, we can trivially compute all Shearer's polynomials $\breve{q}_S(\vec{p})$ and answer this question. Here we show that we can test membership approximately in subexponential time.

\subsection{An algorithm to test membership in Shearer's region}

\begin{theorem}
  \TheoremName{membership}
  Given a graph $G$, $p_v \in (0,1)$ for $v \in V$, and
  $0<\alpha<1$, there exists a deterministic algorithm which, in running time
  $(n/\alpha)^{O(\sqrt{n / \alpha} \log d)}$, decides whether
  $\vec{p} \in {\cal S}_G$ or $(1+\alpha) \vec{p} \notin {\cal S}_G$.
\end{theorem}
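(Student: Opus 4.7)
The plan is to reduce membership testing in $\cS_G$ to a sequence of FPTAS evaluations of $\qdown_V$ at points along the ray $\{t\vec{p} : t \in [0,1+\alpha]\}$, using the characterization recalled at the start of this section: $\vec{p} \in \cS_G$ if and only if $\qdown_V(t\vec{p}) > 0$ for every $t \in [0,1]$. Combined with the downward closure of $\cS_G$ under coordinate-wise domination, the task reduces to deciding whether the univariate polynomial $\phi(t) \defeq \qdown_V(t\vec{p})$ stays positive through $t = 1$ (case (a)) or has a root in $[0, 1+\alpha]$ (case (b)).

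Concretely, I would sample breakpoints $t_i = t_0(1+\beta)^i$ with step factor $\beta = \alpha/(4n)$, initial $t_0$ small enough that $t_0\vec{p}$ is trivially in $\cS_G$ (for instance, taking $t_0 \max_v p_v \leq 1/(ed)$ places us well inside the classical LLL region), and terminal index $K = O((n/\alpha)\log(n/\alpha))$ chosen so that $t_K \geq 1 + \alpha/2$. At each step, I invoke the FPTAS of \Theorem{mainShearer} at $t_i\vec{p}$ with slack parameter $\beta$ and approximation factor $\epsilon = 1/4$; this call is valid precisely when $(1+\beta)t_i\vec{p} = t_{i+1}\vec{p} \in \cS_G$. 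Let $Q_i$ denote the output. I set a detection threshold $\tau = \tfrac{1}{4}(\beta/(1+\beta))^n$, motivated by the lower bound $\qdown_V(\vec{q}) \geq (\beta/(1+\beta))^n$, which holds whenever $\vec{q}$ has slack $\beta$ by combining part (3) of \Lemma{beta-prop} (giving $r_{S,u}(\vec{q}) \leq 1/(1+\beta)$) with the product formula of \Claim{selfreduc}. The algorithm advances while $Q_i > \tau$, halts at the first index $i^*$ with $Q_{i^*} \leq \tau$, and outputs ``$\vec{p} \in \cS_G$'' if the walk reaches some $t_i \geq 1$, and ``$(1+\alpha)\vec{p} \notin \cS_G$'' otherwise. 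Each FPTAS call costs $(n/(\beta\epsilon))^{O(\log d / \sqrt{\beta})} = (n/\alpha)^{O(\sqrt{n/\alpha}\,\log d)}$, and the $K$ outer iterations multiply only subpolynomially, matching the claimed running time.

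The easy direction of correctness is that halting implies infeasibility: if $Q_{i^*} \leq \tau$ then, by contrapositive of the lower bound above, the slack hypothesis of the FPTAS call must have failed, i.e., $t_{i^*+1}\vec{p} \notin \cS_G$, and then $(1+\alpha)\vec{p} \notin \cS_G$ by downward closure, since $t_{i^*+1} \leq 1 + \alpha$. The main obstacle is the converse: ruling out that the FPTAS spuriously returns $Q_i > \tau$ at a step where $t_{i+1}\vec{p}$ has silently left $\cS_G$, which would cause the walk to march past $t = 1$ while declaring a false positive. I expect to handle this by a direct analysis of the recursion inside \Algorithm{main}: as $t_{i+1}\vec{p}$ approaches the Shearer boundary, continuity forces at least one estimated occupation ratio $R_{S,u}$ produced by the recursion to approach $1$, and this in turn drives a factor $1 - R_{S_j,v_j}$ in the product $Q_i = \prod_j (1 - R_{S_j,v_j})$ towards zero. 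Tracking this collapse rigorously, starting from a node where slack $\beta$ does hold and deforming continuously along one geometric step of width $1+\beta$, is the delicate piece; it is exactly what pins down the specific spacing $\beta = \alpha/(4n)$ relative to $\tau$, and requires choosing $\epsilon$ small enough that approximation noise cannot mask the collapse in $Q_i$.
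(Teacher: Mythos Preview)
Your walk-along-the-ray scheme isolates the right difficulty, but the proposed fix does not close the gap. The issue is that when the true slack at $t_i\vec{p}$ is much smaller than the $\beta$ you feed to the FPTAS, \Theorem{decayShearer} gives no control on the output $Q_i$, and in fact $Q_i$ can remain bounded well above your threshold $\tau$. Concretely, for real positive inputs in $\cS_G$ an easy induction on the truncated recursion shows $0 \leq R_{S,u} \leq r_{S,u}$ at every node, so $Q_i = \prod_j(1-R_{S_j,v_j}) \geq \qdown_V(t_i\vec{p})$ always --- the FPTAS systematically \emph{over}-estimates. Your continuity heuristic runs the wrong way: near the boundary it is the \emph{exact} ratios $r_{S,u}$ that approach $1$, while the depth-$\ell$ truncations $R_{S,u}$, calibrated for slack $\beta$, can lag arbitrarily far behind. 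For instance, on a path with uniform activity just below $1/4$, each $R_{S_j,v_j}$ stays near $1/2$ at any fixed depth, so $Q_i$ is of order $2^{-n}$, dwarfing $\tau \approx (\alpha/4n)^n$. Once the walk silently crosses the boundary, subsequent calls are completely uncontrolled and the algorithm may reach $t \geq 1$ and wrongly declare $\vec{p} \in \cS_G$. No tuning of $\beta$, $\tau$, or $\epsilon$ repairs this: large $Q_i$ simply carries no information about the slack.

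The paper's proof supplies the missing two-sided slack estimate by evaluating, in addition to $q_\emptyset = \qdown_V$, the directional derivatives $q_{\{i\}} = p_i\,\qdown_{V\setminus\Gamma^+(i)}$ (each also an instance of the FPTAS on a smaller graph), and forming $\gamma(\vec{\tilde p}) = q_\emptyset(\vec{\tilde p})\big/\sum_i q_{\{i\}}(\vec{\tilde p})$. \Lemma{slack-test} shows the true slack lies between $\gamma$ and $n\gamma$. Hence a constant-factor estimate of $\gamma$ both certifies enough slack to safely advance by a factor $1+\Theta(\alpha/n)$ and, when $\gamma$ drops below $\Theta(\alpha/n)$, certifies slack below $\alpha$, so one may halt and output $(1+\alpha)\vec{p}\notin\cS_G$. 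This invariant --- that every FPTAS call is made at a point with \emph{certified} slack $\Omega(\alpha/n)$ --- is precisely what your scheme lacks.
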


\medskip

\emph{Notation.} It will be convenient to express some of the
arguments in this section in terms of Shearer polynomials $q$ defined
as follows (see, e.g.~\cite{HV15}).  For any subset $S$ of vertices in
a graph $G = (V, E)$,
\begin{equation}
  \label{eq:15}
  q_S(\vec{p}) \defeq \sum_{\substack{I \in \Ind(G)\\I \supseteq
      S}}(-1)^{|I\setminus S|}\prod_{i \in S}p_i.
\end{equation}
Note that these polynomials are related to the $\breve{q}$ as follows:
\begin{displaymath}
  q_S(\vec{p}) =
  \begin{cases}
    \inp{\prod_{i \in S}p_i} \breve{q}_{V \setminus \Gamma^+(S)} & \text{ if
      $S \in \Ind(G)$,}\\
    0 &\text{ otherwise }.
  \end{cases}
\end{displaymath}

We want to harness our evaluation algorithm for $q_\emptyset(\vec{p}) = \breve{q}_V(\vec{p})$.
However, we need to proceed carefully, since evaluating $q_\emptyset({\vec p})$ without knowing a lower bound on the slack might give the wrong answer. Hence we start from a point which is guaranteed to be in the Shearer region, and we maintain a lower bound on the slack at each point. We will use the following lemma.

\begin{lemma}
\LemmaName{slack-test}
For a point $\vec{p} \in {\cal S}_G$, let $\gamma(\vec{p}) = 1 / \sum_{i=1}^{n} \frac{q_{\{i\}}(\vec{p})}{q_\emptyset(\vec{p})}.$ Then the slack of $\vec{p}$ is between $\gamma(\vec{p})$ and $n \gamma(\vec{p})$, or in other words: $(1+\gamma(\vec{p})-\epsilon) \vec{p} \in {\cal S}_G$ for every $\epsilon>0$, and $(1+n\gamma(\vec{p})) \vec{p} \notin {\cal S}_G$.
\end{lemma}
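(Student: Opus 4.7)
The plan is to parameterize along the ray from the origin to $\vec{p}$ by defining $F(t) \defeq \breve{q}_V(t\vec{p})$, a polynomial in $t$ of degree at most $n$ with $F(0)=1$ and $F(1)>0$. By \Lemma{breve-prop}, whenever $\vec{p}' \in \cS$ we have $\breve{q}_V(\vec{p}') \leq \breve{q}_S(\vec{p}')$ for every $S \subseteq V$, so as $t$ grows past $1$ the first of the polynomials $\breve{q}_S(t\vec{p})$ to reach $0$ is $F$ itself; hence the slack in direction $\vec{p}$ equals $t^*-1$, where $t^* > 1$ is the smallest positive real root of $F$. A short multilinearity computation using $\partial \breve{q}_V / \partial p_v = -\breve{q}_{V \setminus \Gamma^+(v)}$ then gives $F'(1) = -\sum_i q_{\{i\}}(\vec{p})$, so $|F'(1)| = F(1)/\gamma$. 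The two halves of the lemma thus become $\gamma \leq t^* - 1 \leq n\gamma$.

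For the lower bound I would establish the uniform derivative bound $|F'(t)| \leq F(1)/\gamma$ on $[1, t^*)$. Since $|F'(t)| = \sum_i p_i \breve{q}_{V \setminus \Gamma^+(i)}(t\vec{p})$ and each $\breve{q}_S$ is strictly decreasing in every coordinate throughout $\cS$ (immediate from $\partial \breve{q}_S / \partial p_v = -\breve{q}_{S \setminus \Gamma^+(v)} < 0$), monotonicity along the ray gives $\breve{q}_{V \setminus \Gamma^+(i)}(t\vec{p}) \leq \breve{q}_{V \setminus \Gamma^+(i)}(\vec{p})$ for $t \in [1, t^*)$. Summing yields the claimed bound, and integrating gives $F(t) \geq F(1)\bigl(1 - (t-1)/\gamma\bigr)$ on $[1, t^*)$. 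A contradiction argument---if $t^* < 1 + \gamma$ then $F(t^*)$ would be strictly positive---forces $t^* \geq 1 + \gamma$, proving $(1 + \gamma - \epsilon)\vec{p} \in \cS$ for every $\epsilon > 0$.

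For the upper bound I would exploit the self-reducibility factorization from \Cref{clm:selfreduc}: for any vertex ordering $v_1, \ldots, v_n$ with $S_i = \{v_i, \ldots, v_n\}$, we have $F(t) = \prod_{i=1}^n h_i(t)$ where $h_i(t) \defeq 1 - r_{S_i, v_i}(t\vec{p})$. By \Lemma{r-convex} each $h_i$ is concave and non-increasing in $t$ on the interval where $t\vec{p} \in \cS$, so the tangent upper bound $h_i(t) \leq h_i(1) + h_i'(1)(t-1)$ is non-positive at $t = 1 + \tau_i$ with $\tau_i \defeq h_i(1)/|h_i'(1)|$. Since $h_i > 0$ on $[1, t^*)$, this forces $1 + \tau_i \geq t^*$, i.e., $\tau_i \geq t^* - 1$ for every $i$. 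On the other hand, differentiating $\log F = \sum_i \log h_i$ at $t = 1$ gives $\sum_i 1/\tau_i = -F'(1)/F(1) = 1/\gamma$, and the elementary bound $\min_i \tau_i \leq n/\sum_i (1/\tau_i)$ (minimum $\leq$ harmonic mean) closes the argument: $t^* - 1 \leq \min_i \tau_i \leq n\gamma$.

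The main obstacle I foresee is the upper bound. A naive differential inequality obtained by lower-bounding $|F'(t)|$ via $\breve{q}_{V \setminus \Gamma^+(i)}(t\vec{p}) \geq F(t)$ only yields $F(1 + n\gamma) \leq F(1) e^{-n}$, which is strictly positive and so cannot certify that we have left $\cS$. The self-reducibility factorization is what rescues the argument: it converts the multivariate positivity problem into $n$ one-dimensional concavity estimates, after which the harmonic-mean identity $\sum_i 1/\tau_i = 1/\gamma$ finishes things off in one line.
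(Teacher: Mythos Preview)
Your argument is correct.  For the lower bound you essentially reproduce the paper's proof: the paper cites the convexity of $t \mapsto q_\emptyset((1+t)\vec{p})$ from \cite{HV15} and applies the tangent-line inequality $q_\emptyset((1+\lambda)\vec{p}) \geq q_\emptyset(\vec{p}) - \lambda\sum_i q_{\{i\}}(\vec{p})$ directly, whereas you re-derive that same convexity via coordinate-wise monotonicity of each $\breve{q}_{V\setminus\Gamma^+(i)}$ and then integrate.  These are the same argument in different clothes.

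Your upper bound, however, takes a genuinely different route.  The paper does not stay on the ray $t\vec{p}$ at all: it picks the index $i^*$ maximizing $q_{\{i\}}(\vec{p})$, uses that $q_\emptyset$ is \emph{linear} in the single variable $p_{i^*}$ with $p_{i^*}\,\partial q_\emptyset/\partial p_{i^*} = -q_{\{i^*\}}$, and observes that increasing only coordinate $i^*$ by the amount $(q_\emptyset/q_{\{i^*\}})\,p_{i^*}$ drives $q_\emptyset$ to zero exactly.  Since $n\gamma \geq q_\emptyset/q_{\{i^*\}}$ by the choice of $i^*$, the point $(1+n\gamma)\vec{p}$ coordinate-wise dominates this boundary point, and the down-set property of $\cS$ finishes.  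This is a two-line computation relying only on the multilinearity of $\breve{q}_V$.  Your approach instead leans on more of the paper's structural machinery---the telescoping factorization of \Cref{clm:selfreduc} and the concavity from \Cref{lem:r-convex}---together with the harmonic-mean identity $\sum_i 1/\tau_i = -F'(1)/F(1) = 1/\gamma$.  Both reach the same bound; the paper's is shorter and more elementary, while yours has the virtue of working entirely on the ray and producing the sharper intermediate conclusion $t^*-1 \leq \min_i \tau_i$ before passing to $n\gamma$.
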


\begin{proof}
We know from \cite{HV15} that $q_\emptyset((1+t) \vec{p})$ is a convex function of $t$, and
\ifbool{twocol}{
  \begin{multline*}
    \frac{d}{dt} q_\emptyset((1+t) \vec{p}) \Big|_{t=0} =
    \sum_{i=1}^{n} p_i \frac{\partial q_\emptyset}{\partial
      p_i}(\vec{p}) = -\sum_{i=1}^{n} p_i \breve{q}_{V \setminus
      \Gamma^+(i)}(\vec{p})\\
    = -\sum_{i=1}^{n} q_{\{i\}}(\vec{p}).
  \end{multline*}
}{
  $$ \frac{d}{dt} q_\emptyset((1+t) \vec{p}) \Big|_{t=0} = \sum_{i=1}^{n} p_i \frac{\partial q_\emptyset}{\partial p_i}(\vec{p})
= -\sum_{i=1}^{n} p_i \breve{q}_{V \setminus \Gamma^+(i)}(\vec{p}) =
-\sum_{i=1}^{n} q_{\{i\}}(\vec{p}).$$
}
Therefore, for $\lambda < \gamma(\vec{p}) = 1 / \sum_{i=1}^{n} \frac{q_{\{i\}}(\vec{p})}{q_\emptyset(\vec{p})}$,
we have
\ifbool{twocol}{
  \begin{align*}
    \breve{q}_V((1+\lambda)\vec{p})
    &= q_\emptyset((1+\lambda)\vec{p})
    \\
    &\geq q_\emptyset(\vec{p}) + \lambda \frac{d}{dt} q_\emptyset((1+t)
    \vec{p}) \Big|_{t=0}
    \\
    &= q_\emptyset(\vec{p}) - \lambda \sum_{i=1}^{n} q_{\{i\}}(\vec{p}) > 0.
  \end{align*}
}{
  $$\breve{q}_V((1+\lambda)\vec{p}) = q_\emptyset((1+\lambda)\vec{p})
\geq q_\emptyset(\vec{p}) + \lambda \frac{d}{dt} q_\emptyset((1+t)
\vec{p}) \Big|_{t=0} = q_\emptyset(\vec{p}) - \lambda \sum_{i=1}^{n}
q_{\{i\}}(\vec{p}) > 0.$$
}
Along with \Cref{lem:continuous}, this implies that
$(1+\gamma(\vec{p})-\epsilon) \vec{p} \in {\cal S}_G$ for every
$\epsilon>0$, since $\vec{p} \in {\cal S}_G$ and
$q_\emptyset((1+\lambda) \vec{p}) > 0$ for every
$\lambda \in [0,\gamma(\vec{p}))$.

On the other hand, we have $n \gamma(\vec{p}) \geq \frac{q_\emptyset(\vec{p})}{q_{\{i^*\}}(\vec{p})}$ where
$q_{\{i^*\}}(\vec{p}) = \max_i q_{\{i\}}(\vec{p})$. Since $q_\emptyset(\vec{p})$ is linear in $p_{i^*}$,
we obtain
$$ q_\emptyset\left(\vec{p} + \frac{q_\emptyset(\vec{p})}{q_{\{i^*\}}(\vec{p})} p_{i^*} {\mathbf e}_{i^*} \right)
= q_\emptyset(\vec{p}) + p_{i^*}
\frac{q_\emptyset(\vec{p})}{q_{\{i^*\}}(\vec{p})}
\frac{\partial{q_\emptyset}}{\partial p_{i^*}} = 0,$$ since
$p_i\frac{\partial{q_\emptyset}}{\partial p_{i^*}} = -
q_{\inb{i^*}}(\vec{p})$.  Thus, by monotonicity of the Shearer region,
$(1+n\gamma(\vec{p})) \vec{p} \notin {\cal S}_G$.
\end{proof}

Now we can prove \Theorem{membership}.

\begin{proof}
  We maintain a point $\vec{\tilde{p}}$ such that
  $(1+\frac{\alpha}{8n}) \vec{\tilde{p}} \in {\cal S}_G$.  We start
  with $\vec{\tilde{p}} = \frac{1}{2n}\cdot \vec{p}$ (where $\vec{p}$
  is the input vector) for which the statement is certainly true
  (since $(\frac{1}{n}, \ldots, \frac{1}{n}) \in {\cal S}_G$ for any
  graph $G$).  Note further that this initial point has slack $O(n)$,
  since if each $p_v$ was less than $1/n$, $\vec{p} \in \cS$ is
  trivially true.

Given $\vec{\tilde{p}}$, we compute $q_\emptyset(\vec{\tilde{p}})$ and $q_{\{i\}}(\vec{\tilde{p}})$ for all $i$ within a relative error of $\epsilon = 1/2$. More precisely, we obtain estimates within $[1, 3/2]$ times the correct value.
Thus we can also compute $\gamma(\vec{\tilde{p}}) = 1 / \sum_{i=1}^{n} \frac{q_{\{i\}}(\vec{\tilde{p}})}{q_\emptyset(\vec{\tilde{p}})}$
within $[2/3,3/2]$ times the correct value. Let us call this estimate $\tilde{\gamma}$. If $\tilde{\gamma} \leq \frac{\alpha}{2n}$ then we can conclude by \Lemma{slack-test} that $(1+\alpha) \vec{\tilde{p}} \notin {\cal S}_G$.

If on the other hand $\tilde{\gamma} > \frac{\alpha}{2n}$, we know by \Lemma{slack-test} that $(1 + \frac{\alpha}{3n}) \vec{\tilde{p}} \in {\cal S}_G$, and hence $(1 + \frac{\alpha}{6n}) \vec{\tilde{p}}$ still has a slack of say $\frac{\alpha}{8n}$.
In this case we replace $\vec{\tilde{p}}$ by $(1 + \frac{\alpha}{6n}) \vec{\tilde{p}}$ and continue.

Let us analyze the running time. Whenever we evaluate $q_\emptyset(\vec{\tilde{p}}) = \breve{q}_V(\vec{\tilde{p}})$ and $q_{\{i\}}(\vec{\tilde{p}}) = p_i \breve{q}_{V \setminus \Gamma^+(i)}(\vec{\tilde{p}})$, we have a guaranteed slack of $\Omega(\alpha/n)$. By \Theorem{mainShearer}, we can evaluate these polynomials within a relative error of $\epsilon = 1/2$ in running time $(\frac{2n}{\alpha})^{4 \sqrt{n/\alpha} \log d} = (n/\alpha)^{O(\sqrt{n / \alpha} \log d)}$. Every time we iterate, the actual slack goes down by a factor of $1 - \Omega(1/n)$ (from \Lemma{slack-test}). Since the multiplicative slack at the beginning is $O(n)$, and we stop when the slack becomes $O(\alpha/n)$, the number of iterations is $O(n \log (n/\alpha))$. Thus the total running time is $(n/\alpha)^{O(\sqrt{n / \alpha} \log d)}$.
\end{proof}


%
%
%

\section{Conclusions and open questions}
\label{sec:concl}
The main open question left open by our work is whether the dependence
on the slack in \Theorem{mainShearerIntro} can be improved from
sub-exponential to polynomial.  In all our applications of
\Theorem{mainShearerIntro}, we have to work with sub-constant values
of the slack $\alpha$, and it is the sub-exponential dependence on
$1/\alpha$ that prevents us from giving polynomial time algorithms for
these applications.

\begin{question}
    Is there an algorithm to estimate $\qdown_V(\vec{p})$ up to a $(1+\epsilon)$-multiplicative
    factor in $n$-vertex graphs of maximum degree $d$,
    assuming that $(1+\alpha) \vec{p} \in \mathcal{S}$,
    in running time $(\frac{n}{\alpha \epsilon})^{O(\log d)}$?
\end{question}
In \Cref{sec:optim-decay-rate}, we provide some evidence to show
that the correlation decay technique may not be capable of completely
removing the sub-exponential dependence on $1/\alpha$.  On the other
hand, the result of Patel and Regts~\cite{PatelRegts}, to the best of
our knowledge, has an even worse exponential dependence on $1/\alpha$,
which, as discussed in \Cref{sec:relat-work-techn}, appear to be
intrinsic to the techniques used there. %
A ray of hope is however offered by some recent progress in the
positive activity setting, where Efthymiou, Hayes,
\v{S}tefankovi\v{c}, Vigoda and Yin~\cite{efthymiou2016convergence}
were able to obtain an FPRAS for the independence polynomial on graphs
of large enough bounded degree and large enough girth, the exponent of
whose running time has no dependence on the analog of the slack in the
positive activity setting (the slack only appears in the time
complexity of their algorithm as a multiplicative factor).  The
starting point of their result is the tight connection between
approximation of the independence polynomial at positive activities
and sampling from the associated \emph{Gibbs distribution}; they then
exploit this connection by showing that a natural Markov chain can
sample efficiently from the Gibbs distribution.  Their proof uses the
connection between correlation decay and the mixing properties of
Markov chains for the Gibbs distribution in an novel interesting
fashion.

Unfortunately, to the best of our knowledge, no natural analogue of
the Gibbs distribution is known for the negative activities setting.
Thus, it remains an open problem to find if, and how, sampling
techniques such as Markov chain Monte Carlo can be brought to bear
upon the above question.  Moreover, for applications to the LLL in the
variable model, one would also need to remove the large girth
assumption that appears to be crucial to the result of Efthymiou et
al.~\cite{efthymiou2016convergence}.

%

%

%


%
\clearpage
\appendix
\section{The Lov\'{a}sz Local Lemma and Shearer's Lemma}
\label{app:LLL}

The Lov\'asz Local Lemma (LLL) is a fundamental tool used in combinatorics
to argue that the probability that none of a set of suitably
constrained bad events occurs is positive.  In abstract terms, the
lemma is formulated in terms of $n$ events $\cE_1, \cE_2, \dots, \cE_n$
and a probability distribution $\mu$ on the events.  However, only two
pieces of data about the distribution $\mu$ are used in the
formulation of lemma:
\begin{itemize}
\item The marginal probabilities $p_i \defeq \mu(\cE_i)$ of the
  events, and
\item A \emph{dependency graph} $G = (V, E)$ associated with $\mu$.
  The vertices $V$ are identified with the
  events $\cE_1, \cE_2, \dots, \cE_n$, and the graph is interpreted as
  stipulating that under the distribution $\mu$, the event $\cE_i$ is
  independent of its non-neighbors in the graph $G$.
\end{itemize}
The original LLL \cite{ErdosLovasz,Spencer77} provides \emph{sufficient} conditions on the $p_i$
and the dependency graph $G$ that ensure that $\mu\inp{\Intersect_{i=1}^n\overline{\cE_i}}>0$,
and hence $\Intersect_{i=1}^n\overline{\cE_i} \neq \emptyset$.

Define the set
\ifbool{twocol}{
\begin{equation*}
\cL
 ~=~ \{ \vec{p} \in [0,1]^n : \exists \vec{x} \in (0,1)^n ~\text{s.t.}~ 
             p_i \leq x_i \prod_{j \in \Gamma(i)} (1-x_j) \}.
\end{equation*}
}{
\begin{equation*}
\cL
 ~=~ \setst{ \vec{p} \in [0,1]^n }{ \exists \vec{x} \in (0,1)^n ~\text{s.t.}~ 
             p_i \leq x_i \prod_{j \in \Gamma(i)} (1-x_j) }.
\end{equation*}
}
\begin{theorem}[The Lov{\'{a}}sz Local Lemma \cite{ErdosLovasz,Spencer77}]
\TheoremName{LLL}
If $\vec{p} \in \cL$ then $\mu\inp{\Intersect_{i=1}^n\overline{\cE_i}} > 0$.
\end{theorem}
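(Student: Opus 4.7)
The plan is to prove the standard strengthening of the conclusion: for every $i \in [n]$ and every $S \subseteq [n] \setminus \{i\}$, one has
\[
\mu\!\left(\cE_i \,\Big|\, \Intersect_{j \in S} \overline{\cE_j}\right) \leq x_i,
\]
where $\vec{x} \in (0,1)^n$ is the witness vector guaranteed by $\vec{p} \in \cL$. Once this inequality is established, the theorem follows immediately by the chain rule:
\[
\mu\!\left(\Intersect_{i=1}^n \overline{\cE_i}\right) \;=\; \prod_{i=1}^n \mu\!\left(\overline{\cE_i} \,\Big|\, \Intersect_{j<i} \overline{\cE_j}\right) \;\geq\; \prod_{i=1}^n (1-x_i) \;>\; 0,
\]
since each $x_i \in (0,1)$.

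To prove the inductive claim, I will induct on $|S|$. The base case $|S|=0$ is handled by $\mu(\cE_i) = p_i \leq x_i \prod_{j \in \Gamma(i)}(1-x_j) \leq x_i$. For the inductive step, I would partition $S$ into its neighborhood and non-neighborhood parts in the dependency graph: let $S_1 = S \cap \Gamma(i)$ and $S_2 = S \setminus S_1$. Then write
\[
\mu\!\left(\cE_i \,\Big|\, \Intersect_{j \in S}\overline{\cE_j}\right)
\;=\; \frac{\mu\!\left(\cE_i \cap \Intersect_{j \in S_1}\overline{\cE_j} \,\Big|\, \Intersect_{j \in S_2}\overline{\cE_j}\right)}{\mu\!\left(\Intersect_{j \in S_1}\overline{\cE_j} \,\Big|\, \Intersect_{j \in S_2}\overline{\cE_j}\right)}.
\]
For the numerator, drop $\Intersect_{j \in S_1}\overline{\cE_j}$ and use that $\cE_i$ is independent of every event indexed by $S_2$ (since $S_2 \subseteq [n] \setminus \Gamma^+(i)$) to get an upper bound of $\mu(\cE_i) = p_i$. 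For the denominator, expand as a telescoping product over the elements of $S_1 = \{j_1, \dots, j_k\}$,
\[
\mu\!\left(\Intersect_{t=1}^{k}\overline{\cE_{j_t}} \,\Big|\, \Intersect_{j \in S_2}\overline{\cE_j}\right)
\;=\; \prod_{t=1}^{k}\left(1 - \mu\!\left(\cE_{j_t} \,\Big|\, \Intersect_{s<t}\overline{\cE_{j_s}} \cap \Intersect_{j \in S_2}\overline{\cE_j}\right)\right),
\]
and apply the induction hypothesis to each factor (the conditioning sets are strictly smaller than $S$) to bound this from below by $\prod_{t=1}^{k}(1 - x_{j_t}) \geq \prod_{j \in \Gamma(i)}(1-x_j)$. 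Combining the two bounds yields the ratio is at most $p_i / \prod_{j \in \Gamma(i)}(1-x_j) \leq x_i$, as desired.

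The main conceptual step is choosing this particular strengthening of the statement to induct on — the raw claim $\mu(\Intersect_i \overline{\cE_i}) > 0$ is not directly amenable to induction, so one must upgrade it to a conditional bound that is preserved under adding or removing events from the conditioning. The partition into $S_1$ and $S_2$ is the other key trick: it is exactly what lets one apply the independence property (on $S_2$) and the induction hypothesis (on $S_1$) simultaneously. There is no real analytic obstacle beyond these two design choices — the remainder is routine manipulation of conditional probabilities.
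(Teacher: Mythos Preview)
The paper does not give its own proof of this statement; it is stated as a classical result with citations to Erd\H{o}s--Lov\'asz and Spencer, and then the paper moves directly on to Shearer's lemma. Your argument is the standard textbook proof of the asymmetric LLL and is correct, with one minor point worth making explicit: the induction should simultaneously establish that $\mu\bigl(\Intersect_{j\in S}\overline{\cE_j}\bigr)>0$ for every $S$, so that all the conditional probabilities you manipulate are well-defined. This follows from the same chain-rule expansion and the inductive bound, but it is worth stating since otherwise the ratio you write down is not a priori meaningful.
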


Shearer's remarkable lemma~\cite{Shearer} provides \emph{necessary and sufficient}
conditions for $\mu(\Intersect_{i=1}^n \overline{\cE_i})>0$
to hold for a given dependency graph $G$ and vector of probabilities $\vec{p} \in [0,1]^V$.
Scott and Sokal~\cite{ScottSokal} showed that Shearer's conditions can be
expressed very succinctly in the language of partition functions.
Recall the definition of $\cS$ from Section~\ref{sec:our-results}. 

\begin{theorem}[Shearer's Lemma \cite{Shearer,ScottSokal}]
\TheoremName{Shearer}
If $\vec{p} \in \cS$ then $\mu\inp{\Intersect_{i=1}^n\overline{\cE_i}} > 0$.

Conversely, if $\vec{p} \not\in \cS$ then there exists a distribution $\mu$
with dependency graph $G$ satisfying $\mu(\cE_i) = p_i$
such that $\mu\inp{\Intersect_{i=1}^n\overline{\cE_i}} = 0$.
\end{theorem}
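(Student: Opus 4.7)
The plan is to prove both implications. For the forward direction, I would establish the stronger quantitative lower bound
\[
\mu\!\left(\Intersect_{i=1}^n \overline{\cE_i}\right) \;\geq\; \qdown_V(\vec{p}),
\]
which is strictly positive for $\vec{p} \in \cS$ by \Cref{lem:breve-prop}. The converse requires an explicit construction of a worst-case distribution.

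For the forward bound, I would use induction on $|S|$ to prove $\mu(\Intersect_{j \in S} \overline{\cE_j}) \geq \qdown_S(\vec{p})$ for every $S \subseteq V$. Using the chain-rule factorisation of the probability and the telescoping from \Cref{clm:selfreduc}, this reduces to showing
\[
\mu\!\left(\cE_u \mid \Intersect_{j \in S \setminus \{u\}} \overline{\cE_j}\right) \;\leq\; r_{S, u}(\vec{p})
\]
for every $S$ and every $u \in S$. In the inductive step I would partition $S \setminus \{u\}$ into neighbours $N = \Gamma(u) \cap S$ and non-neighbours $M$. The mutual independence of $\cE_u$ from $\{\cE_j : j \in M\}$ bounds the numerator $\mu(\cE_u \wedge \Intersect_{j \in S \setminus \{u\}} \overline{\cE_j})$ above by $p_u \cdot \mu(\Intersect_{j \in M}\overline{\cE_j})$. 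The denominator is expanded as $\mu(\Intersect_{j \in M}\overline{\cE_j}) \cdot \prod_i (1 - \mu(\cE_{w_i} \mid \text{earlier conditions}))$ for an ordering $w_1, \ldots, w_k$ of $N$; applying the inductive hypothesis to each of the $k$ conditional factors yields a lower bound that matches the product appearing in the recurrence of \Cref{lem:recurrence}, completing the induction.

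For the converse, given $\vec{p} \notin \cS$ I would construct a distribution attaining $\mu(\Intersect_i \overline{\cE_i}) = 0$ via the hard-core model on $G$. For activities $\vec{\lambda}$ on $V$, let $\mu_{\vec{\lambda}}(I) \propto \prod_{v \in I}\lambda_v$ over independent sets of $G$, and take $\cE_i = \{i \in I\}$; the dependency graph of these events is a subgraph of $G$. A continuity and inverse-function argument lets one tune $\vec{\lambda}$ to match any marginals in the interior of $\cS$. Letting the target marginals approach a point $\vec{p}' \in \partial \cS$ with $\vec{p}' \leq \vec{p}$ (which exists since $\vec{0} \in \cS$ and $\cS$ is open) and extracting a weakly convergent subsequence produces a limit distribution with marginals $\vec{p}'$ at which $\mu(\Intersect_i \overline{\cE_i}) = Z_G(-\vec{p}') = 0$. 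Inflating marginals from $\vec{p}'$ up to $\vec{p}$ using independent auxiliary randomness preserves the dependency structure and the vanishing of $\mu(\Intersect_i \overline{\cE_i})$; this is essentially the Scott--Sokal construction~\cite{ScottSokal}.

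The main obstacle is matching the probabilistic induction to the combinatorial recursion: conditioning on $B_M$ together with the already-exposed events $\overline{\cE_{w_1}}, \ldots, \overline{\cE_{w_{i-1}}}$ alters the conditional marginal of $\cE_{w_i}$, so one has to verify carefully that the inductive hypothesis still applies to this new conditional distribution and that the induced subproblems are exactly those appearing in $\children(S, u)$. Getting the ordering of $N$ and the corresponding conditioning to line up with the definition in \Cref{lem:recurrence} is the subtle step, and is where Shearer's original argument concentrates its work. The converse direction is conceptually easier but requires the nontrivial analytic step of inverting the map from activities to marginals on $\cS$ and passing to the boundary by compactness.
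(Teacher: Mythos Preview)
The paper does not prove this theorem; it is stated in \Cref{app:LLL} as a cited result of Shearer and Scott--Sokal. The only proof the paper supplies is in \Cref{sec:ShearerVar}, and that argument is restricted to the \emph{variable model} and to the \emph{forward direction} (existence of a good assignment). Moreover, that proof is entirely different from yours: rather than the inductive comparison of conditional probabilities with the $r_{S,u}$, the paper fixes a fractional assignment $\vec{z}$ with $\vec{p}(\vec{z})\in\cS$, observes that $\qdown_V(\vec{p}(\vec{z}))$ is multilinear in $\vec{z}$, and rounds the coordinates $z_i$ one at a time in the direction that does not decrease $\qdown_V$, using \Cref{lem:breve-prop} and \Cref{lem:continuous} to certify that the path stays inside $\cS$. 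Your forward argument is the classical Shearer induction and is correct; it proves the general (non--variable-model) statement and also yields the quantitative bound $\mu(\bigcap_i\overline{\cE_i})\ge\qdown_V(\vec{p})$, which the rounding argument does not. Conversely, the rounding argument is the one that drives the paper's algorithmic applications.

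Your converse argument, however, has a genuine gap. In the hard-core Gibbs measure $\mu_{\vec\lambda}(I)\propto\prod_{v\in I}\lambda_v$ on independent sets of $G$, the events $\cE_i=\{i\in I\}$ do \emph{not} in general have $G$ as a dependency graph: non-adjacent vertices are correlated through the partition function. For instance, on the path $1\text{--}2\text{--}3$ with all $\lambda_i>0$ one computes $\Pr[\cE_1\mid\cE_3]=\lambda_1/(1+\lambda_1)\neq\Pr[\cE_1]$ unless $\lambda_2=0$. Consequently the hard-core model does not realize the Shearer bound, and even at boundary marginals it need not satisfy $\mu(\bigcap_i\overline{\cE_i})=0$ (same path example: marginals $(1/2,1/4,1/2)\in\partial\cS$ are attained at $\lambda=(2,3,2)$, where $\mu(\emptyset)=1/12>0$, but this model has the complete graph, not the path, as dependency graph). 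The correct extremal construction, due to Shearer, is the explicit measure on independent sets given by $\mu(I)=q_I(\vec{p})=\bigl(\prod_{i\in I}p_i\bigr)\qdown_{V\setminus\Gamma^+(I)}(\vec{p})$ (cf.\ \cref{eq:15}); for $\vec{p}\in\overline{\cS}$ these weights are nonnegative, sum to one, have $G$ as a dependency graph, give marginals exactly $p_i$, and satisfy $\mu(\bigcap_i\overline{\cE_i})=\qdown_V(\vec{p})$.
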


The fact that $\cL \subseteq \cS$ follows indirectly from the statements of
Theorems~\ref{thm:LLL} and \ref{thm:Shearer}, 
but a direct argument is also known \cite[Corollary 5.37]{HV15}.

\subsection{The Variable Model}
\label{app:VarMod}

In order to design algorithmic forms of the LLL, some assumption must be made
on the probability space. 
The most natural assumption is the ``variable model'',
in which the probability space consists of $m$ independent variables with finite domain,
each event depends on some subset of the variables,
and that two events are adjacent in $G$
if there is a common variable on which both depend.

This paper will focus on the specific scenario where the variables take values in $\set{0,1}$.
Concretely, the probability space is supported on $\Omega = \set{0,1}^m$.
Its distribution is $\mu_{\vec{z}}$, the product distribution on $\Omega$
with marginal vector $\vec{z}$.
That is, if $\vec{\omega}$ has distribution $\mu_{\vec{z}}$ then $\E[]{\vec{\omega}}=\vec{z}$.
Each event is a Boolean function of $\vec{\omega}$ that only depends on
certain coordinates $A_i \subseteq [m]$.
Let $G$ be the dependency graph on $V=[n]$
where $i$ and $j$ are adjacent if $A_i \intersect A_j \neq \emptyset$.
Since $\mu_{\vec{z}}$ is a product distribution,
each event $\cE_i$ is clearly independent from its non-neighbors in $G$.

%

\section{Proofs from \Section{preliminaries}}

\begin{proof}[Proof of \Cref{clm:selfreduc}]
  From \cref{eq:11}, we have, for every $1 \leq i \leq n$,
  \ifbool{twocol}{$  1 - r_{S_i, v_i} =
    \frac{\qdown_{S_i}}{\qdown_{S_{i+1}}}.$ }{
  \begin{displaymath}
    1 - r_{S_i, v_i} = \frac{\qdown_{S_i}}{\qdown_{S_{i+1}}}.
  \end{displaymath}
}
Multiplying these equations, we get
  \begin{displaymath}
    \prod_{i=1}^n(1 - r_{S_i, v_i}) =
    \frac{\qdown_{S_1}}{\qdown_{S_{n+1}}} = \frac{\qdown_V}{\qdown_\emptyset}.
  \end{displaymath}
  Since $\qdown_\emptyset = 1$, this yields the claim.
\end{proof}

\begin{proof}[Proof of \Cref{lem:recurrence}]
  Define $S_i = S\setminus\inb{u, v_1, v_2, \dots, v_{i-1}}$ for
  $1 \leq i \leq k$.  From \cref{eq:11}, we then have, for
  $1 \leq i \leq k$,
  \ifbool{twocol}{$\frac{1}{1 - r_{S_i, v_i}} =
    \frac{\qdown_{S_{i+1}}}{\qdown_{S_i}}.$ }{
  \begin{displaymath}
    \frac{1}{1 - r_{S_i, v_i}} = \frac{\qdown_{S_{i+1}}}{\qdown_{S_i}}.
  \end{displaymath}
  } Multiplying these equations, we get
  \begin{displaymath}
    \prod_{i=1}^k\frac{1}{1 - r_{S_i, v_i}} =
    \frac{\qdown_{S_{k+1}}}{\qdown_{S_1}} = \frac{\qdown_{S \setminus
        \Gamma^+(u)}}{\qdown_{S\setminus\inb{u}}},
  \end{displaymath}
  where in the last equation we use $S_{k+1} = {S \setminus \Gamma^+(u)}$
  and $S_1 = S \setminus \inb{u}$. (Recall that $\Gamma^+(u)$ is the
  set containing $u$ and all its neighbors in $G$).  The claim of the
  lemma now follows since $r_{S,u} = p_u\cdot \frac{\qdown_{S \setminus
      \Gamma^+(u)}}{\qdown_{S\setminus\inb{u}}}$ (see \cref{eq:11}).
\end{proof}

\begin{proof}[Proof of \Cref{lem:breve-prop}]
  This is a special case of Corollary 2.27~(b) of Scott and Sokal~\cite{ScottSokal}.
  See also \cite[Section 5.3]{HV15}.
\end{proof}

\begin{proof}[Proof of \Cref{lem:ratios-bounded}]
  From \cref{eq:11} we have
  $r_{S, u} =
  \frac{p_u\qdown_{S\setminus\Gamma^+(u)}}{\qdown_{S\setminus\inb{u}}}
  = 1 - \frac{\qdown_S}{\qdown_{S\setminus \inb{u}}}$.  From
  \Cref{lem:breve-prop}, we have
  $0 < \qdown_V \leq \qdown_S \leq \qdown_{S \setminus \inb{u}} \leq
  \qdown_{S\setminus\Gamma^+(u)}$, which yields the claim.
\end{proof}

\section{Inequalities in the complex plane}
\label{sec:inequalities}

Finally, we enumerate some simple inequalities that are used in our
proofs.
\begin{fact}\label{fct:simple}
  Let $z$ be a complex number such that $\abs{z} \leq \tau < 1$.  Then
  $\frac{1}{\abs{1 - z}} \leq \frac{1}{1 - \tau}$.
\end{fact}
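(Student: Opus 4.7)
The plan is to deduce the bound from the reverse triangle inequality applied to the pair $1$ and $z$. Specifically, I would first observe that $\abs{1 - z} \geq \abs{1} - \abs{z} = 1 - \abs{z}$, which is the standard consequence of $\abs{a - b} \geq \bigl|\abs{a} - \abs{b}\bigr|$ together with the fact that the hypothesis $\abs{z} \leq \tau < 1$ forces $1 - \abs{z} > 0$, so the outer absolute value can be dropped.

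Next, I would chain this with the hypothesis $\abs{z} \leq \tau$, which gives $1 - \abs{z} \geq 1 - \tau$. Putting the two inequalities together yields $\abs{1 - z} \geq 1 - \tau > 0$, and taking reciprocals (which reverses the inequality because both quantities are strictly positive) produces $\frac{1}{\abs{1 - z}} \leq \frac{1}{1 - \tau}$, as desired.

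There is essentially no obstacle here; the only point that deserves a moment of care is the role of the strict inequality $\tau < 1$, which is precisely what ensures that both sides are positive and that the reciprocal step is legitimate. The entire argument is a one-line invocation of the reverse triangle inequality followed by taking reciprocals.
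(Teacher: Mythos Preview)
Your proposal is correct and follows exactly the same approach as the paper: the paper's proof is the one-liner ``$\abs{1 - z} \geq 1 - \abs{z} \geq 1 - \tau$, which implies the claim since $\tau < 1$,'' which is precisely your reverse triangle inequality followed by taking reciprocals.
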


\begin{proof}
  $\abs{1 -z} \geq 1 - \abs{z} \geq 1 - \tau$, which implies the
  claim since $\tau < 1$.
\end{proof}

\begin{fact}\label{fct:multiply}
  Let $\inp{x_i}_{i=1}^n$ and $\inp{y_i}_{i=1}^n$ be two sequences of
  complex numbers with the $y_i$ non-zero such that
  \begin{equation}
    \label{eq:1} \abs{\frac{x_i}{y_i} - 1} \leq \epsilon,
  \end{equation}
  where $\epsilon \leq 1/n$.  Then, we have
  \begin{displaymath}
    \abs{\prod_{i=1}^n x_i - \prod_{i=1}^n y_i} \leq
    2 n \epsilon \cdot \prod_{i=1}^n\abs{y_i}.
  \end{displaymath}
\end{fact}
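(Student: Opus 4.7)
The plan is a direct computation using the triangle inequality. I would first normalize by factoring out $\prod_{i=1}^n y_i$: setting $\delta_i \defeq x_i/y_i - 1$, the hypothesis reads $|\delta_i| \leq \epsilon$, and we can write
\[
\prod_{i=1}^n x_i - \prod_{i=1}^n y_i \;=\; \Big(\prod_{i=1}^n y_i\Big) \cdot \Big(\prod_{i=1}^n (1 + \delta_i) - 1\Big).
\]
Taking absolute values, it then suffices to show $\bigl|\prod_{i=1}^n (1+\delta_i) - 1\bigr| \leq 2n\epsilon$.

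Next I would expand $\prod_{i=1}^n(1+\delta_i) - 1 = \sum_{\emptyset \neq S \subseteq [n]} \prod_{i \in S} \delta_i$ and apply the triangle inequality to get
\[
\Bigl|\prod_{i=1}^n(1+\delta_i) - 1\Bigr| \;\leq\; \sum_{\emptyset \neq S \subseteq [n]} \prod_{i \in S} |\delta_i| \;=\; \prod_{i=1}^n(1 + |\delta_i|) - 1 \;\leq\; (1+\epsilon)^n - 1.
\]

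Finally, to convert this into $2n\epsilon$, I would use the standard inequality $(1+\epsilon)^n \leq e^{n\epsilon}$ together with $e^x - 1 \leq 2x$ for $x \in [0,1]$ (which follows from convexity of $e^x - 1 - 2x$ on $[0,1]$, since it equals $0$ at $x=0$ and $e-3 < 0$ at $x=1$). Since the hypothesis $\epsilon \leq 1/n$ guarantees $n\epsilon \leq 1$, we conclude $(1+\epsilon)^n - 1 \leq 2n\epsilon$, completing the proof.

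There is no real obstacle here; the only mildly delicate step is choosing a clean form of the elementary inequality $e^{n\epsilon} - 1 \leq 2n\epsilon$, which uses the hypothesis $\epsilon \leq 1/n$ crucially (otherwise one would only get an exponential blow-up $e^{n\epsilon}-1$).
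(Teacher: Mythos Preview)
Your proof is correct and follows essentially the same approach as the paper: both factor out $\prod_i y_i$, reduce to bounding $\bigl|\prod_i(1+\delta_i)-1\bigr|$, expand, and use $n\epsilon\le 1$ together with $e-1<2$ to get the factor $2n\epsilon$. Your presentation is slightly cleaner in that you recognize the sum of subset products as $(1+\epsilon)^n-1$ directly, whereas the paper writes it as $\sum_{i\ge 1}\binom{n}{i}\epsilon^i$ and bounds termwise via $\binom{n}{i}\le n^i/i!$; the endgame is the same.
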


\begin{proof}
  For each $i$, define $z_i$ so that $x_i = y_i(1 + z_i)$. Note that
  $\abs{z_i} \leq \epsilon$ for each $i$.  We therefore have
  \ifbool{twocol}{
    \begin{align*}
    \abs{\prod_{i=1}^n x_i - \prod_{i=1}^n y_i}
    &= \prod_{i=1}^n\abs{y_i}\abs{\prod_{i=1}^n (1 + z_i) - 1}\\
    &\leq \sum_{i=1}^n\binom{n}{i}\epsilon^i
      \cdot \prod_{i=1}^n\abs{y_i}\\
    &\leq n\epsilon \sum_{i=1}^n\frac{(n \epsilon)^{i-1}}{i!}
      \cdot \prod_{i=1}^n\abs{y_i}
      \text{, using $\binom{n}{i} \leq \frac{n^i}{i!}$},\\
    &\leq \prod_{i=1}^n\abs{y_i} \cdot n\epsilon
      \sum_{i=1}^n\frac{1}{i!}\\
    &\leq n\epsilon\cdot (e -
      1) \cdot \prod_{i=1}^n\abs{y_i}
      \leq 2 n\epsilon \cdot \prod_{i=1}^n\abs{y_i},
  \end{align*}
  }{
  \begin{align*}
    \abs{\prod_{i=1}^n x_i - \prod_{i=1}^n y_i}
    &= \prod_{i=1}^n\abs{y_i}\abs{\prod_{i=1}^n (1 + z_i) - 1}
      \leq \sum_{i=1}^n\binom{n}{i}\epsilon^i
      \cdot \prod_{i=1}^n\abs{y_i}\\
    &\leq n\epsilon \sum_{i=1}^n\frac{(n \epsilon)^{i-1}}{i!}
      \cdot \prod_{i=1}^n\abs{y_i}
      \text{, using $\binom{n}{i} \leq \frac{n^i}{i!}$},\\
    &\leq \prod_{i=1}^n\abs{y_i} \cdot n\epsilon
      \sum_{i=1}^n\frac{1}{i!}
      \leq n\epsilon\cdot (e -
      1) \cdot \prod_{i=1}^n\abs{y_i}
      \leq 2 n\epsilon \cdot \prod_{i=1}^n\abs{y_i},
  \end{align*}
  }
  where the last line uses the fact that $n \epsilon < 1$.
\end{proof}

We will also need the following consequence of the mean value theorem.
Fix a complex number $\lambda$ and a positive integer $d > 0$ and let
$f(\vec{x}) = f(x_1, x_2, \dots, x_d)$ be defined as
\begin{displaymath}
  f(x_1, x_2, \dots, x_d) \defeq \lambda \prod_{i=1}^d\frac{1}{1 - x_i}
\end{displaymath}
defined when $\abs{x_i} < 1$ for all $i$.

\begin{theorem}[\textbf{Mean value theorem}]
  \label{thm:mean-value}
  Let $\vec{x} = (x_1, x_2, \dots, x_d)$ and $\vec{y} = (y_1, y_2, \dots, y_d)$ be two
  sequences of complex numbers and let
  $\vec{\gamma} = (\gamma_1, \gamma_2, \dots, \gamma_d)$ be such that
  $\abs{x_i}, \abs{y_i} \leq \gamma_i < 1$ for $1 \leq i \leq d$.
  Then
  \begin{displaymath}
    \abs{f(\vec{x}) - f(\vec{y})} \leq \abs{f(\vec{\gamma})}
    \sum_{i = 1}^d\frac{\abs{x_i - y_i}}{1 - \gamma_i}.
  \end{displaymath}
\end{theorem}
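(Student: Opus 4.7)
The plan is to prove the inequality by a coordinate-wise telescoping argument, reducing the $d$-variable bound to $d$ applications of a single-variable identity. This is essentially a direct calculation, so there is no serious obstacle — the main care is in bookkeeping the absolute values.

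First, I would set up the telescope. Define the hybrid vectors $\vec{w}^{(i)} = (x_1,\ldots,x_i, y_{i+1},\ldots,y_d)$ for $0 \leq i \leq d$, so that $\vec{w}^{(0)} = \vec{y}$ and $\vec{w}^{(d)} = \vec{x}$. Then write
\[
f(\vec{x}) - f(\vec{y}) \;=\; \sum_{i=1}^{d} \bigl( f(\vec{w}^{(i)}) - f(\vec{w}^{(i-1)}) \bigr),
\]
so that consecutive differences involve a change in only one coordinate.

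Next, for each such single-coordinate difference, I would use the elementary identity
\[
\frac{1}{1-x_i} - \frac{1}{1-y_i} \;=\; \frac{x_i - y_i}{(1-x_i)(1-y_i)},
\]
which yields
\[
f(\vec{w}^{(i)}) - f(\vec{w}^{(i-1)}) \;=\; \lambda \cdot \frac{x_i - y_i}{(1-x_i)(1-y_i)} \prod_{j<i} \frac{1}{1-x_j} \prod_{j>i} \frac{1}{1-y_j}.
\]
Taking absolute values, applying the triangle inequality, and using \Fact{simple} (which uses $\abs{1-z} \geq 1 - \abs{z}$ together with the hypothesis $\abs{x_j}, \abs{y_j} \leq \gamma_j < 1$) gives
\[
\abs{f(\vec{w}^{(i)}) - f(\vec{w}^{(i-1)})} \;\leq\; \abs{\lambda} \cdot \frac{\abs{x_i - y_i}}{(1-\gamma_i)^2} \prod_{j\neq i}\frac{1}{1-\gamma_j} \;=\; \abs{f(\vec{\gamma})} \cdot \frac{\abs{x_i - y_i}}{1-\gamma_i},
\]
where in the last step I have regrouped one factor of $(1-\gamma_i)^{-1}$ into the product $\prod_j (1-\gamma_j)^{-1} = \abs{f(\vec{\gamma})} / \abs{\lambda}$.

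Finally, summing over $i=1,\ldots,d$ via the triangle inequality on the telescope yields exactly the stated bound. As noted, no step is delicate: the only thing to verify is that the hypothesis $\gamma_i<1$ lets us convert each $\abs{1-x_j}^{-1}$ and $\abs{1-y_j}^{-1}$ into $(1-\gamma_j)^{-1}$, which is immediate from \Fact{simple}.
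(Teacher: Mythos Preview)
Your proof is correct but takes a different route from the paper's. The paper interpolates continuously, setting $g(t)\defeq f(t\vec{x}+(1-t)\vec{y})$, bounds $\abs{g(1)-g(0)}\le\max_{t\in[0,1]}\abs{g'(t)}$, and then evaluates $g'$ at the maximizer $\vec{z}=t_0\vec{x}+(1-t_0)\vec{y}$ to get $\abs{f(\vec{z})}\bigl|\sum_i\frac{y_i-x_i}{1-z_i}\bigr|$, which is then bounded using \Cref{fct:simple}. Your argument instead telescopes along the coordinate axes, replacing the derivative computation by the exact algebraic identity $\frac{1}{1-x_i}-\frac{1}{1-y_i}=\frac{x_i-y_i}{(1-x_i)(1-y_i)}$. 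Both reach the same bound via the same final estimate $\frac{1}{\abs{1-z}}\le\frac{1}{1-\gamma_j}$. Your version is slightly more elementary (no calculus, no compactness to guarantee the maximum is attained) and makes the constant transparent; the paper's version is what justifies the name ``mean value theorem'' and generalizes more readily to functions that do not factor coordinate-wise.
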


\begin{proof}
  Let $g(t) \defeq f(t\vec{x} + (1-t)\vec{y})$ for $t \in [0,1]$.  Note that $g$
  is continuously differentiable on its domain (since
  $\abs{x_i}, \abs{y_i} < 1$). Hence $\abs{g'(t)}$ attains its maximum
  at some point $t_0 \in [0,1]$. Let $\vec{z} = t_o \vec{x} + (1-t_0)\vec{y}$. Note that
  $\abs{z_i} \leq \gamma_i$ for all $i$. We now have
  \begin{align*}
    \abs{f(\vec{x}) - f(\vec{y})}
    &= \abs{g(1) - g(0)} \leq \int_0^1\abs{g'(t)} dt \leq
      \abs{g'(t_0)}\\
    &= \abs{f(\vec{z})}\abs{\sum_{i=1}^d\frac{y_i - x_i}{1 - z_i}}\\
    & \leq \abs{f(\vec{\gamma})}\sum_{i=1}^d\frac{\abs{x_i - y_i}}{1 -
      \gamma_i},
  \end{align*}
  where the last line uses the form of $f$ and \Cref{fct:simple}.
\end{proof}

\section{Proofs from \Cref{sec:ShearerVar}}

\subsection{Correctness of the rounding procedure}
\label{app:rigorousrounding}

Define the following set
$$
\cR ~=~ \setst{ \vec{p} \in [0,1]^n }{ \qdown_{V}(\vec{p})>0 }.
$$

\begin{lemma}
\LemmaName{continuous}
Let $f : [0,1] \rightarrow \cR$ be a continuous function with $f(0) \in \cS$.
Then $f(t) \in \cS$ for all $t \in [0,1]$.
\end{lemma}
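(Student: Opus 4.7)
The plan is to use a connectedness argument on the set $T \defeq \{t \in [0,1] : f(t) \in \cS\}$, showing that $T$ is both open and closed in $[0,1]$; since $0 \in T$ and $[0,1]$ is connected, this will force $T = [0,1]$.

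First I would observe that $T$ is open in $[0,1]$. This follows immediately from the fact that $\cS$ is an open subset of $[0,1]^V$ (as remarked just after the definition of $\cS$) together with the continuity of $f$, since $T = f^{-1}(\cS)$.

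The crux is showing that $T$ is closed in $[0,1]$, and this is where \Cref{lem:breve-prop} enters in exactly the way the preceding discussion suggests. Suppose $(t_k) \subseteq T$ with $t_k \to t^\star \in [0,1]$. I want to show $f(t^\star) \in \cS$, that is, $\qdown_S(f(t^\star)) > 0$ for every $S \subseteq V$. Since each $f(t_k) \in \cS$, \Cref{lem:breve-prop} gives
\[
  \qdown_S(f(t_k)) \,\geq\, \qdown_V(f(t_k)) \qquad\text{for every } S \subseteq V.
\]
Each polynomial $\qdown_S$ is continuous in its parameters, and $f$ is continuous in $t$, so passing to the limit yields
\[
  \qdown_S(f(t^\star)) \,\geq\, \qdown_V(f(t^\star)).
\]
Now the hypothesis $f(t^\star) \in \cR$ gives $\qdown_V(f(t^\star)) > 0$, hence $\qdown_S(f(t^\star)) > 0$ for every $S \subseteq V$, proving $f(t^\star) \in \cS$.

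Combining the three observations: $T$ is non-empty (it contains $0$), open, and closed in the connected set $[0,1]$, so $T = [0,1]$. I do not expect any real obstacle here; the only point that needs care is ensuring $\cS$ is genuinely open (guaranteed by the paper's earlier remark) and that $\qdown_S$ is continuous (it is a polynomial in $\vec{p}$), both of which are immediate.
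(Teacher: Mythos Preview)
Your proof is correct and follows essentially the same approach as the paper: both are connectedness arguments on $[0,1]$ that use the openness of $\cS$ together with \Cref{lem:breve-prop} to rule out a first exit from $\cS$ while $\qdown_V$ remains positive. The paper argues by contradiction via the infimum of the ``bad'' set, whereas you phrase it as a clopen argument, but the content is the same.
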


\begin{proof}
Suppose that $\setst{ t \in [0,1] }{ f(t) \not\in \cS }$ is non-empty,
and let $\tau$ be the infimum of that set.
By continuity of $f$ and the fact that $\cS$ is open,
we must have $f(\tau) \not\in \cS$.
That is, there exists $S \subseteq V$ with $0 \geq \qdown_S(f(\tau))$.
As $f(0) \in \cS$, we have $\tau > 0$.
The range of $f$ is $\cR$, so $\qdown_{V}(f(\tau)) > 0 \geq \qdown_S(f(\tau))$.
For sufficiently small $\epsilon>0$, continuity implies that
$ \qdown_{V}(f(\tau-\epsilon)) > \qdown_S(f(\tau-\epsilon)) $.
But, by definition of $\tau$, we have $f(\tau-\epsilon) \in \cS$.
This contradicts \Lemma{breve-prop}.
\end{proof}

Consider rounding the coordinate $z_i$.
Let $\vec{z'} \rightarrow \vec{z}$.
If $\frac{\partial \qdown_{V} }{ \partial z_i }(\vec{p}(\vec{z})) \geq 0$
then set $z'_i \leftarrow 1$, otherwise set $z'_i \leftarrow 0$.
Then, by multilinearity, for all $\vec{y}$ on the line segment between $\vec{z}$ and $\vec{z'}$
we have $\qdown_{V}(\vec{y}) \geq \qdown_{V}(\vec{z}) > 0$, and hence $\vec{y} \in \cR$.
\Lemma{continuous} now implies that $\vec{p}(\vec{z'}) \in \cS$.

\subsection{Impossibility of rounding within $\cL$}
\label{app:Lrounding}

Consider the scenario $\Omega = \set{0,1}^{15}$ with $\vec{z} = (1/2,\ldots,1/2)$
so that $\mu_{\vec{z}}$ is the uniform distribution.
Define the events
\begin{align*}
\cE_1 &~=~ \setst{ \omega }{ \smallsum{i=1}{6} \: \omega_i \in \set{0,2,6} } \\
\cE_2 &~=~ \setst{ \omega }{ \omega_6=\omega_7=\omega_8=0 } \\
\cE_3 &~=~ \setst{ \omega }{ \omega_8=\omega_9=\omega_{10}=1 } \\
\cE_4 &~=~ \setst{ \omega }{ \smallsum{i=10}{15} \: \omega_i \in \set{0,2,6} }.
\end{align*}
The dependency graph $G$ for these events is the path graph $1$-$2$-$3$-$4$.

Recalling that $p_i = \mu_{\vec{z}}(\cE_i)$, we have
$$\vec{p} ~=~ \Big(\frac{17}{64},~ \frac{1}{8},~ \frac{1}{8},~ \frac{17}{64} \Big).$$
One may verify that $\vec{p} \in \cL$ using the vector
$$\vec{x} ~=~ \Big(\frac{4}{10},~ \frac{3}{10},~ \frac{3}{10},~ \frac{4}{10} \Big).$$

Suppose we round $z_8$ to $0$. (The other case is symmetric.)
The probability vector is now
$$\vec{p}' ~=~ \Big(\frac{17}{64},~ \frac{1}{4},~ 0,~ \frac{17}{64} \Big).$$
We claim that $\vec{p'} \not\in \cL$.
To see this, note that $p_3 = 0$, so we may delete vertex $3$, after which the dependency graph has
the single edge $1$-$2$.
For a graph consisting of a single edge, the region $\cL$ can be shown to be precisely
$ \setst{ (p_1,p_2) }{ \sqrt{p_1} + \sqrt{p_2} \leq 1 } $.
Since this condition is violated for $\vec{p}'$, we have $\vec{p'} \not\in \cL$.

Thus $z_8$ cannot be rounded either to $0$ or to $1$ while preserving membership in $\cL$.


\section{Hardness of evaluation and deciding membership}
\label{app:hardness}

In this section we complement our positive results with some negative
ones. \ifbool{twocol}{%
  The first such hardness results show that exact evaluation of $\qdown_V$
  and exact decision of membership in Shearer region are both
  \SharpPHard: due to lack of space these results are consigned to the
  full version and we focus here on results about approximation.
  In \Section{approxEvalMemb} we show that approximation within
  exponentially small error is computationally hard, and then deduce
  that algorithms for approximating $\qdown_V$ must have runtime that
  depends on the slack. Finally, we can also show a positive result
  for a relaxation of the membership problem --  that one can
  efficiently decide membership in the region for the original LLL,
  which is a strict subset of Shearer's region -- but due to lack of
  space, this result is also not included here and may be found in the full version of the paper.
 }{ %
  First, in \Section{exactEvalMemb} we show that exactly evaluating
  $\qdown_V$ and deciding membership is \SharpPHard.  Then,
  in \Section{approxEvalMemb} we show similar results with
  exponentially small error, and show that algorithms for
  approximating $\qdown_V$ must have runtime that depends on the
  slack.
Finally, in \Section{LLLregion} we show a positive result: that one can
efficiently decide membership in the region for the original LLL,
which is a strict subset of Shearer's region.
} %

\ifbool{twocol}{}{
\subsection{Exact evaluation and membership}
\SectionName{exactEvalMemb}

Our starting point is the following known hardness result.

\begin{theorem}[\cite{DagumLuby92}]
\label{thm:matching-hardness}
For a given 3-regular bipartite graph, it is \SharpPHard to compute the number of perfect matchings.
\end{theorem}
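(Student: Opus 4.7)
The plan is to reduce from Valiant's theorem that the permanent of a $0/1$ matrix is \SharpPHard, equivalently, that counting perfect matchings in a general bipartite graph is \SharpPHard. Given an arbitrary bipartite graph $H = (U \cup W, E)$, I would construct in polynomial time a 3-regular bipartite graph $H'$ from which the number of perfect matchings of $H$ can be recovered.

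The main work is a gadget construction that performs degree equalization while preserving bipartiteness. First, I would replace each vertex $v$ of degree $d_v > 3$ by a bipartite ``fan'' or ``path-like'' gadget: the gadget exposes $d_v$ ports (one for each original edge incident to $v$) but its internal vertices all have degree $3$, and, crucially, every perfect matching of $H'$ restricted to the gadget either selects a single port or selects none, contributing the same multiplicative factor $c_v$ in each case. This makes the number of perfect matchings of the modified graph equal to $\bigl(\prod_v c_v\bigr)$ times the number of perfect matchings of $H$, so the latter can be recovered by division. Second, vertices of degree less than $3$ are brought up to degree exactly $3$ by attaching small bipartite ``stubs'' that force a unique local matching and again contribute a fixed factor.

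The main obstacle will be designing a gadget that simultaneously is bipartite, is $3$-regular at every internal vertex, has the desired ``all-or-nothing'' port behavior with an identical multiplicative contribution per admissible port pattern, and has a polynomially bounded matching-count contribution so that the final division is exact (or at least efficiently invertible via polynomial interpolation, if one is willing to use a few parallel gadgets of different sizes). A convenient starting point is to build gadgets from small cubic bipartite graphs whose matching polynomials one computes in closed form, and to control parity issues --- which tend to arise when $d_v$ is odd --- by absorbing an extra half-edge into a dedicated stub on the opposite side of the bipartition.

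Once such a gadget family is in hand, the reduction is clearly polynomial-time, and a hypothetical polynomial-time algorithm for counting perfect matchings in 3-regular bipartite graphs would yield one for the general bipartite case, contradicting Valiant's theorem. An alternative route, if a single universal gadget with the exact equal-factor property proves elusive, is the standard \emph{polynomial interpolation} trick: design a family of gadgets parametrized by an integer so that the total matching count of $H'$ becomes a polynomial in a formal variable whose coefficients encode the weighted matching statistics of $H$, and recover these coefficients by evaluating at sufficiently many parameter values. Either route suffices to establish \SharpPHardness for the $3$-regular bipartite case.
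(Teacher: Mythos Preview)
The paper does not prove this theorem at all: it is quoted as a known result from Dagum and Luby~\cite{DagumLuby92} and used as a black box to seed the subsequent hardness reductions (Theorems~\ref{thm:eval-hardness}--\ref{thm:membership-hardness}). So there is no ``paper's own proof'' to compare against.

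As for your proposal itself: the overall strategy---reduce from Valiant's \SharpPHardness of counting perfect matchings in general bipartite graphs by a degree-equalizing gadget construction---is the right shape, and is indeed how results of this type are proved. But what you have written is a plan, not a proof. The entire difficulty is concentrated in the sentence ``The main obstacle will be designing a gadget that\ldots'', and you do not actually design one. In particular, the property you ask for (a bipartite, internally $3$-regular vertex gadget with $d_v$ ports such that every perfect matching uses exactly one port and contributes the \emph{same} multiplicative factor regardless of which port is used) is exactly the nontrivial combinatorial content of the Dagum--Luby argument, and it does not fall out of general principles. Your fallback to polynomial interpolation is also only a hope at this level of detail: you would still need to exhibit a concrete parametrized gadget family and verify that the resulting counts are a polynomial of bounded degree in the parameter with the right coefficients. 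If you want a self-contained proof rather than a citation, you need to either reproduce the Dagum--Luby gadget (or an equivalent one) explicitly, or carry out the interpolation construction in full.
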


\vspace{6pt}

From here, we obtain by a standard reduction the hardness of computing the
alternating-sign independence polynomial.
In the following, to emphasize the graph under consideration,
we deviate from our previous notation slightly by letting
$\breve{q}_G(p) = \sum_{I \in \Ind(V)} (-1)^{|I|} p^I$, where $G=(V,E)$.

\begin{theorem}
\label{thm:eval-hardness}
For a 4-regular graph $G=(V,E)$ and $|V| < k < |V|^2$ given on the input, it is \SharpPHard to compute $\breve{q}_G(1/k,\ldots,1/k)$.
\end{theorem}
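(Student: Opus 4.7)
The plan is to reduce from counting perfect matchings in a $3$-regular bipartite graph, which is \SharpPHard by \Theorem{matching-hardness}. Given such a graph $H$ on $N$ vertices (so $N$ is even and $H$ has $3N/2$ edges), I will construct $G = L(H)$, the line graph of $H$. Because every vertex of $H$ has degree $3$, every edge of $H$ is adjacent to exactly $4$ other edges (two at each endpoint), so $G$ is $4$-regular with $|V(G)| = 3N/2$. The classical bijection between matchings of $H$ and independent sets of $L(H)$ sends a matching of size $j$ to an independent set of size $j$, so
\[
\breve{q}_G(1/k,\ldots,1/k) \;=\; \sum_{I \in \Ind(G)} (-1/k)^{|I|} \;=\; \sum_{j=0}^{N/2} (-1)^j \frac{m_j(H)}{k^j},
\]
where $m_j(H)$ denotes the number of matchings of size $j$ in $H$; in particular, $m_{N/2}(H)$ counts the perfect matchings we wish to compute.

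Viewing the right-hand side as a univariate polynomial $P(z) = \sum_{j=0}^{N/2} (-1)^j m_j(H) \, z^j$ of degree at most $N/2$ evaluated at $z = 1/k$, I observe that $P$ is determined by its values at any $N/2 + 1$ distinct points. I would therefore invoke the hypothesized evaluator for $\breve{q}_G(1/k,\ldots,1/k)$ at $N/2 + 1$ distinct integer values of $k$ drawn from the interval $(|V(G)|,\, |V(G)|^2) = (3N/2,\, 9N^2/4)$. Since this interval contains $\Omega(N^2)$ integers, there is ample room to select such a collection, and the resulting evaluation nodes $1/k$ are pairwise distinct. Standard Lagrange interpolation over $\mathbb{Q}$ then recovers $P$ in time polynomial in $N$, and reading off its leading coefficient gives $(-1)^{N/2} m_{N/2}(H)$, the signed count of perfect matchings.

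The only step requiring care is verifying that the reduction runs in polynomial time, i.e., that the bit-complexity of the rationals encountered does not blow up. Since each $m_j(H)$ is a non-negative integer bounded by $(3N/2)^{N/2}$ and each interpolation node has denominator at most $|V(G)|^2$, all intermediate rationals arising in the Lagrange interpolation have polynomial bit length, so the reduction is polynomial. Combining this Turing reduction with \Theorem{matching-hardness} establishes \SharpPHardness of computing $\breve{q}_G(1/k,\ldots,1/k)$ for $4$-regular graphs $G$ and integers $k$ with $|V| < k < |V|^2$.
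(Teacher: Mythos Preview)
Your reduction is correct and uses the same line-graph construction as the paper: both take a $3$-regular bipartite graph $H$, form $G = L(H)$, and exploit the bijection between matchings in $H$ and independent sets in $G$.

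The difference lies in the recovery step. The paper observes that $b_k \defeq k^{N/2}\breve{q}_G(1/k,\ldots,1/k)$ is an integer satisfying $b_k \equiv \pm\,\#\{\text{perfect matchings}\}\pmod{k}$, and then applies the Chinese Remainder Theorem over $|V|$ primes $k$ in the range $(|V|,|V|^2)$ to reconstruct the count (using $n! < |V|^{|V|}$ as an a~priori bound). You instead treat $\breve{q}_G(1/k,\ldots,1/k)$ as the value at $z = 1/k$ of the degree-$N/2$ matching polynomial and recover all its coefficients by Lagrange interpolation at $N/2+1$ points. Both approaches are standard and run in polynomial time; yours is arguably more direct (it recovers every $m_j(H)$, not just $m_{N/2}(H)$, and avoids the parity caveat on $n$), while the paper's CRT argument makes the modular structure of $b_k$ explicit, which is later reused in the approximate-hardness results of \Cref{thm:app-eval-inapprox} and \Cref{thm:sign-inapprox}. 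One cosmetic point: when you say ``reading off its leading coefficient'', you should say ``the coefficient of $z^{N/2}$'', since $P$ could have degree strictly less than $N/2$ when $H$ has no perfect matching.
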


\vspace{6pt}

We note that for a $4$-regular graph, the Shearer region is known to contain at least the line
segment between $(0,\ldots,0)$ and $(\frac{1}{4e}, \ldots, \frac{1}{4e})$. So we claim that it is
\SharpPHard to evaluate Shearer's polynomial even on points that are inside the Shearer region with a large slack.

\begin{proof}
Let $H$ be a given 3-regular bipartite graph on $n+n$ vertices. We define $G = (V,E)$ to be the line graph of $H$, which is 4-regular. We have $|V| = 3n$, the number of edges of $H$. Independent sets in $G$ correspond to matchings in $H$. We have
\ifbool{twocol}{
  \begin{align*}
    \breve{q}_G(1/k,\ldots,1/k)
    &= \sum_{I \in \Ind(G)} \left(
    -\frac{1}{k} \right)^{|I|}\\
    &= \sum_{\text{matching }M \subset H}
    \left( -\frac{1}{k} \right)^{|M|}\\
    &= \frac{1}{k^n}
    \sum_{\text{matching }M \subset H} (-1)^{|M|} k^{n-|M|}.
  \end{align*}
}{
  $$ \breve{q}_G(1/k,\ldots,1/k) = \sum_{I \in \Ind(G)} \left( -\frac{1}{k} \right)^{|I|}
= \sum_{\text{matching }M \subset H} \left( -\frac{1}{k} \right)^{|M|}
= \frac{1}{k^n} \sum_{\text{matching }M \subset H} (-1)^{|M|}
k^{n-|M|}.$$
} Let us denote $b_k = \sum_{\text{matching }M \subset H} (-1)^{|M|} k^{n-|M|}$, which is an integer.
Perfect matchings in $H$ have cardinality $n$.  Therefore, each non-perfect matching contributes a multiple of $k$ here, only perfect matchings contribute $1$ (with a sign depending on the parity on $n$; assume wlog that $n$ is even). Hence we have $b_k = \# \text{ perfect matchings} \pmod k$.
If we could compute $b_k$, say for any $|V| < k < |V|^2$, then we could recover the number of perfect matchings in $H$ by the Chinese remainder theorem with $|V|$ choices of prime numbers $k$, $|V| < k < |V|^2$ (which exist for large enough $|V|$ by the prime number theorem), since the number of perfect matchings is upper-bounded by $n! < |V|^{|V|}$.
This proves that computing $\breve{q}_G(1/k,\ldots,1/k) = b_k / k^n$ for $|V| < k < |V|^2$ is
\SharpPHard.
\end{proof}

Next, we show that for an unrestricted point $p$, it is \SharpPHard even to compute the sign of $\breve{q}_G(p)$.

\begin{theorem}
\label{thm:sign-hardness}
For a graph $G = (V,E)$ and rational $p \in [0,1]^V$ given on the input, it is \SharpPHard to decide whether $\breve{q}_G(p) > 0$.
\end{theorem}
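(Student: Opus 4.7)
The plan is to reduce from \Theorem{eval-hardness}: for a $4$-regular graph $G$ on $|V|=3n$ vertices (the line graph of a $3$-regular bipartite graph $H$ on $2n$ vertices) and a suitable integer $k$ in $(|V|,|V|^2)$, computing $\breve{q}_G(1/k,\ldots,1/k) = b_k/k^n$ is \SharpPHard. I plan to show that a polynomial-time oracle deciding, for any graph $G^*$ and any rational $p\in[0,1]^{V(G^*)}$ on the input, whether $\breve{q}_{G^*}(p)>0$, lets us simulate the exact evaluation of $\breve{q}_G(1/k,\ldots,1/k)$ in polynomial time, yielding the claimed \SharpPHardness.

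The central gadget is a subtraction gadget built from a universal vertex. Given $G$, let $G'$ be $G$ together with a fresh vertex $u$ adjacent to every vertex of $V(G)$. An independent set of $G'$ either avoids $u$ (in which case it is an independent set of $G$) or equals $\{u\}$. Assigning weight $1/k$ to each original vertex of $G$ and weight $t\in[0,1]$ to $u$ gives the identity
\begin{equation*}
\breve{q}_{G'}(1/k,\ldots,1/k,\,t) \;=\; \breve{q}_G(1/k,\ldots,1/k) \;-\; t,
\end{equation*}
so a single sign query on $\breve{q}_{G'}$ at this rational point answers the comparison ``is $\breve{q}_G(1/k,\ldots,1/k) > t$?''. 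A binary search over dyadic rationals $t\in[0,1]$ then localizes $\breve{q}_G(1/k,\ldots,1/k) = b_k/k^n$ to a window of width less than $k^{-n}$ in $\ell = O(n\log k)$ queries, which pins down the integer $b_k$ uniquely. From $b_k$, the Chinese remainder argument in the proof of \Theorem{eval-hardness}, applied across a collection of primes $k$ in the admissible range, yields the number of perfect matchings in $H$.

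The main technical point to verify is that for $k$ in a suitable subrange (for instance, $k>30n$) one has $\breve{q}_G(1/k,\ldots,1/k)\in(0,1)$, so that binary search restricted to $t\in[0,1]$ is actually useful. This should follow from the expansion $\breve{q}_G(1/k,\ldots,1/k) = 1 - |V(G)|/k + O(|V(G)|^2/k^2)$, since the tail sum of matching contributions is bounded in absolute value by $(1+1/k)^{|V(G)|}-1-|V(G)|/k$. Restricting to $k$ in this subrange is harmless for the outer reduction: by the prime number theorem, primes in (say) $(30n,(3n)^2)$ remain plentiful, their product easily exceeds $n!$, and the Chinese remainder step therefore still recovers the perfect matching count of $H$. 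The remaining bookkeeping --- that each query $t$ has bit complexity $O(n\log k)$, and that the total number of sign-oracle calls is polynomial in $n$ --- is routine, so I do not expect a serious obstacle beyond verifying the elementary bounds on $\breve{q}_G(1/k,\ldots,1/k)$.
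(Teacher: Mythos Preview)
Your proposal is correct and follows essentially the same approach as the paper: add a universal vertex $u$ to obtain $\breve{q}_{G'}(1/k,\ldots,1/k,t)=\breve{q}_G(1/k,\ldots,1/k)-t$, then binary-search on $t$ to recover the integer $b_k$, which by \Cref{thm:eval-hardness} is \SharpPHard. Your worry about ensuring $\breve{q}_G(1/k,\ldots,1/k)\in(0,1)$ is unnecessary and the restriction to $k>30n$ can be dropped: since $1/k<1/|V|$ the point lies in the Shearer region, so $0<\breve{q}_V\leq\breve{q}_{\{v\}}=1-1/k<1$ by \Cref{lem:breve-prop}.
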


\begin{proof}
Let $G$ be a given graph as in the proof of Theorem~\ref{thm:eval-hardness}, $|V| = 3n$ and $|V| < k < |V|^2$. Define $G'$ to be a graph obtained from $G$ by adding a new vertex $z$ and adding edges between $z$ and all the vertices of $G$. We have
$$\breve{q}_{G'}(1/k,\ldots,1/k,p_z) = \breve{q}_G(1/k,\ldots,1/k) - p_z $$
because the only independent set in $G'$ containing $z$ is $\{z\}$. We can also assume that $\breve{q}_G(1/k,\ldots,1/k) = b_k / k^n$ for some integer $b_k$, as in the proof of Theorem~\ref{thm:eval-hardness}. The possible range for $b_k$ is $[-(8k)^n, (8k)^n]$, since in the proof of Theorem~\ref{thm:eval-hardness}, $|b_k| \leq k^n |\Ind(G)| \leq k^n 2^{3n}$.

Suppose that we can decide whether $\breve{q}_{G'}(1/k,\ldots,1/k,p_z) > 0$ for a given $p_z = b' /
k^n$. That is, we can decide whether $\breve{q}_G(1/k,\ldots,1/k) > p_z$. Then we can compute
$\breve{q}_G(1/k,\ldots,1/k) = b_k / k^n$ by a binary search on $p_z = b' / k^n$. Since we have
$2(8k)^n$ possible values for $b'$, the binary search takes $1 + n \log_2 (8k) \leq 1 + n \log_2
(72n^2)$ steps. Therefore we could compute the value of $\breve{q}_G(1/k,\ldots,1/k)$, which is
\SharpPHard.
\end{proof}

The same argument also gives the following.

\begin{theorem}
\label{thm:membership-hardness}
For a graph $G = (V,E)$ and rational $p \in [0,1]^V$ given on the input, it is \SharpPHard to decide whether $p$ is in the Shearer region.
\end{theorem}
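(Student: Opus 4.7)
The plan is to adapt the reduction used in the proof of \Cref{thm:sign-hardness} to reduce the (already \SharpPHard) problem of computing $\breve{q}_G(1/k,\ldots,1/k)$ on a $4$-regular graph to the problem of deciding membership in the Shearer region. Given such $G = (V,E)$ and an integer $k$ with $|V| < k < |V|^2$, I will form $G' = (V', E')$ by adjoining a new vertex $z$ adjacent to every vertex of $V$, and I will query membership for probability vectors of the form $(1/k,\ldots,1/k,p_z)$ parameterized by a rational $p_z$.

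First I will observe that $(1/k,\ldots,1/k)$ already lies inside $\cS_G$. Since $1/k < 1/|V|$ and the uniform vector $(1/|V|,\ldots,1/|V|)$ is known to be in the Shearer region of any $|V|$-vertex graph, monotonicity (an immediate consequence of \Cref{lem:breve-prop}) gives $\breve{q}_S(1/k,\ldots,1/k) > 0$ for every $S \subseteq V$. Next I will use the standard recurrence for $\breve{q}$: if $z \notin S$, then $\breve{q}_S(1/k,\ldots,1/k,p_z) = \breve{q}_S(1/k,\ldots,1/k) > 0$; and if $z \in S$, then because $z$ is adjacent to every vertex of $V$ we have $S \setminus (\{z\} \cup \Gamma(z)) = \emptyset$, so the recurrence collapses to $\breve{q}_S(1/k,\ldots,1/k,p_z) = \breve{q}_{S\setminus\{z\}}(1/k,\ldots,1/k) - p_z$. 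Consequently,
\begin{equation*}
(1/k,\ldots,1/k,p_z) \in \cS_{G'} \iff p_z < \min_{S' \subseteq V} \breve{q}_{S'}(1/k,\ldots,1/k),
\end{equation*}
and by \Cref{lem:breve-prop} the right-hand minimum equals $\breve{q}_V(1/k,\ldots,1/k)$.

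With this equivalence in hand, I will binary search over rational values $p_z = b'/k^n$ with $b' \in \{-(8k)^n,\ldots,(8k)^n\}$ (the range that contains $b_k$, as in the proof of \Cref{thm:sign-hardness}) to pin down the unique integer $b_k$ satisfying $\breve{q}_V(1/k,\ldots,1/k) = b_k/k^n$; this requires $O(n \log k) = O(n \log |V|)$ membership queries. Recovering $b_k$ then solves the \SharpPHard problem from \Cref{thm:eval-hardness}, which completes the reduction.

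The main technical step is the middle one: a priori, membership in $\cS_{G'}$ is a conjunction of $2^{|V'|}$ inequalities, one per $S' \subseteq V'$, and one might worry that a single membership query reveals too little information to drive the binary search. The crucial observation is that attaching $z$ to \emph{every} vertex, combined with the monotonicity of $\breve{q}$ inside the Shearer region, collapses this whole family of inequalities to the single comparison $p_z < \breve{q}_V(1/k,\ldots,1/k)$, which is exactly the kind of comparison that enables binary-search extraction of $\breve{q}_V$.
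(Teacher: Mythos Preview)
Your proof is correct and uses essentially the same reduction as the paper: adjoin a universal vertex $z$ to $G$ and test membership of $(1/k,\ldots,1/k,p_z)$ in $\cS_{G'}$, then binary-search on $p_z$ to recover $\breve{q}_V(1/k,\ldots,1/k)$. The one difference worth noting is how you justify the key equivalence ``membership $\iff p_z < \breve{q}_V(1/k,\ldots,1/k)$'': the paper invokes the Scott--Sokal path characterization of $\cS_{G'}$ (there is a continuous path from the origin along which $\breve{q}_{V'}>0$), together with the fact that $\breve{q}_{G'}(1/k,\ldots,1/k,p_z)=\breve{q}_G(1/k,\ldots,1/k)-p_z$ is monotone in $p_z$; you instead work directly from the subset definition of $\cS_{G'}$, using the deletion recurrence for $\breve{q}_S$ and the subset monotonicity of \Cref{lem:breve-prop} to collapse all the constraints to the single one with $S=V'$. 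Your route is arguably a touch more elementary since it avoids the path characterization, but both arrive at the same one-line equivalence.

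One small quibble: when you write ``monotonicity (an immediate consequence of \Cref{lem:breve-prop})'' to pass from $(1/|V|,\ldots,1/|V|)\in\cS_G$ to $(1/k,\ldots,1/k)\in\cS_G$, the monotonicity you need is down-closedness of $\cS_G$ in the probability vector, which follows immediately from the polydisc definition~\eqref{eq:8}, not from \Cref{lem:breve-prop} (which concerns monotonicity in the subset $S$ at a fixed $\vec{p}$). Your later use of \Cref{lem:breve-prop} to identify $\min_{S'\subseteq V}\breve{q}_{S'}=\breve{q}_V$ is the correct one. Also, since $(1/k,\ldots,1/k)\in\cS_G$ forces $0<\breve{q}_V(1/k,\ldots,1/k)\leq 1$, your binary search only needs $p_z=b'/k^n$ with $b'\in\{0,\ldots,k^n\}$, keeping $p_z\in[0,1]$ throughout; the wider range $\{-(8k)^n,\ldots,(8k)^n\}$ is unnecessary here.
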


\begin{proof}
Let $\cS_G$ denote the Shearer region for a graph $G$. Let $G' = G+z$ be a graph as in the proof of
Theorem~\ref{thm:sign-hardness}, with all edges between $z$ and $G$. For a given $p_z \in [0,1]$ and
$|V| < k < |V|^2$, we would like to decide whether $\breve{q}_{G'}(1/k,\ldots,1/k,p_z) =
\breve{q}_G(1/k,\ldots,1/k) - p_z > 0$. As above, we know that $(1/k,\ldots,1/k) \in \cS_G$, which means that $0 < \breve{q}_G(1/k,\ldots,1/k) < 1$. Therefore, as $p_z$ varies from $0$ to $1$, $\breve{q}_{G'}(1/k,\ldots,1/k,p_z) = \breve{q}_G(1/k,\ldots,1/k) - p_z$ decreases from a positive value to a negative one.

We use the following characterization: $p \in \cS_{G'}$ if and only if there is a continuous
path from the origin to $p$ such that $\breve{q}_{G'}(x) > 0$ for each point $x$ on the path
\cite[Theorem 2.10]{ScottSokal}. Here, we know that $(1/k,\ldots,1/k,0) \in \cS_{G'}$ and
$\breve{q}_{G'}(1/k,\ldots,1/k,p_z)$ is decreasing in $p_z$; therefore checking whether
$(1/k,\ldots,1/k,p_z) \in \cS_{G'}$ is equivalent to checking whether
$\breve{q}_{G'}(1/k,\ldots,1/k,p_z) > 0$, which is \SharpPHard.
\end{proof}
}

\subsection{Approximate evaluation and membership}
\SectionName{approxEvalMemb}

\ifbool{twocol}{}{Perhaps a more interesting question is how accurately
  we can evaluate $\breve{q}_G(p)$ or check membership in the Shearer
  region, when errors are allowed. }As our main positive result shows,
$\breve{q}_G(p)$ for $p$ well inside the Shearer region (with constant
slack) can indeed be evaluated {\em approximately}, within
polynomially small error. \ifbool{twocol}{The proofs of some of these
  results are not included here due to lack of space and may be found
  in the full version. }{ Our hardness reductions here show that certain exponentially small errors are not achievable. We obtain the following results automatically, from the fact that the possible values of $\breve{q}_G(p)$ in our reduction are integer multiples of $1/k^n$, where $|V| = 3n$ and $k < |V|^2$, so $1/k^n > 1/|V|^{2|V|/3}$.}

\begin{theorem}
\label{thm:app-eval-inapprox}
For a 4-regular graph $G=(V,E)$ and $|V| < k < |V|^2$ given on the input, it is \SharpPHard to
compute $\breve{q}_G(1/k,\ldots,1/k)$ within an additive error of $1/ (2k^{|V|/3})$.
\end{theorem}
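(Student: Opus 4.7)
The plan is to reduce directly from the exact evaluation hardness result (the preceding Theorem~\ref{thm:eval-hardness}), exploiting the fact that the reduction used there already produces instances whose true values lie on a very coarse arithmetic lattice. Concretely, for the 4-regular graph $G$ obtained as the line graph of a 3-regular bipartite graph $H$ on $2n$ vertices (so $|V| = 3n$), the derivation in the proof of Theorem~\ref{thm:eval-hardness} gives
\[
\breve{q}_G(1/k,\ldots,1/k) \;=\; \frac{b_k}{k^n},
\]
where $b_k$ is an integer determined by the signed matching sum of $H$. Hence the set of possible values of $\breve{q}_G$ on such instances is contained in $\frac{1}{k^n}\Z$.

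The key observation is then purely arithmetic: consecutive points of $\frac{1}{k^n}\Z$ are separated by $1/k^n = 1/k^{|V|/3}$. Therefore an additive approximation with error strictly less than $1/(2 k^{|V|/3})$ uniquely identifies the nearest lattice point, and rounding the approximate value to the closest multiple of $1/k^n$ recovers $b_k$ exactly.

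Given $b_k$ for any single choice of $|V|<k<|V|^2$, the CRT-based recovery of $\#\{\text{perfect matchings of }H\}$ carried out in the proof of Theorem~\ref{thm:eval-hardness} goes through verbatim: the number of perfect matchings is bounded by $n! < |V|^{|V|}$, there are sufficiently many primes $k$ in the interval $(|V|,|V|^2)$ by the prime number theorem, and $b_k \equiv \#\{\text{perfect matchings}\} \pmod k$ (choosing $n$ even as before). Since counting perfect matchings in 3-regular bipartite graphs is \SharpPHard\ (Theorem~\ref{thm:matching-hardness}), the approximation problem is \SharpPHard.

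The only step that deserves a brief care is the book-keeping: one must verify that the rounding step is not ambiguous, i.e.\ that the approximate value cannot be exactly halfway between two candidates. This is handled by taking the error threshold \emph{strictly} less than $1/(2k^{|V|/3})$ (matching the statement), or equivalently by breaking ties arbitrarily, since in either case the correct $b_k$ is in the set of possible preimages. No new analytic obstacle arises; the entire argument is the observation that exact integer-multiple structure in the output of the reduction automatically upgrades exact hardness to additive-approximation hardness at resolution $1/(2k^n)$.
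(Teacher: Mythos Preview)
Your proposal is correct and takes essentially the same approach as the paper. The paper does not give a separate proof of this theorem; it simply remarks that the result follows ``automatically, from the fact that the possible values of $\breve{q}_G(p)$ in our reduction are integer multiples of $1/k^n$, where $|V| = 3n$,'' which is precisely the lattice-rounding observation you spell out.
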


\begin{theorem}
\label{thm:sign-inapprox}
For a graph $G = (V,E)$ and rational $p \in [0,1]^V$ given on the input, it is \SharpPHard to distinguish whether $\breve{q}_G(p) \geq 1/|V|^{|V|}$ or $\breve{q}_G(p) \leq 0$.
\end{theorem}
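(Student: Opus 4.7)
The plan is to adapt the reduction used in the proof of Theorem~\ref{thm:sign-hardness}, exploiting the fact that the values of $\breve{q}$ appearing in that reduction all lie on a coarse rational grid. Concretely, I would start from a 3-regular bipartite graph $H$ on $2n$ vertices, let $G$ be its line graph (so $G$ is 4-regular with $|V(G)| = 3n$), and let $G' = G + z$, adjoining a new vertex $z$ adjacent to every vertex of $G$. As established in the proof of Theorem~\ref{thm:sign-hardness}, for any integer $k$ with $|V(G)| < k < |V(G)|^2$,
\[
\breve{q}_{G'}(1/k, \ldots, 1/k, p_z) \;=\; \breve{q}_G(1/k, \ldots, 1/k) - p_z \;=\; \frac{b_k}{k^n} - p_z,
\]
for some integer $b_k$ with $|b_k| \le (8k)^n$.

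The key additional idea is to restrict $p_z$ to rationals of the form $b'/k^n$ with $b' \in \mathbb{Z}$. Under this restriction, $\breve{q}_{G'}$ evaluated at the test point is itself an integer multiple of $1/k^n$, so it is either exactly $0$ (when $b' = b_k$) or has absolute value at least $1/k^n$. Setting $N \defeq |V(G')| = 3n + 1$, the parameter bound $k < (3n)^2 \le N^2$ gives $k^n \le N^{2n} \le N^N$, so $1/k^n \ge 1/N^N$. Therefore, on each such input, the two cases \emph{$\breve{q}_{G'} \ge 1/N^N$} and \emph{$\breve{q}_{G'} \le 0$} are exhaustive and correspond exactly to the sign of $b_k - b'$.

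Finally, any algorithm promised to distinguish these two cases allows us to decide $b_k > b'$ versus $b_k \le b'$ for each chosen $b'$. A binary search over the interval $[-(8k)^n, (8k)^n]$ then recovers $b_k$ exactly using $O(n \log k) = \mathrm{poly}(n)$ queries. Repeating this for sufficiently many primes $k \in (|V(G)|, |V(G)|^2)$ (which exist by the prime number theorem for $n$ large enough) and assembling the residues via the Chinese remainder theorem recovers the number of perfect matchings in $H$ modulo a product exceeding $n!$, and therefore recovers it exactly. Since counting perfect matchings in a 3-regular bipartite graph is \SharpPHard\ by Theorem~\ref{thm:matching-hardness}, this yields the desired hardness.

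There is no serious obstacle here: the argument is a direct quantization strengthening of Theorem~\ref{thm:sign-hardness}, and the only nontrivial check is the numerical inequality $1/k^n \ge 1/N^N$, which follows immediately from $k < N^2$ and $N \ge n$. The main delicate point to keep in mind is that $p_z$ must be chosen with denominator $k^n$ (rather than an arbitrary rational) so that the gap-of-zero property of the constructed evaluation point is preserved.
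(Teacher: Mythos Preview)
Your proposal is correct and takes essentially the same approach as the paper, which simply notes that this theorem follows ``automatically'' from the reduction in Theorem~\ref{thm:sign-hardness} since the values of $\breve{q}_{G'}$ at the test points are integer multiples of $1/k^n > 1/|V|^{|V|}$. One minor point: since $(1/k,\ldots,1/k)$ lies in $\cS_G$, we actually have $0 < b_k \le k^n$, so the binary search can be restricted to $b' \in \{0,\ldots,k^n\}$, ensuring $p_z = b'/k^n \in [0,1]$ as required by the hypothesis; this tightens your stated search range of $[-(8k)^n,(8k)^n]$ but does not affect the argument.
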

\ifbool{twocol}{}{
\vspace{6pt}

With a slight extension of the above proof for membership hardness, we get the following.
}
\begin{theorem}
\label{thm:app-membership-inapprox}
For a graph $G = (V,E)$ and rational $(p_1,\ldots,p_n) \in [0,1]^V$ given on the input, it is
\SharpPHard to distinguish between $(p_1+\epsilon,\ldots,p_n + \epsilon) \in \cS_G$ and
$(p_1,\ldots,p_n) \notin \cS_G$, for $\epsilon = 1/|V|^{|V|}$.
\end{theorem}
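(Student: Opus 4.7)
The plan is to extend the star-gadget reduction from the exact case \Cref{thm:membership-hardness} by noting that the oracle's tolerance $\epsilon = 1/|V|^{|V|}$ is negligible against the natural $1/k^n$ spacing between consecutive values of $k^n\,\breve{q}_{G_0}(1/k,\ldots,1/k)$. Concretely, take a $4$-regular graph $G_0$ on $n' = 3n$ vertices and an integer $k$ with $n' < k < n'^2$ as produced by \Cref{thm:eval-hardness}, so that computing the integer $b_k \defeq k^n \cdot \breve{q}_{G_0}(1/k,\ldots,1/k)$ is \SharpPHard. Form the star $H = G_0 + z$ in which $z$ is adjacent to every vertex of $G_0$, and set $\epsilon \defeq 1/|V(H)|^{|V(H)|}$.

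The decomposition $\breve{q}_{H_S}(\vec{q},q_z) = \breve{q}_{G_0\cap (S\setminus\{z\})}(\vec{q}) - [z\in S]\, q_z$ combined with \Lemma{breve-prop}, exactly as in the proof of \Cref{thm:membership-hardness}, yields the equivalence $(\vec{q},q_z)\in \cS_H \Longleftrightarrow q_z<\breve{q}_{G_0}(\vec{q})$ whenever $\vec{q}\in\cS_{G_0}$. I would then query the approximate-membership oracle on $\bigl(H,(1/k,\ldots,1/k,I/k^n)\bigr)$ for each trial integer $I$: its YES branch $(\vec{p}+\epsilon\mathbf 1)\in\cS_H$ translates to $I/k^n + \epsilon < \breve{q}_{G_0}(1/k+\epsilon,\ldots,1/k+\epsilon)$, while its NO branch $\vec{p}\notin\cS_H$ translates to $I/k^n \geq \breve{q}_{G_0}(1/k,\ldots,1/k)$. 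A binary search over $I$ therefore resolves $b_k$ in $O(n'\log k)$ queries, provided that no query falls into the ``bad interval'' $\bigl[\breve{q}_{G_0}(1/k+\epsilon,\ldots)-\epsilon,\;\breve{q}_{G_0}(1/k,\ldots)\bigr)$ on which neither promise branch needs to hold.

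The crux is to control the width of this bad interval. Using $k > n'$, I would bound each partial derivative at any point $(x,\ldots,x)$ with $x \in [1/k,\, 1/k+\epsilon]$ by
\[
    \Bigl|\partial_i\breve{q}_{G_0}(x,\ldots,x)\Bigr| \;\leq\; \sum_{j\geq 0}\binom{n'-1}{j}x^j \;=\; (1+x)^{n'-1} \;<\; e,
\]
which yields the Lipschitz estimate $|\breve{q}_{G_0}(1/k+\epsilon,\ldots)-\breve{q}_{G_0}(1/k,\ldots)|\leq n'e\epsilon$ and thus a bad interval of width at most $(n'e+1)\epsilon$. A direct comparison of exponents $(n'+1)^{n'+1}$ versus $n'^{2n}$ shows $(n'e+1)\epsilon\ll 1/k^n$, so the bad interval contains at most one point of the lattice $\{I/k^n\}_{I\in\mathbb{Z}}$ and a standard offset of any ambiguous query by $1/(2k^n)$ handles the single exception. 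I expect the only delicate step to be precisely this gradient-versus-spacing bookkeeping, which ultimately reduces to the clean inequality $n'^{2n}\cdot(n'+1)\,e\ll(n'+1)^{n'+1}$.
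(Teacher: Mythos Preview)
Your proposal is correct and follows essentially the same route as the paper: attach a universal vertex $z$ to the 4-regular graph $G_0$ from \Cref{thm:eval-hardness}, use that $(\vec q,q_z)\in\cS_H$ iff $q_z<\breve q_{G_0}(\vec q)$, and show that an $\epsilon=1/|V(H)|^{|V(H)|}$ perturbation of the query point moves $\breve q_{G_0}$ by less than the lattice spacing $1/k^n$. The only substantive difference is the perturbation estimate: the paper uses convexity of $\lambda\mapsto\breve q_{G_0}(\lambda\vec p)$ (with $\breve q_{G_0}(0)=1$) to get $\breve q_{G_0}(1/k+\epsilon,\dots)\ge \breve q_{G_0}(1/k,\dots)-k\epsilon$, whereas you bound the gradient coordinatewise by $(1+x)^{n'-1}<e$. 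Both bounds are comfortably below $1/k^n$, so either works; the convexity argument is a bit sharper and avoids the crude subset count.

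One simplification you are missing: your ``offset'' step is unnecessary, and as stated it is also not clearly correct. Observe that the right endpoint of your bad interval is \emph{exactly} $\breve q_{G_0}(1/k,\dots)=b_k/k^n$, and the interval is half-open on the right. Since its width is $<1/k^n$, its left endpoint exceeds $(b_k-1)/k^n$. Hence the bad interval lies strictly between the consecutive lattice points $(b_k-1)/k^n$ and $b_k/k^n$ and contains \emph{no} point of the form $I/k^n$ with $I\in\Z$. Every integer query therefore satisfies one of the two promise branches, and plain binary search over $I$ recovers $b_k$ with no ambiguity to handle. This is precisely how the paper sets things up: with $p_z=(b'-1)/k^n$, the case $b_k\ge b'$ lands in the YES instance and $b_k\le b'-1$ lands in the NO instance, with no gap.
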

\ifbool{twocol}{}{
\begin{proof}
Consider the reduction we used in the proof of Theorem~\ref{thm:sign-hardness}.
It shows that it is \SharpPHard to distinguish whether $\breve{q}_G(1/k,\ldots,1/k) \geq b' / k^n$ or $\breve{q}_G(1/k,\ldots,1/k) \leq (b'-1) / k^n$, for given $b' > 0$; here, $|V| = 3n$. We let $p_z = (b'-1) / k^n$ and consider the graph $G' = G+z$. In the first case, $(1/k,\ldots,1/k,p_z)$ is in the Shearer region of $G'$ while in the second case it is not.

In the first case, when $\breve{q}_G(1/k,\ldots,1/k) \geq b' / k^n$, we consider a modified point
$(1/k + \epsilon, \ldots, 1/k + \epsilon, p_z + \epsilon)$ where $\epsilon = 1/|V|^{|V|}$. By the
convexity of $\breve{q}_G(\lambda p)$ in $\lambda$ (see \cite{HV15}), we have
\ifbool{twocol}{
  \begin{multline*}
    \breve{q}_G(1/k+\epsilon, \ldots, 1/k+\epsilon)\\
    \geq 1 - \frac{1/k+\epsilon}{1/k} (1-\breve{q}_G(1/k,\ldots,1/k)) \geq \frac{b'}{k^n} - k \epsilon.
  \end{multline*}
}{
  $$\breve{q}_G(1/k+\epsilon, \ldots, 1/k+\epsilon) \geq 1 -
  \frac{1/k+\epsilon}{1/k} (1-\breve{q}_G(1/k,\ldots,1/k)) \geq
  \frac{b'}{k^n} - k \epsilon.$$
  }
Furthermore,
\ifbool{twocol}{
  \begin{multline*}
    \breve{q}_{G'}(1/k+\epsilon, \ldots, 1/k+\epsilon, p_z+\epsilon)
    \\=
    \breve{q}_G(1/k+\epsilon,\ldots,1/k+\epsilon) - p_z - \epsilon
    \\\geq \frac{b'}{k^n} - k \epsilon - \frac{b'-1}{k^n} - \epsilon > 0
  \end{multline*}

}{$$\breve{q}_{G'}(1/k+\epsilon, \ldots, 1/k+\epsilon, p_z+\epsilon) = \breve{q}_G(1/k+\epsilon,\ldots,1/k+\epsilon) - p_z - \epsilon \geq \frac{b'}{k^n} - k \epsilon - \frac{b'-1}{k^n} - \epsilon > 0$$}
since $\epsilon = {1}/{|V|^{|V|}} < {1}/{k^{3n/2}}$.
Moreover, there is a path from the origin to $(1/k+\epsilon,\ldots,1/k+\epsilon,p_z+\epsilon)$ where
$\breve{q}_{G'}$ is positive, which means that $(1/k+\epsilon,\ldots,1/k+\epsilon,p_z+\epsilon) \in
\cS_{G'}$.

In the second case, when $\breve{q}_G(1/k,\ldots,1/k) \leq (b'-1) / k^n$, we have
$$ \breve{q}_{G'}(1/k,\ldots,1/k, p_z) = \breve{q}_G(1/k,\ldots,1/k) - p_z \leq 0.$$
Here, $(1/k,\ldots,1/k,p_z) \notin \cS_{G'}$.
Therefore, distinguishing between these two cases would allow us to solve a \SharpPHard problem.
\end{proof}
}
It remains open whether membership in the Shearer region is polynomially checkable within {\em polynomially small error}.

\medskip

Next, we use Theorem~\ref{thm:app-membership-inapprox} to prove that it is in fact \SharpPHard to approximate the independence polynomial even within {\em polynomially large factors}, when $\vec{p}$ is inside but close to the boundary of the Shearer region.

\begin{theorem}
\label{thm:app-poly-factor-hardness}
For a graph $G = (V,E)$, $|V|=n$, and rational $\vec{z} \in [0,1]^V$ given on the input, such that
$(1 + \frac{1}{n^{2n}}) \vec{z} \in \mathcal{S}_G$, it is \SharpPHard to approximate
$\qdown_V(\vec{z})$ with any $\poly{n}$ factor.
\end{theorem}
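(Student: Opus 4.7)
The plan is to reduce from Theorem~\ref{thm:app-eval-inapprox}: for a $4$-regular graph $G$ on $n$ vertices and $n<k<n^2$, computing $Q := \qdown_G(1/k,\ldots,1/k)$ within additive error $1/(2k^{n/3})$ is $\#P$-hard. We will show that any $\poly{n}$-factor approximation algorithm $\mathcal{A}$ for $\qdown_V(\vec{z})$ under the slack guarantee $(1+1/n^{2n})\vec{z}\in\cS$ would let us compute $Q$ to additive precision $1/N^{2N}$ with $N := n+1$; since $1/N^{2N} < 1/(2k^{n/3})$, this contradicts Theorem~\ref{thm:app-eval-inapprox}.

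The main gadget is the star augmentation $G' = G + z$ from the proof of Theorem~\ref{thm:app-membership-inapprox}, where $z$ is a new vertex adjacent to every vertex of $V(G)$. Since $\{z\}$ is the only independent set of $G'$ containing $z$, for every $p_z \in [0,1]$
\begin{equation*}
\qdown_{G'}(1/k,\ldots,1/k,\,p_z) \;=\; Q - p_z,
\end{equation*}
so binary-searching on $p_z$ to locate the sign change of $Q - p_z$ will pin down $Q$.

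The quantitative heart of the reduction is the following claim. Setting $\vec{z} := (1/k,\ldots,1/k,p_z)$, the slack condition $(1+1/N^{2N})\vec{z}\in \cS_{G'}$ holds if and only if $p_z \le Q - C/N^{2N}$ for an explicit constant $C$, and in that case $\qdown_{G'}(\vec{z}) = Q - p_z \ge C/N^{2N}$. This follows because $(1/k)\vec{1}$ lies inside $\cS_G$ with $\Omega(1)$ slack (as $1/k < 1/(4e)$, the uniform Shearer threshold for $4$-regular graphs), and $\qdown_G$ along the ray through $(1/k)\vec{1}$ is convex (\Cref{lem:r-convex}) with derivative of absolute value at most $n/k$ (via the derivative formula in \Cref{lem:slack-test} combined with the monotonicity of $\qdown$ in \Cref{lem:breve-prop}). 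Consequently, whenever the slack condition holds, $\mathcal{A}$ returns a value within a $\poly{N}$ factor of $Q - p_z$, in particular at least $C/(N^{2N}\cdot\poly{N})$.

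The reduction maintains a candidate interval $[a,b]$ containing $Q$, initially $[0,1]$. At each step, with midpoint $m = (a+b)/2$, we query $\mathcal{A}$ at the shifted point $p_z := m - 2C/N^{2N}$; if the output exceeds the threshold $C/(N^{2N}\cdot\poly{N})$ we deem the slack condition satisfied and conclude $Q > m - 2C/N^{2N}$, recursing on $[m,b]$; otherwise we recurse on $[a,m]$. After $O(N\log N)$ rounds this locates $Q$ to additive precision $O(1/N^{2N})$, contradicting Theorem~\ref{thm:app-eval-inapprox}. The main obstacle will be that $\mathcal{A}$'s behavior when the slack condition fails is not constrained by the hypothesis, so one must ensure that failure-regime outputs cannot be misinterpreted as success; the cleanest fix is to supplement each query with a consistency check (e.g.\ by querying at two slightly different shifts and verifying that their ratio matches the prediction coming from convexity of $\qdown_{G'}$ along the ray), detecting failure at only constant-factor overhead in the binary search.
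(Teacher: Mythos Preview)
Your reduction idea via the star augmentation $G' = G+z$ and the linear relation $\qdown_{G'}(1/k,\ldots,1/k,p_z)=Q-p_z$ is natural, but the binary search step has a real gap that the proposed ``consistency check'' does not close.

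The difficulty is exactly the one you flag: when the midpoint $m$ falls above $Q$, the query point violates the promise $(1+1/N^{2N})\vec z\in\cS_{G'}$, and then $\mathcal A$ may return \emph{any} value. You suggest querying at two nearby shifts and checking that the ratio agrees with what convexity (here, linearity in $p_z$) predicts. But a $\poly{n}$-factor oracle already allows each individual answer to be off by a factor $n^c$, so the ratio of two valid answers is only pinned down up to $n^{2c}$; there is no predicted ratio to check against. Worse, an adversarial $\mathcal A$ in the failure regime can simply return values that are mutually consistent with \emph{some} hypothetical $Q' \neq Q$, and no finite collection of poly-factor queries at points with unknown promise status can rule this out. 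The ``consistency check'' therefore cannot certify that the promise held, and the binary search cannot be made sound this way.

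The paper's proof avoids this issue by a different mechanism. It reduces from the approximate \emph{membership} problem (\Cref{thm:app-membership-inapprox}) rather than from additive evaluation hardness, and it never queries outside the promise region. Starting from $t=0$ on the ray $t\vec p$, it uses the oracle to estimate both $\phi(t)=\qdown_V(t\vec p)$ and its derivative $\phi'(t)=-\sum_i p_i\,\qdown_{V\setminus\Gamma^+(i)}(t\vec p)$ (the latter is again a call to the same oracle on induced subgraphs), forms a poly-factor estimate $D(t)$ of $\phi(t)/|\phi'(t)|$, and takes a Newton-like step $t\leftarrow t+\tfrac12 D(t)$. Convexity of $\phi$ guarantees this never overshoots the boundary, so the promise is maintained throughout; and a derivative bound shows the distance to the root shrinks by a factor $1-1/\poly{n}$ per step, giving polynomially many iterations. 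Your construction could in principle be salvaged by replacing the binary search with a one-sided walk of this kind (start at $p_z=0$, where the promise certainly holds, and increase $p_z$ by a fraction of the current oracle estimate), but at that point you are essentially reproducing the paper's argument specialized to the star graph.
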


\begin{proof}
Suppose that given $G$, $\vec{z}$ as above, we can compute a number $\breve{Q}_V$ such that
$\qdown_V(\vec{z}) \leq \breve{Q}_V \leq n^c \qdown_V(\vec{z})$, for some absolute constant $c>0$.
Then clearly we can also do this for $\qdown_S(\vec{z})$, $S \subseteq V$, by considering the
subgraph induced by $S$. Suppose also that $n$ is sufficiently large, say $n \geq 2c+2$.
We claim that then by a polynomial number of calls to such an algorithm, we
can distinguish for a given point $\vec{p} \in [0,1]^V$ whether $\vec{p} + \frac{1}{n^n} \vec{1} \in
\mathcal{S}_G$ or $\vec{p} \notin \mathcal{S}_G$, which is a \SharpPHard problem by
Theorem~\ref{thm:app-membership-inapprox}.

Let $\phi(t) = \qdown_V(t \vec{p})$. Clearly $\phi(0) = 1$, and it was shown in \cite{HV15} that
$\phi$ is convex and decreasing. We aim to find the minimum $t>0$ such that $\phi(t) = 0$, which defines the nearest point on the boundary of $\mathcal{S}_G$ in the direction of $\vec{p}$. We can assume that $\sum_{i=1}^{n} p_i \geq 1$, otherwise $\vec{p} \in \mathcal{S}_G$ trivially.
We use the following algorithm: We start with $t = 0$. Given $t$, we estimate $\phi(t)$ and $\phi'(t)$ (within polynomial factors) using the assumed algorithm. This can be done, since $\phi(t) = \qdown_V(t \vec{p})$, and
$$ \phi'(t) = \frac{d}{dt} \qdown_V(t \vec{p})
 = \sum_{i=1}^{n} p_i \frac{\partial}{\partial{z_i}} \qdown_V(\vec{z}) \big|_{t \vec{p}}
 = -\sum_{i=1}^{n} p_i \qdown_{V \setminus \Gamma^+(i)}(t \vec{p}).$$
We will show that we only apply this computation to points $t$ such that $(1 + \frac{1}{n^{2n}}) t \vec{p} \in \mathcal{S}$. For such points $\phi(t) > 0$, $\phi'(t) < 0$ and we can also estimate $\frac{\phi(t)}{|\phi'(t)|}$. Let $D(t)$ be our estimate, such that $n^{-2c} \frac{\phi(t)}{|\phi'(t)|} \leq D(t) \leq \frac{\phi(t)}{|\phi'(t)|}$.
Given this estimate, we replace $t$ by $t' = t + \frac12 D(t)$. We
repeat this process as long as $D(t) \geq 1/n^{n+1+2c}$ and $t < 1$.
If we reach $t \geq 1$, we answer YES; else if $D(t)$ drops below $1/n^{n+1+2c}$, we answer NO.

We note the following: Assuming that the minimum positive root of
$\phi$ is $\xi_0$ and $0 \leq t < \xi_0$, we have $t + D(t) \leq t +
\frac{\phi(t)}{|\phi'(t)|} \leq \xi_0$ by convexity of
$\phi$. Therefore, the additive slack at any point $t$ is at least
$D(t)$. Since we update the point to $t' = t + \frac12 D(t)$, we
always retain slack at least $\frac12 D(t)$, which is guaranteed to be
at least $\frac{1}{2n^{n+1+2c}} \geq \frac{1}{n^{2n}}$ (for $n \geq 2c+2$), otherwise we terminate.
This proves the above claim that we only evaluate at points $t$
such that $(1 + \frac{1}{n^{2n}}) t \vec{p} \in \mathcal{S}$.

On the other hand, if $\delta \defeq \xi_0 - t$, we have
$\qdown_V(t \vec{p} + \delta p_i \vec{e}_i) \geq 0$ for $1 \leq i \leq
n$ since $(t+\delta)p$ is at the boundary of $\mathcal{S}_G$. We then have
\ifbool{twocol}{
  \begin{align*}
    \phi(t + \delta) - \phi(t)
    &\leq \min_{1 \leq i \leq n} \qdown_V(t \vec{p} + \delta p_i \vec{e}_i) - \qdown_V(t \vec{p})\\
    &=  -\max_{1 \leq i \leq n} \delta p_i \frac{\partial \qdown_V}{\partial z_i} \big|_{\vec{z} = t \vec{p}}\\
    &\leq - \frac{\delta}{n} \sum_{i=1}^{n} p_i \frac{\partial \qdown_V}{\partial z_i} \big|_{\vec{z} = t \vec{p}}
  = \frac{\delta}{n} \phi'(t).
  \end{align*}

}{
  $$ \phi(t + \delta) - \phi(t) \leq \min_{1 \leq i \leq n} \qdown_V(t \vec{p} + \delta p_i \vec{e}_i) - \qdown_V(t \vec{p})
 =  -\max_{1 \leq i \leq n} \delta p_i \frac{\partial \qdown_V}{\partial z_i} \big|_{\vec{z} = t \vec{p}}
  \leq - \frac{\delta}{n} \sum_{i=1}^{n} p_i \frac{\partial \qdown_V}{\partial z_i} \big|_{\vec{z} = t \vec{p}}
  = \frac{\delta}{n} \phi'(t).
  $$
}
Therefore, since $\phi(t + \delta) = 0$, we get $\frac{\phi(t)}{|\phi'(t)|} \geq \frac{\delta}{n} = \frac{\xi_0 - t}{n}$. By our approximation guarantee, $D(t) \geq n^{-2c} \frac{\phi(t)}{|\phi'(t)|}
\geq n^{-2c-1} (\xi_0-t)$.  (Note that this also means that $(t + n^{1+2c} D(t))\vec{p} \not\in \mathcal{S}$.)
So when we replace $t$ by $t + \frac12 D(t)$, we decrease the distance
$\xi_0 - t$ to the nearest root by a factor of $1 - 1/(2n^{2c+1})$ in the worst case. After $2 n^{2c+1} (n+2+2c) \log n$ steps, the distance decreases by a factor of
\[
    \Big(1 - \frac{1}{2n^{2c+1}} \Big)^{2 n^{2c+1} (n+2+2c) \log n} ~<~ \frac{1}{n^{n+2+2c}}.
\]
Initially, we have $\xi_0 \leq n$ because $p_i \geq 1/n$ for some $i \in [n]$. Hence,
the quantity $\xi_0 - t$ as well as $D(t)$ must shrink below $1/n^{n+1+2c}$ in a polynomial number of steps.

If we terminate because $t \geq 1$, we have certified that $\vec{p} \in \mathcal{S}$ and we can answer YES.
If we terminate because $D(t) < 1/n^{n+1+2c}$ then it is the case that $t < 1$, and we know that
$(t + n^{1+2c} D(t)) \vec{p} \notin \mathcal{S}$.
Hence $(1 + 1/n^{n}) \vec{p} \notin \mathcal{S}$ and we can answer NO.
\end{proof}

\begin{corollary}[restatement of \Theorem{ShearerHardnessIntro}]
\label{cor:estimateQ}
If there is an algorithm to estimate $\qdown_V(\vec{p})$ to within a $\poly{n}$ multiplicative
factor, assuming that $(1+\alpha) \vec{p} \in \mathcal{S}$, and running in time $(n \log
\frac{1}{\alpha})^{O(\log n)}$, then $\text{\#P} \subseteq \DTIME(n^{O(\log n)})$.
\end{corollary}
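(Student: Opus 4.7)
The plan is to derive this corollary directly from \Theorem{app-poly-factor-hardness}, which already establishes the \SharpPHardness of $\poly{n}$-factor approximation of $\qdown_V(\vec{p})$ in the regime where the slack is only $\alpha = 1/n^{2n}$. The only work remaining is a running-time bookkeeping: show that the hypothesized algorithm, when invoked at this tiny slack, runs in time $n^{O(\log n)}$, and then combine with the hardness result.

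Concretely, I would first instantiate \Theorem{app-poly-factor-hardness} with its specified value of the slack, namely $\alpha = 1/n^{2n}$, so that any $\poly{n}$-factor approximation algorithm for $\qdown_V(\vec{p})$ on inputs satisfying $(1+\alpha)\vec{p}\in\mathcal{S}$ must solve a \SharpPHard problem. Next I would feed precisely these inputs into the hypothetical algorithm whose running time is $(n \log (1/\alpha))^{O(\log n)}$. Substituting $\log(1/\alpha) = 2n\log n$ yields
\[
  \bigl(n \log(1/\alpha)\bigr)^{O(\log n)} \;=\; (2 n^{2} \log n)^{O(\log n)} \;=\; n^{O(\log n)},
\]
so the algorithm decides a \SharpPHard problem in deterministic time $n^{O(\log n)}$. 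This immediately gives $\text{\#P}\subseteq \DTIME(n^{O(\log n)})$, as claimed.

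There is really no technical obstacle here beyond choosing the right value of $\alpha$ to match the hypothesized running time against the hardness regime of \Theorem{app-poly-factor-hardness}; the corollary is a direct consequence once the quantitative interplay between the slack and the run-time exponent has been worked out, and the proof is essentially a one-line substitution.
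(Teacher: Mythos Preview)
Your proposal is correct and matches the paper's proof essentially line for line: the paper also sets $\alpha = 1/n^{2n}$, invokes Theorem~\ref{thm:app-poly-factor-hardness}, and computes $(n\log\frac{1}{\alpha})^{O(\log n)} = n^{O(\log n)}$ to conclude.
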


\begin{proof}
Suppose we have such an algorithm. Then we can run it for $\alpha = \frac{1}{n^{2n}}$, and solve a
\SharpPHard problem (from Theorem~\ref{thm:app-poly-factor-hardness}) in running time $(n \log \frac{1}{\alpha})^{O(\log n)} = n^{O(\log n)}$.
\end{proof}

In contrast, the running time of our algorithm (for a constant-factor
approximation) is $n^{O(\frac{1}{\alpha} \log d)}$. Again, there is an
open question here, whether there is an approximation algorithm
(possibly even an FPTAS) under these conditions with running time at most $\text{quasi-poly}(n,1/\alpha)$.

\ifbool{twocol}{}{
\subsection{Membership in the original LLL region}
\SectionName{LLLregion}

The original statement of the Lov\'asz Local Lemma \cite{ErdosLovasz,Spencer77}
had a stronger hypothesis than Shearer's formulation.
It stated that $\mu(\land_{i=1}^n\lnot\cE_i) > 0$ if $G$ is a dependency graph
for events $\cE_1,\ldots,\cE_n$ and if $\vec{p}$ lies in the set
\ifbool{twocol}{
  \begin{multline*}
    \cL_G \defeq \Big\{~ \vec{p} \in [0,1]^V ~:~ \exists \vec{x} \in (0,1)^V ~\text{s.t.}~\\
    p_i \leq x_i \cdot \smallprod{(i,j) \in E}{} \, (1-x_j) ~\:\forall i \in V ~\Big\}.
  \end{multline*}
}{
  $$
\cL_G \defeq \Big\{~ \vec{p} \in [0,1]^V ~:~ \exists \vec{x} \in (0,1)^V ~\text{s.t.}~
    p_i \leq x_i \cdot \smallprod{(i,j) \in E}{} \, (1-x_j) ~\:\forall i \in V ~\Big\}.
    $$
  }
Shearer's results imply that $\cL_G \subseteq \cS_G$
(see also \cite{ScottSokal,HV15}).
Interestingly,
although deciding membership in $\cS_G$ is \SharpPHard,
membership in $\cL_G$ can be decided in polynomial time within exponentially small errors.

\begin{theorem}
For a given graph $G = (V,E)$, $|V|=n$, rational $(p_1,\ldots,p_n) \in [0,1]^V$ and $\epsilon>0$, we
can distinguish between $(p_1+\epsilon,\ldots,p_n+\epsilon) \in \cL_G$ and $(p_1,\ldots,p_n) \notin
\cL_G$ , in time $\text{poly}(n, \log \frac{1}{\epsilon})$.
\end{theorem}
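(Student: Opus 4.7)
My plan is to recast membership in $\cL_G$ as a convex feasibility problem and apply the ellipsoid method. After the substitution $y_i = -\log(1-x_i)$ so that $x_i = 1-e^{-y_i}$, the defining constraint $p_i \leq x_i \prod_{j \in \Gamma(i)}(1-x_j)$ becomes
\begin{displaymath}
    f_i(\vec{y}) \defeq \log(1-e^{-y_i}) - \sum_{j \in \Gamma(i)} y_j - \log p_i \;\geq\; 0.
\end{displaymath}
Since $y \mapsto \log(1-e^{-y})$ has second derivative $-e^{-y}/(1-e^{-y})^2 < 0$, each $f_i$ is concave on $\bR_{>0}^V$. Hence $\vec{p} \in \cL_G$ iff the convex set $K_{\vec{p}} \defeq \setst{\vec{y} \in \bR_{>0}^V}{f_i(\vec{y}) \geq 0 \;\forall i}$ is nonempty, and a separation oracle is immediate: evaluate each $f_i(\vec{y})$, and if some $f_i(\vec{y}) < 0$ use $\nabla f_i(\vec{y})$ as a separating hyperplane.

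To execute ellipsoid in polynomial time I need an outer bounding box and a lower bound on the inner volume of $K_{\vec{p}}$ under the promise $(\vec{p}+\epsilon \vec{1}) \in \cL_G$. Let $\vec{x}^\star \in (0,1)^V$ be the least fixed point of $T_i(\vec{x}) \defeq (p_i+\epsilon)/\prod_{j \in \Gamma(i)}(1-x_j)$, which exists by a Knaster--Tarski argument under the promise and serves as a witness for $\vec{p}+\epsilon \vec{1}$. For any vertex $i$ with a neighbor $i' \in \Gamma(i)$, expanding the $i'$-th fixed-point equation gives $x_{i'}^\star (1-x_i^\star) \prod_{j \in \Gamma(i')\setminus\{i\}}(1-x_j^\star) = p_{i'}+\epsilon$; since the other factors lie in $[0,1]$ this forces $1-x_i^\star \geq p_{i'}+\epsilon \geq \epsilon$ and hence $y_i^\star \leq \log(1/\epsilon)$. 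Vertices with $p_i = 0$ can be removed by preprocessing (they impose no constraint), and isolated vertices reduce to the scalar case $x_i = p_i+\epsilon$. This confines the search to a box $[\delta, L]^V$ with $\delta = \Omega(p_{\min})$ and $L = O(B + \log(1/\epsilon))$, where $B$ is the bit length of $\vec{p}$. Moreover, at $\vec{y}^\star$ each constraint is strictly satisfied with slack $f_i(\vec{y}^\star) = \log(1+\epsilon/p_i) = \Omega(\epsilon)$, and each $f_i$ is $\poly{1/\delta}$-Lipschitz on the box, so a ball of radius $r = \poly{\epsilon \delta/n}$ around $\vec{y}^\star$ lies entirely inside $K_{\vec{p}}$.

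Ellipsoid then terminates in $O(n^2 \log(L/r)) = \poly{n, \log(1/\epsilon)}$ iterations: if it returns a feasible $\vec{y}$ we output YES ($\vec{p} \in \cL_G$); if it certifies infeasibility in the box we output NO ($\vec{p} \notin \cL_G$). Correctness is immediate from the above: under the promise $(\vec{p}+\epsilon \vec{1}) \in \cL_G$, the polynomial-volume inner ball forces ellipsoid to produce a certificate, while under $\vec{p} \notin \cL_G$ the set $K_{\vec{p}}$ is globally empty. I expect the main technical hurdle to be the uniform upper bound on $y_i^\star$, since the propagation argument above requires $i$ to have a suitable neighbor and needs separate case analysis for isolated and zero-probability vertices; a secondary worry is numerical precision in evaluating $\log(1-e^{-y_i})$ near $y_i \to 0$, which is handled by standard bit-length bookkeeping in the Gr\"{o}tschel--Lov\'{a}sz--Schrijver ellipsoid framework.
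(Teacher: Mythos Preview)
Your approach is correct and is essentially the same as the paper's: recast membership in $\cL_G$ as a convex feasibility problem and run the ellipsoid method with the obvious gradient-based separation oracle. The paper's execution is considerably simpler, however, because your change of variables $y_i = -\log(1-x_i)$ is unnecessary. The function $\phi_i(\vec{x}) = \log x_i + \sum_{j\in\Gamma(i)}\log(1-x_j)$ is already concave in $\vec{x}\in(0,1)^V$, so the feasible set $\cX_{G,\vec{p}} = \{\vec{x}\in(0,1)^V : \phi_i(\vec{x})\geq\log p_i\ \forall i\}$ is convex without any substitution. Working in the original coordinates keeps the search inside the natural bounded cube $[0,1]^V$, which dispenses with your Knaster--Tarski fixed-point argument, the case analysis for isolated and zero-probability vertices, and the Lipschitz-based inner-ball estimate. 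The paper's inner-volume bound is a one-line monotonicity check: if $\vec{x}$ witnesses $\vec{p}+\epsilon\vec{1}\in\cL_G$, then for any $\xi_i\in[0,\epsilon]$ one has $(x_i-\xi_i)\prod_{j\in\Gamma(i)}(1-(x_j-\xi_j))\geq x_i\prod_{j\in\Gamma(i)}(1-x_j)-\xi_i\geq p_i$, so the entire box $[\vec{x}-\epsilon\vec{1},\vec{x}]\subseteq\cX_{G,\vec{p}}$ has volume $\epsilon^n$. Your route works, but the direct argument avoids all of the technical hurdles you flagged.
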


\begin{proof}
By taking logs, we can write equivalently
\ifbool{twocol}{
  \begin{multline*}
    \cL_G ~=~ \Big\{~ \vec{p} \in [0,1]^V ~:~ \exists \vec{x} \in (0,1)^V ~\text{s.t.}~\\
    \log p_i \leq \log x_i + \smallsum{(i,j) \in E}{} \log (1-x_j) ~\:\forall i \in V ~\Big\}.
  \end{multline*}
}{
  $$
\cL_G ~=~ \Big\{~ \vec{p} \in [0,1]^V ~:~ \exists \vec{x} \in (0,1)^V ~\text{s.t.}~
    \log p_i \leq \log x_i + \smallsum{(i,j) \in E}{} \log (1-x_j) ~\:\forall i \in V ~\Big\}.
 $$} 
Thus $p \in \cL_G$ is equivalent to the following set being nonempty:
\ifbool{twocol}{
  \begin{multline*}
    \cX_{G,p} ~=~ \Big\{~ \vec{x} \in (0,1)^V ~:~\\
    \log p_i \leq \log x_i + \smallsum{(i,j) \in E}{} \log (1-x_j) ~\:\forall i \in V ~\Big\}.
  \end{multline*}
}{$$
\cX_{G,p} ~=~ \Big\{~ \vec{x} \in (0,1)^V ~:~
    \log p_i \leq \log x_i + \smallsum{(i,j) \in E}{} \log (1-x_j) ~\:\forall i \in V ~\Big\}.
$$}
Note that this is a convex set: $p_i$ is fixed here, and $\phi_i(x) = \log x_i + \sum_{(i,j) \in E}
\log (1-x_j)$ is a concave function of $x \in (0,1)^V$. Also, it is easy to implement a separation
oracle for $\cX_{G,p}$:
Given a point $x$, we can check directly if all the constraints are satisfied, and if not we can compute a separating hyperplane whose normal vector is the gradient of $\phi_i(x)$.

Suppose now that $p + \epsilon = (p_1+\epsilon, \ldots, p_n + \epsilon) \in \cL_G$. Let $x \in (0,1)^V$ be such that $p_i + \epsilon \leq  x_i \prod_{(i,j) \in E} (1-x_j)$. Clearly, $x_i \geq \epsilon$. Also, for any $\xi_i \in [0,\epsilon]$,
\ifbool{twocol}{
  \begin{align*}
    (x_i - \xi_i) \prod_{(i,j) \in E} (1-(x_j - \xi_j))
    &\geq (x_i - \xi_i) \prod_{(i,j) \in E} (1-x_j)\\
    &\geq x_i \prod_{(i,j) \in E} (1-x_j) - \xi_i \geq p_i.
  \end{align*}
}{$$ (x_i - \xi_i) \prod_{(i,j) \in E} (1-(x_j - \xi_j)) \geq (x_i - \xi_i) \prod_{(i,j) \in E} (1-x_j)
 \geq x_i \prod_{(i,j) \in E} (1-x_j) - \xi_i \geq p_i.$$}
This means that the box $[x - \epsilon, x]$ is contained in $\cX_{G,p}$.
The volume of this box is $\epsilon^n$, while $\cX_{G,p}$ is contained in the box $[0,1]^V$,
of volume $1$. Therefore, by the ellipsoid method, we can find a point in $\cX_{G,p}$ in
$\text{poly}(n,\log \frac{1}{\epsilon})$ iterations, which certifies that $p \in \cL_G$ and we can
answer YES. If the ellipsoid method fails to find such a point, it must be the case that $p+\epsilon
\notin \cL_G$, in which case we can answer NO.
\end{proof}

In particular, in $\text{poly}(n)$ time we can decide about membership in the LLL region within a
$1/n^n$ additive error, which is \SharpPHard for the Shearer region.
}
%

%


\section{Extension to graphs of bounded connective constant}
\label{sec:extens-graphs-bound}

The connective constant, first studied by
Hammersley~\cite{hammersley_percolation_1957}, is a natural notion of
the average degree of a graph.  The definition is best motivated in
the setting of infinite regular lattices (e.g., $\Z^2$), though it
extends easily to general graph families.  Note that the maximum and
average degrees of $\Z^2$ are both $4$, and in this respect it is not
distinguishable from the infinite $4$-regular tree.  However, it is
clear that $\Z^2$ is very different from the regular tree (in
particular due to its small girth), and the connective constant may be
seen as a notion of average degree that tries to capture this
difference.

For a fixed vertex $v$ in $\Z^2$, consider $N(v, \ell)$, the number of
\emph{self-avoiding walks} in the lattice starting at $v$. (In the
special case of the lattice, this number depends only upon $\ell$ and
not $v$).  We then have
\begin{displaymath}
  2^\ell < N(v, \ell) < 3^\ell.
\end{displaymath}
The connective constant measures the rate of growth of $N(v, \ell)$ as
a function of $\ell$.  Formally, the connective constant
$\Delta(\Z^2)$ of $\Z^2$ is given by
\begin{displaymath}
  \Delta(\Z^2) = \lim_{\ell \rightarrow \infty} N(v, \ell)^{1/\ell}.
\end{displaymath}
(The limit on the right hand side above can be shown to exist in the
case of $\Z^2$ and other regular lattices; see,
e.g., \cite{madras96:_self_avoid_walk}.  However, computing the exact
value of the connective constant is an open problem for most regular
lattices, with one celebrated
exception~\cite{duminil-copin_connective_2012}.)  For an infinite
family of finite graphs, we may similarly define the connective
constant as follows:

\begin{definition}[\textbf{Connective constant: finite
    graphs~\cite{SSSY15}}]
  Let $\mathcal{G}$ be an infinite family of finite graphs.  We say
  that the connective constant of graphs in $\mathcal{G}$ is at most
  $\Delta$ if there exist positive constants $a$ and $c$ such that for
  any $G \in \mathcal{G}$ with at least $n$ vertices, any
  $\ell \geq a \log n$ and any vertex $v$ in $G$, the number
  $N(v, \ell)$ of self-avoiding walks in $G$ of length $\ell$ starting
  at $v$ is at most $c\Delta^\ell$.
\end{definition}

\vspace{6pt}

Note that the connective constant of graphs of maximum degree $d$ is
at most $d-1$.  However, as in the case of lattices, it can be much
smaller than this crude bound; in particular, it can be bounded
even when the maximum degree is unbounded.  An important example is
that of graphs sampled from the sparse Erd\H{o}s-R\'enyi random graph
model $\mathcal{G}(n, d/n)$.  When $d$ is a constant, the
connective constant of such graphs is at most $d$ w.h.p.  On the other
hand, the maximum degree of such a graph on $n$ vertices is
$\Omega\inp{\frac{\log n}{\log \log n}}$ w.h.p.

The connective constant turns out to have an important connection with
correlation decay based algorithms for the independence polynomial.
Recall that Weitz showed that when $0 \leq \lambda < \lambda_c(d)$,
there is an FPTAS for $Z_G(\lambda)$ on graphs of degree at most $d$.
In \cite{SSSY15}, this was extended to all graphs of connective
constant $d-1$, without any bound on the maximum degree. (Note that
graphs of maximum degree $d$ have connective constant at most $d-1$,
so this is a strict generalization of Weitz's result even in the
bounded degree setting.)

In the setting of complex activities, our main theorem,
\Cref{thm:mainShearerIntro}, also generalizes to graphs of bounded
connective constant.  The proof presented in
\Cref{sec:corr-decay-with-2} is already sufficient to establish this
extension with a few small modifications, which we now proceed to
describe.  In particular, we prove the following modification of
\Cref{thm:mainShearer}.

\begin{theorem}[\textbf{FPTAS for graphs of bounded connective constant}]
  \label{thm:mainShearer-connective}
  Let $\mathcal{G}$ be an infinite family of finite graphs with
  connective constant at most $\Delta$, and let the constant $a$ be as
  in the definition of the connective constant.  Given a graph
  $G = (V, E) \in \mathcal{G}$ on $n$ vertices, a parameter vector
  $\vec{p}$ such that $(1+\alpha)^2 \vec{p} \in \cS$, and a positive
  $\epsilon \leq 1/n$, define
  $\ell = \max\inb{a \log n, \frac{4}{\sqrt{\alpha}}
    \log\inp{\frac{n}{\epsilon\alpha}}}$.  Then a
  $\big( 1+O(\epsilon n) \big)$-approximation to $\qdown_V(\vec{p})$
  can be computed in time $O( n \Delta^\ell)$.
\end{theorem}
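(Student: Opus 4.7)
The plan is to reuse \Cref{alg:main} and the correlation-decay analysis developed in \Cref{sec:corr-decay-with-2} essentially verbatim; the only change is in how the size of the computation tree is counted, which is now controlled by the connective constant rather than the maximum degree. Concretely, I would fix an arbitrary ordering $v_1,\dots,v_n$ of $V$, write $\qdown_V(\vec{p}) = \prod_{i=1}^{n}(1-r_{S_i,v_i})$ via \Cref{clm:selfreduc}, and approximate each $r_{S_i,v_i}$ by invoking \textsc{\OcR} with depth $\ell = \max\{a\log n,\ \frac{4}{\sqrt{\alpha}}\log(n/(\epsilon\alpha))\}$.

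The correctness part of the argument goes through unchanged: the per-ratio error bound from \Cref{cor:decayRoot} only requires the slack condition $(1+\alpha)^2 \vec{p}\in\cS$, and, as in the proof of \Cref{thm:mainShearer}, once $\ell \geq \frac{4}{\sqrt{\alpha}}\log(n/(\epsilon\alpha))$ the error $|r_{S_i,v_i}-R_{S_i,v_i}|$ is small enough that \Cref{fct:multiply} yields the overall $(1+O(\epsilon n))$-multiplicative approximation to $\qdown_V(\vec{p})$. Note this bound on $\ell$ does not depend on the degree, so there is no obstruction coming from unbounded degrees in $G$.

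For the running time, I would use the remark of Weitz (recalled after \Cref{lem:recurrence}) that the nodes at depth $k$ of the computation tree rooted at a vertex $v$ are in bijection with self-avoiding walks of length $k$ in $G$ starting at $v$. Thus the computation tree for $R_{S_i,v_i}$ has at most $\sum_{k=0}^{\ell} N(v_i,k)$ nodes. Since $\ell\geq a\log n$, the definition of connective constant gives $N(v_i,k)\leq c\Delta^k$ for all $a\log n \leq k \leq \ell$; for the (negligibly many) values $k < a\log n$ we may absorb the crude bound $N(v_i,k)\leq n^a$ into the constant. Summing the geometric series yields $O(\Delta^{\ell})$ nodes per root, and hence $O(n\Delta^{\ell})$ nodes in total across the $n$ telescoping factors.

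The only mildly delicate point, and the step I would double-check most carefully, is that the self-avoiding walk correspondence applies to every subproblem $(S_i,v_i)$ encountered in the telescoping product, not just to the full graph $G$. This is fine because each such subproblem is obtained by restricting $G$ to a subset of vertices, and self-avoiding walks in the restricted graph are in particular self-avoiding walks in $G$ of the same length, so the count $N(v_i,k)$ in $G$ is an upper bound for the count in any induced subgraph. With that observation the run-time bound $O(n\Delta^{\ell})$ follows, completing the proof.
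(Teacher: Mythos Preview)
Your overall approach matches the paper's almost exactly: run \Cref{alg:main} with the stated $\ell$, bound the size of each computation tree via the self-avoiding-walk correspondence and the connective-constant hypothesis, and then reuse the multiplicative-error argument from the proof of \Cref{thm:mainShearer}. Your observation that self-avoiding walks in an induced subgraph are also self-avoiding walks in $G$ (so the bound $N(v,\ell)\le c\Delta^\ell$ applies to every telescoping factor) is correct and is exactly what the paper implicitly uses.

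There is, however, one inaccuracy in your correctness step. You write that ``the per-ratio error bound from \Cref{cor:decayRoot} only requires the slack condition'' and that ``this bound on $\ell$ does not depend on the degree.'' That is not quite right: the bound in \Cref{cor:decayRoot} is $(1+d_a/\alpha)\cdot(1+\sqrt{\alpha})^{-\ell/2}$, which \emph{does} depend on the degree $d_a$ of the root vertex. In the bounded-degree theorem this was absorbed into the choice of $\ell$ via the parameter $d$; here you have no uniform degree bound, so you must instead use the trivial bound $d_a\le n$. The paper makes exactly this substitution (``apply \Cref{cor:decayRoot} using $n$ as the bound $d$ on the maximum degree''), and this is why the $n$ inside $\log(n/(\epsilon\alpha))$ suffices to kill the $(1+n/\alpha)$ factor. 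Once you state this explicitly, the derivation of $\abs{\Xi_i/\xi_i-1}\le O(\epsilon)$ goes through and the rest of your argument is fine.
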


\begin{proof}[Proof (sketch)]
  The proof is very similar to that of \Cref{thm:mainShearer} in
  \Cref{sec:corr-decay-with-2} so we only describe the steps that need
  to be modified.  The first observation, already alluded to in the
  remark following \Cref{lem:recurrence}, is that the the size of the
  computation tree for computing $r_{V, v}$ up to depth $\ell$ is at
  most $N(v, \ell)$, the number of self-avoiding walks of length
  $\ell$ starting at $v$.  Since $G$ belongs to a graph family of
  connective constant at most $\Delta$, we have
  $N(v, \ell) = O(\Delta^\ell)$ when $\ell \geq a\log n$, so that the
  cost of expanding the tree to depth $\ell$ is at most
  $O(\Delta^\ell)$.  Thus, the total cost of computing each of the
  $R_{S_i, v_i}$ as in the proof of \Cref{thm:mainShearer} is
  $O(n\Delta^\ell)$.

  It remains to show that this achieves an $1 \pm O(\epsilon n)$ factor
  approximation to $\qdown_V$.  The proof is again similar, except
  that we now apply \Cref{cor:decayRoot} using $n$ as the bound $d$
  on  the maximum degree, and using the new definition of $\ell$ in the statement of
  above theorem (which also replaces the $d$ used in the definition of
  $\ell$ in \Cref{thm:mainShearer} by $n$).  With these two
  modification, we again obtain
  \begin{displaymath}
    \abs{\frac{\Xi_i}{\xi_i} - 1} \leq \epsilon \quad\text{for each $i$},
  \end{displaymath}
  and can complete the proof exactly as before.
\end{proof}

%
%
%
%


\section{Optimality of the decay rate}
\label{sec:optim-decay-rate}

We saw above that when the probability vector $\vec{p}$ input to our
approximation algorithm~(\Cref{alg:main}) for the independence
polynomial has slack $\alpha$ (i.e., when
$(1+\alpha)\vec{p} \in \mathcal{S}$), the running time of the
algorithm is exponential in $1/\sqrt{\alpha}$.  In the applications
outlined in \Cref{sec:ShearerVar,sec:membership}, \Cref{alg:main} is
invoked with $\vec{p}$ that have slack $\Theta(1/n)$ (where $n$ is the
number of vertices in the graph), and the above exponential dependence
on $1/\sqrt{\alpha}$ leads to a sub-exponential dependence on $n$.
While this is qualitatively superior to exponential time algorithms
that would result from naive brute force methods, or even methods
based on approximate counting algorithms whose running times are
exponential in $1/\alpha$ (e.g., those in \cite{PatelRegts}), one might
still ask if it is possible to get a better dependence on $\alpha$ so
as to improve the running times obtained in our applications.

In this section, we present evidence that ``correlation-decay'' based
methods based on Weitz's framework cannot in fact break the
sub-exponential barrier.  To do this, we revisit
\Cref{thm:decayShearer}, which is the main ingredient in the
complexity analysis of \Cref{alg:main}.  Recall that
\Cref{thm:decayShearer} considers the effect of truncating the
exponentially large computation tree (generated by
\Cref{lem:recurrence}) at a finite depth $\ell$ by showing that the
tree-recurrence of \cref{eq:tree-recursion} causes an amortized
version of the error to decay by a factor $(1-\Omega(\sqrt{\alpha}))$
at each level of the computation tree.  The factor of $\sqrt{\alpha}$
in the running time of \Cref{alg:main} came precisely from this form
of the decay factor.  In particular, to show that one cannot do better
than a sub-exponential dependence on $1/\alpha$ in the running time in
this framework, we need to show that a decay rate better than
$(1-\Omega(\sqrt{\alpha}))$, where $c$ is an appropriate constant in
$(0, 1/2]$, cannot be obtained in the general case.

To establish this, we consider the case where the computation tree is
actually a $d$-ary tree.  In addition to being the simplest possible
example of a computation tree, the $d$-ary tree also arises as the
truncated computation tree when \Cref{alg:main} is applied to a vertex
in a locally tree like graph (e.g., a random vertex in a random
regular graph).

For a given activity $\lambda$, the tree recurrence on a $d$-ary tree
reduces to the following univariate recurrence:
\begin{displaymath}
  f(x) \defeq \frac{\lambda}{(1 - x)^d}.
\end{displaymath}
Since the $d$-ary tree has degree $d + 1$ at all vertices (except at
the root which has degree $d$), the radius of the univariate Shearer
region for it is $\lambda_c'(d+1) = \frac{d^d}{(d+1)^{d+1}}$.
Note that $f$ and all its derivatives are increasing functions of
$x \in (0, 1)$. It is also not hard to show that when
$0 < \lambda < \lambda_c'(d+1)$, this recurrence has two fixed
points $x^{\star} < x^{\dagger}$ in $(0, 1)$, such that
$\lim_{\ell \rightarrow \infty}f^\ell(x) = x^{\star}$.  Our goal now
is to show that when $\lambda$ has slack $\alpha$ (i.e.,
$\lambda = (1-\alpha)\lambda_c'(d+1)$), the rate of convergence
of this recurrence is no better than $\inp{1 - O\inp{\sqrt{\alpha}}}$.
In particular, we will show that
\begin{equation*}
  \abs{f^l(0) - x^{\star}} \geq \Omega\inp{(1 - O\inp{\sqrt{\alpha}}^l}.
\end{equation*}
More formally, we have the following theorem.
\begin{theorem}
  \label{thm:optimal-decay}
  Let $d$ be a large enough positive degree.  Let
  $\alpha \in (0, 1/8)$ be arbitrary, and set
  $\lambda = (1 - \alpha)\lambda_c'(d+1)$.  Then, there exist
  $c_o = c_0(\alpha, d), c_1 = c_1(\alpha, d)$ and
  $l_0 = l_0(\alpha, d)$ such that for all $l \geq l_0$,
  \begin{displaymath}
    \abs{f^l(0) - x^{\star}} \geq c_0 \inp{1 - c_1\sqrt{\alpha}}^{l - l_0}.
  \end{displaymath}
\end{theorem}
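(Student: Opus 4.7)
The plan is to prove this as a statement about the convergence rate of the one-dimensional iteration $x_{l+1} = f(x_l)$ starting at $x_0 = 0$ toward the smaller (attracting) fixed point $x^\star$. The essential feature is a tangent bifurcation: at $\alpha = 0$ the two fixed points $x^\star < x^\dagger$ coincide at a single tangent fixed point $x_c$ where $f'(x_c) = 1$; for $\alpha > 0$ small they split apart by a distance of order $\sqrt{\alpha}$, creating an intermittent ``bottleneck'' near $x_c$ through which the iterates must pass. The slow convergence rate in the theorem is a direct consequence of this bifurcation, and the argument amounts to making the perturbation analysis near the tangency rigorous.

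The first step is to locate $x_c$ and quantify the splitting. The fixed-point equation $f(x) = x$ is equivalent to $g(x) := x(1-x)^d = \lambda$, and $g$ is strictly concave on $(0,1)$ with unique maximizer $x_c = 1/(d+1)$ at which $g(x_c) = \lambda_c'(d+1)$. A direct computation gives $g''(x_c) = -(d+1)(d/(d+1))^{d-1}$, and a second-order Taylor expansion of $g$ around $x_c$ yields $x_c - x^\star = C(d)\sqrt{\alpha} + O(\alpha)$ with $C(d) := \sqrt{2 \lambda_c'(d+1) / |g''(x_c)|} > 0$. Using the identity $f'(x^\star) = dx^\star/(1-x^\star)$ (obtained by combining $\lambda = x^\star(1-x^\star)^d$ with the formula $f'(x) = d\lambda/(1-x)^{d+1}$) and substituting this expansion, a short calculation gives $f'(x^\star) = 1 - c_1(d)\sqrt{\alpha} + O(\alpha)$ with $c_1(d) := C(d)(d+1)^2/d > 0$.

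The second step is to show that the iteration genuinely converges at this rate, which gives the lower bound. Because $f$ is strictly increasing on $[0,1)$ and $f(x) - x$ has the same sign as $\lambda - g(x)$, the iterates $\{x_l\}$ are monotonically increasing and converge to $x^\star$ from below. Setting $E_l := x^\star - x_l > 0$, the mean value theorem yields $E_{l+1} = f'(\xi_l) E_l$ for some $\xi_l \in (x_l, x^\star)$. Since $f''$ is continuous with $f''(x_c) = (d+1)^2/d$ and hence bounded on $[0, x_c]$ by some $M = M(d)$, one obtains $f'(\xi_l) \geq f'(x^\star) - M E_l$. The complementary upper bound $E_l \leq (f'(x^\star))^l \cdot x^\star$ (from $f'(\xi_l) \leq f'(x^\star)$) guarantees $E_l \leq c_1\sqrt{\alpha}/(2M)$ for all $l \geq l_0 := O(\log(1/\alpha)/\sqrt{\alpha})$, at which point $f'(\xi_l) \geq 1 - (3c_1/2)\sqrt{\alpha} + O(\alpha)$. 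Iterating from $l_0$ onwards yields $E_l \geq c_0 \cdot (1 - (3c_1/2)\sqrt{\alpha})^{l - l_0}$ with $c_0 := E_{l_0} > 0$, which is the claim of the theorem after relabeling the constant.

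The main obstacle I anticipate is the two-sided structure of the argument: the upper bound on $E_l$ is needed to confirm that $x_l$ eventually enters the linearization neighborhood of $x^\star$ (a neighborhood of radius $\Theta(\sqrt{\alpha})$, which itself shrinks as $\alpha \to 0$), while the lower bound, valid only inside that neighborhood, supplies the desired rate. Balancing these requires careful tracking of the Lipschitz constant $M(d)$ of $f'$ on the relevant interval and verifying that the $O(\alpha)$ second-order terms in the expansion of $f'(x^\star)$ are genuinely subleading to the $\sqrt{\alpha}$ first-order term; this is why the theorem requires $\alpha$ to be sufficiently small and $d$ to be sufficiently large (so that $c_1(d)$ is bounded away from $0$ uniformly).
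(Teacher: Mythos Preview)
Your proof is correct and follows the same overall architecture as the paper's: show that $f'(x^\star) = 1 - \Theta(\sqrt{\alpha})$, use the monotone convergence of $f^l(0)$ to $x^\star$ from below to enter a small neighborhood of $x^\star$ after finitely many steps $l_0$, and then apply the mean value theorem inside that neighborhood to get the lower bound $E_{l+1} \geq (1 - O(\sqrt{\alpha}))E_l$.

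The only genuine difference is in how the key relation $1 - f'(x^\star) = O(\sqrt{\alpha})$ is obtained. You frame it as a tangent (saddle--node) bifurcation of the fixed-point equation $g(x) = x(1-x)^d = \lambda$ at its maximum $x_c = 1/(d+1)$, and Taylor-expand $g$ there to get $x_c - x^\star = C(d)\sqrt{\alpha} + O(\alpha)$, then substitute into the identity $f'(x^\star) = dx^\star/(1-x^\star)$. The paper instead \emph{parametrizes} by $\delta := 1 - f'(x^\star)$, solves exactly for $x^\star = (1-\delta)/(d+1-\delta)$ and hence for $\lambda$ in terms of $\delta$, obtains the closed-form relation $1 - \alpha = (1-\delta)(1 - \delta/(d+1))^{-(d+1)}$, and then bounds $\delta \leq \mu\sqrt{\alpha}$ via two elementary inequalities for the function $(1-x)(1-x/(d+1))^{-(d+1)}$. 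Your bifurcation-analysis route is the standard dynamical-systems picture and is arguably more transparent about \emph{why} $\sqrt{\alpha}$ appears (a parabolic tangency produces square-root splitting); the paper's parametrization-by-$\delta$ route avoids the asymptotic $O(\alpha)$ remainder terms altogether and gives an explicit, non-asymptotic bound $\delta \leq \mu\sqrt{\alpha}$ valid on the whole range $\alpha \in (0,1/8)$, which is why the paper does not need your caveat about $O(\alpha)$ terms being subleading. Both arrive at the same place; the remainder of the argument (the two-sided $E_l$ bounds and the definition of $l_0$) is essentially identical.
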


Before proving the theorem, we record some useful properties of the
recurrence $f$.  

\newcommand{\xst}[2][d]{\ensuremath{x^{\star}(#1, #2)}}
\begin{observation}
  \label{obv:f-incr}
  $f$ and all its derivatives are strictly increasing in $\lambda$ and
  $x$ when $x \in [0, 1)$.
\end{observation}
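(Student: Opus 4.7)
The plan is to verify the observation by a direct computation of the $k$-th derivative of $f$ with respect to $x$. Since $f(x) = \lambda (1-x)^{-d}$, a simple induction on $k$ (or recognition of the generalized binomial series) gives
\[
f^{(k)}(x) \;=\; \lambda \cdot \frac{(d+k-1)!}{(d-1)!} \cdot (1-x)^{-(d+k)} \qquad (k \ge 0).
\]
Thus every derivative factors as a positive constant (depending only on $d$ and $k$) times $\lambda$ times $(1-x)^{-(d+k)}$.

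From here both claims are immediate. For strict monotonicity in $\lambda$: the expression is linear in $\lambda$ with strictly positive coefficient $\frac{(d+k-1)!}{(d-1)!}(1-x)^{-(d+k)} > 0$ for $x \in [0,1)$, so $f^{(k)}(x)$ is strictly increasing in $\lambda$. For strict monotonicity in $x$: assuming $\lambda > 0$ (as is the case throughout this section, where $\lambda = (1-\alpha) \lambda'_c(d+1) > 0$), the function $x \mapsto (1-x)^{-(d+k)}$ has derivative $(d+k)(1-x)^{-(d+k+1)} > 0$ on $[0,1)$, so it is strictly increasing there, and hence so is $f^{(k)}$.

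The only small wrinkle is to state up front that ``derivatives'' here means derivatives in $x$ (since $\lambda$ is treated as a parameter of the recurrence), and that positivity of $\lambda$ is implicit in the setting of Theorem~\ref{thm:optimal-decay}. No step presents a real obstacle; the whole observation is a one-line consequence of the closed form for $f^{(k)}$.
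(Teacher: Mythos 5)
Your proof is correct and is just an explicit version of the paper's argument, which consists of the single sentence ``This follows from the form of $f$'': you compute the closed form $f^{(k)}(x) = \lambda\,\frac{(d+k-1)!}{(d-1)!}(1-x)^{-(d+k)}$ and read off both monotonicity claims, correctly flagging that strict monotonicity in $x$ uses $\lambda>0$, which holds in the setting of Theorem~\ref{thm:optimal-decay}. Nothing further is needed.
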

\begin{proof}
  This follows from the form of $f$.
\end{proof}

\begin{fact}
  \label{fct:prop-f}
  Given $\lambda > 0$ and $d \geq 3$, let $\xst{\lambda}$ be the
  smallest fixed point (if one exists) of $f$ in $(0, 1)$.  Then, for
  all $d \ge 3$, $\xst{\lambda_c'(d+1)} = \frac{1}{d + 1}$ and
  $f'(\xst{\lambda_c'(d+1)}) = 1$.  Further, $\xst{\lambda}$ and
  $f'(\xst{\lambda})$ are strictly increasing functions of $\lambda$
  for $\lambda < \lambda_c'(d+1)$.
\end{fact}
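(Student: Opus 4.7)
The plan is to reduce everything to studying the auxiliary function $g(x) \defeq x(1-x)^d$, since $x \in (0,1)$ is a fixed point of $f$ precisely when $\lambda = g(x)$. A direct computation gives
\[
g'(x) = (1-x)^{d-1}\bigl(1 - (d+1)x\bigr),
\]
so on $(0,1)$ the function $g$ is strictly increasing on $[0, 1/(d+1)]$, strictly decreasing on $[1/(d+1), 1]$, with $g(0)=g(1)=0$ and unique maximum
\[
g\!\left(\tfrac{1}{d+1}\right) = \tfrac{1}{d+1}\cdot\left(\tfrac{d}{d+1}\right)^{\!d} = \tfrac{d^d}{(d+1)^{d+1}} = \lambda'_c(d+1).
\]
Thus for $\lambda < \lambda'_c(d+1)$ the equation $g(x)=\lambda$ has exactly two solutions in $(0,1)$, namely $x^{\star}(\lambda) < 1/(d+1) < x^{\dagger}(\lambda)$, and for $\lambda = \lambda'_c(d+1)$ these collapse to the single point $1/(d+1)$.

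This unimodality immediately yields claim (1): $\xst[d]{\lambda'_c(d+1)} = 1/(d+1)$. For claim (2), I would just substitute into $f'(x) = d\lambda/(1-x)^{d+1}$:
\[
f'\!\left(\tfrac{1}{d+1}\right) = \frac{d\cdot \lambda'_c(d+1)}{\bigl(d/(d+1)\bigr)^{d+1}} = \frac{d \cdot d^d/(d+1)^{d+1}}{d^{d+1}/(d+1)^{d+1}} = 1.
\]

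For claim (3), the key point is that $g$ restricted to $[0, 1/(d+1)]$ is a strictly increasing bijection onto $[0, \lambda'_c(d+1)]$, so its inverse $\lambda \mapsto \xst[d]{\lambda}$ is strictly increasing on this interval. Substituting into $f'(\xst[d]{\lambda}) = d\lambda/(1-\xst[d]{\lambda})^{d+1}$, both $\lambda$ and $\xst[d]{\lambda}$ (hence $1/(1-\xst[d]{\lambda})^{d+1}$) are strictly increasing in $\lambda$, so the product is as well. There is no real obstacle in this argument; the whole fact is essentially a calculus exercise once one observes the equivalence between fixed points of $f$ and level sets of $g$, and once the correct critical point $1/(d+1)$ of $g$ has been identified.
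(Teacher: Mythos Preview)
Your proof is correct. It differs from the paper's in the organizing device: the paper works directly with $f$, using convexity of $f$ on $[0,1)$ together with $f(0)>0$ and $f(x)-x\to\infty$ as $x\uparrow 1$ to deduce that there are at most two fixed points, that $f'\leq 1$ at the smaller one, and that $f'=1$ forces uniqueness; it then verifies by direct substitution that $1/(d+1)$ is a fixed point with $f'=1$ at $\lambda=\lambda_c'(d+1)$, and derives monotonicity of $x^{\star}$ from the fact that $f$ is strictly increasing in $\lambda$. You instead rewrite the fixed-point condition as $g(x)=\lambda$ with $g(x)=x(1-x)^d$, compute $g'$ to get unimodality with maximum $\lambda_c'(d+1)$ at $1/(d+1)$, and then read off both the root structure and the monotonicity of $x^{\star}$ as the inverse of a strictly increasing function. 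Your route makes the monotonicity in $\lambda$ a one-line consequence of inverting $g$ on $[0,1/(d+1)]$, whereas the paper's convexity picture makes the tangency at the critical $\lambda$ more geometrically transparent; both are elementary and of comparable length.
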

\begin{proof}
  Since $f$ is convex in $[0, 1)$ and satisfies both $f(0) > 0$ and
  $\lim_{t \uparrow 1} f(t) - t = \infty$, it follows that the
  equation $x = f(x)$ has at most two roots in $[0, 1)$.  Further,
  since $\lambda < \lambda_c'(d+1) \leq 1$, we also have that
  $f'(x) \leq 1$ at the smaller such root.  It then follows that if
  $f'(x) = 1$ at such a root $x$, then $x$ is the unique root of $f$
  in $[0, 1)$.  We then verify by direct calculation that
  $f_{d, \lambda_c'(d+1)}(1/(d+1)) = 1/(d+1)$ and
  $f'_{d, \lambda_c'(d+1)}(1/(d+1)) = 1$.

  Since $f$ is strictly increasing in $\lambda$, it then follows that
  $\xst{\lambda}$ is also a strictly increasing function of $\lambda$.
  Since $f'$ is strictly increasing in both $\lambda$ and $x$ the last
  claim also follows.
\end{proof}

We are now ready to prove \Cref{thm:optimal-decay}.
\begin{proof}[Proof of \Cref{thm:optimal-decay}]
  It follows from \Cref{fct:prop-f} that $f'(\xst{\lambda}) < 1$.  So
  let $\delta \in (0, 1)$ be such that
  $1 - \delta = f'(\xst{\lambda})$.  Since we also have
  $f(\xst{\lambda}) = \xst{\lambda}$, we can solve to get
  $\xst{\lambda} = \frac{1 - \delta}{d + 1 - \delta}$, and hence
  $\lambda = (1 - \delta)\cdot\frac{d^d}{(d + 1 - \delta)^{d + 1}}$.
  Since $\lambda = (1 - \alpha)\lambda_c'(d+1)$, we have
  \begin{equation}
    1 - \alpha = (1 - \delta) \inp{1 - \frac{\delta}{d + 1}}^{- (d +
      1)}.\label{eq:104}
  \end{equation}
  We now use the following inequalities which are valid for all large
  enough positive $d$:
  \ifbool{twocol}{
  \begin{gather}
    \forall x \in [0, 1], (1 - x)\inp{1 - \frac{x}{d + 1}}^{-(d+1)}
    \leq 1 - x^3\label{eq:105}\\
    \label{eq:103}\exists \mu > 0 \text{ such that } \forall x \in [0, 1/2],\\ (1 -
    x)\inp{1 - \frac{x}{d + 1}}^{-(d+1)} \leq 1 - (x/\mu)^2.\nonumber
  \end{gather}
  }{
  \begin{gather}
    \forall x \in [0, 1], (1 - x)\inp{1 - \frac{x}{d + 1}}^{-(d+1)}
    \leq 1 - x^3\label{eq:105}\\
    \exists \mu > 0 \text{ such that } \forall x \in [0, 1/2], (1 -
    x)\inp{1 - \frac{x}{d + 1}}^{-(d+1)} \leq 1 - (x/\mu)^2.\label{eq:103}
  \end{gather}
  }
  Inequality \eqref{eq:105} applied to \cref{eq:104} implies that
  $\delta \leq \alpha^{1/3}$ for all $\alpha \in (0, 1)$.  Thus, for
  $\alpha \in (0, 1/8)$, $\delta \in (0, 1/2)$.  We can thus apply
  inequality \eqref{eq:103} to \cref{eq:104} for such $\alpha$ to get
  $\delta \leq \mu \sqrt{\alpha}$ when $\alpha \in (0, 1/8)$, where
  $\mu = \mu(d)$ is the constant appearing in \cref{eq:103}.  Thus,
  $1 - \mu \sqrt{\alpha} \leq f'(\xst{\lambda}) \leq 1$. Further, by
  the continuity and monotonicity of $f'$, there exists an
  $\epsilon > 0$ such that
  \begin{equation}
    x \in [\xst{\lambda} - \epsilon, \xst{\lambda}]
    \quad\implies\quad
    f'(x) \geq 1 -  2\mu\sqrt{\alpha}.\label{eq:14}
  \end{equation}
  Now, $\inp{f^l(0)}_{l = 0}^\infty$ is a strictly increasing sequence
  lying strictly below the fixed point $\xst{\lambda}$, and further satisfies
  \ifbool{twocol}{
  \begin{align*}
    \xst{\lambda} - f^{l+1}(0)
    &= f(\xst{\lambda}) - f(f^l(0))\\
    &\leq (1-\delta)(\xst{\lambda} - f^{l}(0)).
  \end{align*}
  }{
  \begin{displaymath}
    \xst{\lambda} - f^{l+1}(0) = f(\xst{\lambda}) - f(f^l(0)) \leq
    (1-\delta)(\xst{\lambda} - f^{l}(0)).
  \end{displaymath}
  }
  It therefore follows that there is an $l_0$ such that for $l \geq
  l_0$,  $f^{l}(0) \in [\xst{\lambda} - \epsilon, \xst{\lambda}]$.
  For $l \leq l_0$ we therefore have
  \ifbool{twocol}{
  \begin{align*}
    \xst{\lambda} - f^{l+1}(0)
    &= f(\xst{\lambda}) - f(f^l(0))\\
    &\geq
    (1-2\mu\sqrt{\alpha})(\xst{\lambda} - f^{l}(0)),
  \end{align*}
  }{
  \begin{displaymath}
    \xst{\lambda} - f^{l+1}(0) = f(\xst{\lambda}) - f(f^l(0)) \geq
    (1-2\mu\sqrt{\alpha})(\xst{\lambda} - f^{l}(0)),
  \end{displaymath}
  }
  where the last inequality uses \cref{eq:14}.  The result now follows
  by an induction on $l$.
\end{proof}


%
\bibliographystyle{acm}
\end{document}